\newcommand{\sosruleNormal}[3]{\frac{\normalsize{\parbox{#1}{\setstretch{1}\center$#2$\vspace{0.5mm}}}} {\parbox{#1}{\vspace{-0.8mm}\center$#3$}}}
\newcommand{\R}{\mathbb{R}_{\geq 0}}
\newcommand{\N}{\mathcal{N}}
\newcommand{\Z}{\mathbb{Z}}
\newcommand{\AID}{\mathit{AId}}
\newcommand{\MSG}{\mathit{Msg}}
\newcounter{example}
\newenvironment{example}[1][]{\refstepcounter{example}\par\medskip
	\noindent \textbf{Example~\theexample. #1} \rmfamily}{\medskip}
\newtheorem{definition}{Definition}[section]
\newtheorem{theorem}{Theorem}[section]
\newtheorem{lemma}{Lemma}[section]
\def\BibTeX{{\rm B\kern-.05em{\sc i\kern-.025em b}\kern-.08em
		T\kern-.1667em\lower.7ex\hbox{E}\kern-.125emX}}
\newcommand{\COM}[1]{\textcolor{red}{#1}}
\newcommand{\did}[1]{{#1}}
\definecolor{codegreen}{rgb}{0,0.6,0}
\definecolor{codegray}{rgb}{0.5,0.5,0.5}
\definecolor{codepurple}{rgb}{0.58,0,0.82}
\definecolor{backcolour}{rgb}{0.95,0.95,0.92}
\tiny\color{codegray},
\begin{document}
	\sloppy

\title{Magnifier: A Compositional Analysis Approach for Autonomous Traffic Control}

\author{Maryam~Bagheri,~
		Marjan~Sirjani,~
		Ehsan~Khamespanah,~
		Christel~Baier,~
		and~Ali~Movaghar
\thanks{ M. Bagheri and A. Movaghar are with the Department
of Computer Engineering, Sharif University of Technology, Iran.
 M. Sirjani is with the School of IDT,  M\"{a}lardalen University, Sweden, and the School of Computer Science, Reykjavik University, Iceland. E. Khamespanah is with the School of Electrical and Computer Engineering, University of Tehran, Iran, and the School of Computer Science, Reykjavik University. C. Baier is with the department of Computer Science, Technical University of Dresden, Germany.}
\thanks{Manuscript received XX, 2020; revised XX, XXXX.}}

\markboth{
}%
{Shell \MakeLowercase{\textit{et al.}}: Bare Demo of IEEEtran.cls for Computer Society Journals}

\IEEEtitleabstractindextext{%
\begin{abstract} 
Autonomous traffic control systems are large-scale systems with critical goals. To satisfy expected properties, these systems adapt themselves to  changes in their environment 
while the adaptation may result in further changes propagated through the systems. For each change and its consequent adaptation,  
assuring the satisfaction of the properties of a system at runtime 
is important. A prominent approach to assure the correct behavior of these systems is  verification at runtime, which  has strict time and memory limitations. To tackle these limitations,  we propose Magnifier, an iterative, incremental,  and compositional verification approach that operates on a component-based model. The Magnifier idea is zooming on the component affected by a change and verifying the correctness of properties of interest of the system after adapting the component to the change. The satisfaction of the properties indicates that the change does not propagate further. Magnifier zooms out and traces the change if it propagates. In this case, all components affected by the change are adapted and are composed to form a new component.  Magnifier repeats the same process for the new component. This iterative process terminates whenever the propagation of the change stops. 
In Magnifier, we use the Coordinated Adaptive Actor  model (CoodAA) of traffic control systems. We present a formal semantics for CoodAA as  a network of Timed Input-Output Automata (TIOAs). The change does not propagate if TIOAs of the adapted component and its environment do not reach a deadlock state in their parallel product.  We implement our approach in Ptolemy II. The results of our experiments indicate that the proposed approach improves the verification time and the memory consumption compared to a non-compositional approach.

\end{abstract}

\begin{IEEEkeywords}
Self-adaptive Systems, Model@Runtime, Compositional Verification, Track-based Traffic Control Systems, Ptolemy II
\end{IEEEkeywords}}

\maketitle

\IEEEdisplaynontitleabstractindextext
\IEEEpeerreviewmaketitle

\IEEEraisesectionheading{\section{Introduction}\label{sec:introduction}}


\IEEEPARstart{M}{any} activities of the modern society are entirely managed by traffic control systems. These systems are large-scale, time and safety-critical systems that consist of numerous moving objects whose movements on a traveling space are adjusted and coordinated by controllers.  The application domain of traffic control systems is not only limited to air traffic control systems  or rail traffic control systems, but also includes more applications such as robotic systems, maritime transportation, smart hubs,  intelligent factory lines,  etc.  The traffic in such systems can pass through pre-specified tracks, that based on the minimum safe distance between the moving objects, are partitioned into a set of sub-tracks. A  system with this structural design is called a Track-based Traffic Control System (TTCS) \cite{cooda}.

Due to the dynamic nature of a TTCS and its  surrounding world, a TTCS is vulnerable to failures, threatening  human lives or causing intolerable costs. Autonomous response to context changes is a mechanism to prevent a failure in self-adaptive systems that 
are able to adjust their structures and behaviors in response to changes. The controller in an autonomous TTCS uses the track-based design to safely and efficiently manage the traffic whenever an unpredicted change happens. For each change and its consequent adaptation, verifying the safety and quality of the system is necessary, which should be performed during the execution of the system. For performing the analysis and verification at runtime, an abstract model of the system and its environment, the so-called \textit{model@runtime} \cite{Cheng2014}, is generated, updated, and  verified during the system execution. 

In  \cite{16_CoordinatedActorsForReliableSelfAdaptiveSystems}, we introduced the Coordinated Adaptive Actor model (CoodAA)    for constructing and analyzing self-adaptive track-based traffic
control systems.
CoodAA is an actor-based \cite{hew72,Agha:1986:AMC:7929} approach augmented with coordination policies.
In CoodAA each sub-track is modeled as an actor, the moving objects are considered as messages passed by the actors, and the controller is modeled as a coordinator.
 A TTCS is a large-scale system partitioned into a set of control areas where each area has its own controller, so, a  model  of  a  TTCS  can be  intrinsically  built  as  a  set  of components and is matched to CoodAA. The moving objects are sent and received at specified times through specified routes.

The coordinated adaptive actor model is designed based on the MAPE-K feedback loop \cite{Kephart:2003:VAC:642194.642200} for self-adaptive systems.  This control loop consists of the Monitor, Analyze, Plan, and Execute components. There is also a \textit{Knowledge base} where the model@runtime is kept.
The \textit{Knowledge base} is updated by the \textit{Monitor} component. The \textit{Analyze} and \textit{Plan} components are responsible for doing the analysis and providing  adaptation plans when a change happens. The new plan  is sent to the system through the \textit{Execute} component.

In this paper, our focus is on the analysis that is performed for adaptation at runtime, and we propose the \textit{Magnifier} idea.  
%
%
%
Magnifier uses an iterative and incremental process on a component-based model. When a change occurs Magnifier zooms-in on the affected component and checks if  properties of interest still hold. If not, it 
 adapts the component affected by the change by finding a new plan.
 Then, Magnifier 
checks if, because of the new plan, the change is \textit{propagated} through the model to other components, and it will continue the same process iteratively and incrementally.
The idea is checking the effects of the change on the affected area and then in the least number of neighborhood components (and trying to contain it) instead of  analysing the whole system for each change.
The general idea of Magnifier  is not specific for track-based traffic control systems  and can be applied for any autonomous control system.  But in our work, we focus on CoodAA and TTCSs,  provide formal semantics and necessary theorems for compositional verification of TTCSs, and illustrate the results by implementing the approach. 

In Magnifier, we use a  compositional approach, we focus on the interface of each component, which in CoodAA means the inputs and outputs of each component at a specified time.
If the adapted component, according to the new plan, generates new outputs, or generates outputs by  making  new  assumptions  on  its  inputs it means  that  the  effects  of  the  change  may propagate to the connected components. So, the connected components (or  the so-called environment components) are adapted  considering the new interface of the component. Then, Magnifier zooms-out and creates a new component by composing all components adapted to the change. The propagation of the change stops if the interface of the new (composite) component remains unchanged. 


In this paper, we first present a compositional formal semantics for CoodAA as a network of Timed Input-Output Automata (TIOAs) \cite{kaynar2003timed}.
Each component is represented by TIOAs of its constituent actors and its coordinator.
We check the propagation of a change by checking the \textit{compatibility} of  TIOAs of the adapted component and  TIOAs of its environment components.
We call two (or more) TIOAs compatible if they do not reach a deadlock state in their parallel product.

To prove our incremental compositional approach, we adopt the compositional verification theorem of Clark et al. \cite{clarke1989compositional}.
In \cite{clarke1989compositional}, each component of the model is supplied with a correctness property. By composing a component with an abstraction of its environment components and verifying a property over the composition, the satisfaction of the property over the whole system is proved. 
Similar to \cite{clarke1989compositional}, we use abstractions of the environment components. To reduce the state space, instead of TIOAs of the environment components, we only consider  TIOAs of border actors that directly communicate with the adapted component. 
In contrast to \cite{clarke1989compositional}, 
we do not use any logical formula to express the properties, since it is enough to check whether the adapted component  interacts with its environment as expected (i.e. their compatibility).

Note that the verification of the \textit{propagation} of a change is checking whether the interface of a component remains unchanged after adapting to a new plan. 
The verification is performed on the model@runtime that is a static snapshot of the system at the moment of the change.

To illustrate the applicability of our approach, we implement it in Ptolemy II \cite{ptolemaeus2014system}. Ptolemy II is an actor-oriented open-source modeling and simulation framework. A Ptolemy model consists of actors that communicate via message passing. The semantics of communications of the actors in Ptolemy is defined by models of computation, implemented in a set of predefined director components. 
Here, to provide assertion-based verification  in Ptolemy II, we develop a Magnifier director. Our director generates the state space of the affected component, automatically extends its domain to include other components, and performs the reachability analysis over this extended domain. The results of our experiments for an example in the domain of air traffic control systems indicate a significant improvement in the verification time and  the memory consumption. 

\noindent\textbf{Novelty and importance.}
Magnifier can be seen as a decentralized adaptation mechanism.
Adaptation in a decentralized setting is a well-known challenge \cite{KRUPITZER2015184,deLemos2013}.
It significantly improves  scalability and is a suitable option in hard real-time settings, when the reaction to a change should be performed in a negligible amount of time \cite{deLemos2013}. On the other hand, preserving  global goals in a decentralized setting is  difficult  \cite{deLemos2013}, as several components may need to reach a consensus about an adaptation policy to satisfy a global goal. Magnifier meets the global goals by first applying local adaptation to the component affected by a change. If it is not successful, it dynamically extends its adaptation (verification) domain to consider more components. The Magnifier approach relies upon the assumption that the environment components of a component are recognisable at the analysis time.

\noindent\textbf{Contribution.} 
CoodAA is introduced in \cite{16_CoordinatedActorsForReliableSelfAdaptiveSystems} and 
	its applicability in modeling TTCSs is shown by implementing a case study. In \cite{cooda},  CoodAA is explained using activity and sequence diagrams and the mapping between different applications of TTCSs and CoodAA is illustrated.
We briefly presented the Magnifier idea as a future work in  a short work-in-progress paper \cite{bagheri2016runtime}.
In the current paper, we present the formal foundation of CoodAA and Magnifier, and support the idea of effectiveness of Magnifier by an implementation of Magnifier in Ptolemy II and experimental results. The Summary of contributions are as follows:
\begin{itemize}
\item 
Formal compositional semantics of coordinated adaptive actor model as Timed Input-Output Automata (TIOAs)
\item
Compositional and incremental verification of  model@runtime in CoodAA using Magnifier, and proof of correctness of the compositional approach
\item
Abstraction technique for  environment components in Magnifier for reducing the state space 
\item
Implementation of Magnifier as a director in Ptolemy II and supporting experimental results on an air traffic control system as an example to show the efficiency of Magnifier compared to a non-compositional approach
\end{itemize}

The rest of the paper is organized as follows.  We provide a general overview of TTCSs in Section~\ref{section::problemDef}.  We recall the definitions of a TIOA and the parallel composition of several TIOAs  in Section~\ref{sec::background}. 
In Section~\ref{section::sem},  the formal compositional semantics of CoodAA is described in terms of TIOAs. Section~\ref{section::magnifier} describes the details of the Magnifier approach. The implementation of Magnifier in Ptolemy II and the results of our experiments are shown in Section~\ref{sec::evaluation}. We describe the related work in Section~\ref{sec::relatedwork}, and conclude the paper in Section~\ref{sec::conclude}.

\section{Problem Definition and an  Example } \label{section::problemDef}

Track-based Traffic Control Systems (TTCSs), introduced in \cite{cooda}, are safety-critical systems. A TTCS works
based on the track-based design of the traveling space; to reduce the risk of collision between moving objects, they move on certain tracks in the traveling space instead of moving around freely. Based on the safe distance
between two moving objects, each track is divided into a set of
sub-tracks.  
Each sub-track is a critical section that accommodates only one moving object in-transit. A large-scale TTCS is divided into a set of areas, while the traffic of each area is controlled by a centralized controller. The  controller uses the  track-based infrastructure of the area to safely navigate the moving objects considering congestion and environmental changes.  As explained in \cite{cooda}, the application domain of TTCSs ranges from Air Traffic Control Systems (ATCs), rail traffic control systems, maritime transportation, to centralized robotic systems and intelligent factory lines. For instance, ATC in the North Atlantic follows a track-based structure that is called an organized track system \cite{TOS}. The North Atlantic organized track system consists of a set of nearly parallel tracks positioned in light of the prevailing winds to suit the traffic between Europe and North America.


In the real-world applications of TTCSs, each moving object has an initial traveling plan that is generated prior to the departure of the moving object from its source. A traveling plan consists of a route, time schedule decisions, and depending on the application, fuel, etc. The  route of a moving object is a sequence of  sub-tracks traveled by the moving object from its source  to its destination. The time schedule decisions of a moving object consist of its departure time from its source, assumed arrival time at each sub-track in its route, and assumed arrival time at its destination. 
TTCSs are sensitive to  unforeseen changes in their context. As a consequence of a dynamic environmental change,  the traveling plans of  moving objects may require to be modified. Therefore, following a change in the context, a sequence of changes might happen. For instance in an ATC, the aircraft flight plans are changed if a storm happens in a part of their flight routes. While changing traveling plans, several safety issues should be considered; i.e. loss of the separation between two moving objects should be avoided, and the remaining fuel should be checked. To avoid conflicts, changing the traveling plan of a moving object may result in changing the traveling plans of  other moving objects. These changes can be propagated to the whole system.  Besides the safety concerns, performance metrics such as arrival times of the moving objects at their destinations or sub-tracks in their routes are important. In a TTCS, the controller is in charge of coordinating the moving objects by  rerouting/rescheduling them. 

\begin{figure}
	\centering
	\includegraphics[width=8cm,height=9cm,keepaspectratio]{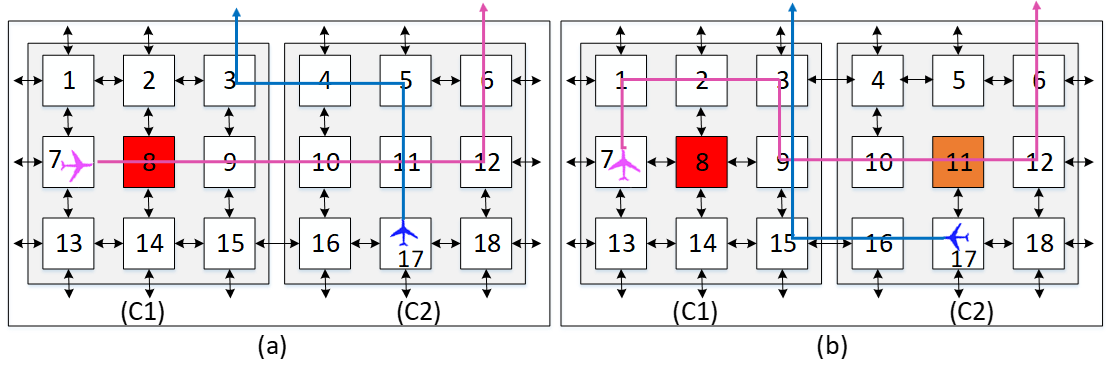}
	\caption{A TTCS with 18 sub-tracks. The effect of the change in sub-track 8 is propagated to the component $c_2$. To avoid the collision in sub-track 11, the blue moving object is rerouted.
	}
	\label{ttcs}
\end{figure}

\begin{example}
An example of the change propagation is described for a TTCS as follows. Assume Fig.~\ref{ttcs}(a) and Fig.~\ref{ttcs}(b) show a TTCS with two control areas ($C1,C2$), where each area has  nine sub-tracks. The traffic flows from the west to the east and vice versa. 
Each moving object of the eastbound traffic is able to travel towards a sub-track in the north, south, and east. The initial routes of the moving objects are shown in Fig.~\ref{ttcs}(a). The moving object with an unavailable sub-track in its route is rerouted and its new route is shown in Fig.~\ref{ttcs}(b). The red sub-track is an unavailable sub-track through which no moving object can travel. For instance, if a storm happens in a part of the airspace in an ATC, the aircraft cannot cross over the sub-tracks affected by the storm and are rerouted. Suppose that the traveling times of the moving objects through each sub-track are the same and are equal to one. 
The initial traveling planes of the purple and blue moving objects in Fig.~\ref{ttcs}(a) are $\{(0,7),(1,8),(2,9),(3,10),(4,11),(5,12),(6,6)\}$ and $\{(5,17),(6,11),(7,5),(8,4),(9,3)\}$, respectively. The first entry of each tuple shows the arrival time of the moving object at the sub-track mentioned in the second entry. For instance,  two subsequent tuples $(0,7),(1,8)$ mean that the purple moving object arrives at sub-track 7 at time zero and arrives at sub-track 8 at time 1 (which is the same time that it exits sub-track 7).

Suppose that a change happens to sub-track 8 and it becomes unavailable. 
As a consequence, the traveling plan of the purple moving object is changed to $\{(0,7),(1,1),(2,2),(3,3),(4,9),(5,10),(6,11),(7,12),(8,6)\\\}$, shown in Fig.~\ref{ttcs}(b). With the new plan,  the purple moving object enters into sub-track 10 (next area) at time 5 instead of 3, and this way the change propagates from $C1$ to $C2$. Now, the purple moving object arrives at sub-track 11 at time 6. At this time, the blue moving object has to enter into sub-track 11 based on its initial traveling plan. To prevent the collision between two moving objects, the controller employs a rerouting algorithm (adaptation policy) and changes the plan of the blue moving object to $\{(5,17),(6,16),(7,15),(8,9),(9,3)\}$.  As can be seen, by the occurrence of  a change, e.g. a storm, a sequence of changes happens, e.g. rerouting a set of moving objects. This example also shows a situation in which the change circulates between two areas. Based on the new traveling plan obtained for the blue moving object, it enters into $C1$ at time 7 instead of 9, and this way the change propagates back to $C1$.
\end{example}

As a change in the  context of a TTCS and its consequent adaptations in the system happen at runtime, the satisfaction of properties of interest should be checked at runtime. The properties include: the moving objects have to arrive at their destinations at the pre-specified times, the collision of the moving objects should be avoided, the fuel of  the moving objects should not be less than a threshold, and the system should be deadlock-free.
%
These properties are checked by verification. 


\section{Background: Timed Input-Output Automata} \label{sec::background}
In this section, we briefly recall the definitions of a TIOA,  and the parallel product of several TIOAs. We also recall the definition of a deadlock state in a TIOA that is used to define the compatibility of two TIOAs in Section~\ref{section::magnifier}.  

A timed automaton with a set of input actions and a set of  output actions is called a TIOA. A TIOA  with integer variables \cite{zbrzezny2007sat} is defined as follows. 

\begin{definition} (TIOA) A Timed Input-Output Automaton is a tuple $\mathit{TA}=(Q, q_0,\mathit{Var},\mathit{Clk},\mathit{Act_{\mathit{in}}},\mathit{Act_{out}},T,I)$ where $Q$ is a finite set of locations, $q_0 \in Q$ is the initial location, $\mathit{Var}$ is the set of integer variables, $\mathit{Clk}$ is a finite set of clocks, $\mathit{Act_{\mathit{in}}}$ is a set of input actions, $\mathit{Act_{out}}$ is a set of output actions, $T \in Q \times (\mathcal{B}(\mathit{Clk}) \cup \mathcal{B}(\mathit{Var})) \times (\mathit{Act_{\mathit{in}}} \cup \mathit{Act_{out}} \cup \{\tau\})  \times 2 ^ \mathit{Clk} \times 2 ^\mathit{Ass} \times Q $ is a set of edges, and $I$ is an invariant-assignment function. Let $\# \in \{ \leq, <, =, \geq, >\}$ and $c \in \mathbb{N}$. The sets of conjunctions of constraints of the form $x \# c$ or $x-y \# c$ for $x ,y \in \mathit{Clk}$, and $v \# c$ or $v-w \# c$ for $v ,w \in \mathit{Var}$ are respectively denoted by $\mathcal{B}(\mathit{Clk} )$ and $\mathcal{B}(\mathit{Var} )$. The set of all variable assignments is denoted by $\mathit{Ass}$. The function $I: Q \rightarrow \mathcal{B}(\mathit{Clk})$ assigns invariants to locations. \qed
\end{definition}
Based on the above definition, the edge $e=(q,\psi,a, r,u,q') \in T$, besides action $a$, is labeled with a guard $\psi$, a sequence $u$ of assignments, and a set $r$ of clocks. Let $v_{C},v'_{C}:\mathit{Clk} \rightarrow \R$ and $v_{V},v'_{V}: \mathit{Var} \rightarrow \Z$ be clock and variable valuations, respectively. A state of the system modeled by a TIOA is in the form of $(q,v_{C},v_{V})$.  There is a discrete transition $(q,v_{C},v_{V}) \xrightarrow{a} (q',v'_{C},v'_{V})$ for an edge $e=(q,\psi,a, r,u,q')$ such that $v_{C}$ and $v_{V}$ satisfy $\psi$, $v'_{C}$ is reached by resetting the clocks in  the set $r$ to zero, and $v'_{V}$ is obtained as a subset of variables are set to their new values in the assignment set $u$. The clocks and variables not mentioned in $r$ and $u$ remain unchanged. Furthermore, $v'_{C}$ satisfies $I(q')$. The TIOA can stay in the location $q$ as long as the invariant $I(q)$ is valid. Let for $x \in \mathit{Clk}$ and $d \in \R$, $(v_{C}+d)(x)=y+d$ iff $v_{C}(x)=y$. For each delay $d \in \R$ there is a timed transition $(q,v_{C},v_{V}) \xrightarrow{d} (q,v_{C}+d,v_{V})$ such that $v_{C}+d$ satisfies $I(q)$.   
A state of the system can be a deadlock state that, based on \cite{10.1007/3-540-48778-6_18}, is a state from which no outgoing discrete transition is enabled, even after letting time progress.
\begin{definition}
	(Deadlock State) A state $s$ is a deadlock state if there is no delay $d \in \R$ and action $a \in (\mathit{Act_{\mathit{in}}} \cup \mathit{Act_{out}} \cup \{\tau\})$ such that $s \xrightarrow{d} s' \xrightarrow{a} s''$. \qed
\end{definition}

 
Consider the network $\mathcal{N}=\{\mathit{TA_i} | i=1,\cdots,n\}$ of TIOAs, where TIOAs run in parallel and communicate through shared variables. Furthermore, TIOAs synchronize over time and shared common actions. 
We assume that when two edges (transitions) of two TIOAs synchronize over an action, their variables are updated by first executing the variable assignments of the output transition, and then by executing the variable assignments of the input transition. We also assume that the input transitions do not update the shared variables. Let $\mathit{shared(\mathcal{N})}={\bigcup\limits_{i,j=1}^{n}}_{j\neq i}(\mathit{Act_{in_i}} \cap \mathit{Act_{out_j}})$ be the set of actions shared between two or more TIOAs in the network $\mathcal{N}$. In some cases, we use $a!$ and $a?$ instead of $a$ to label an output and an input transition, respectively. For a set $L=\{l_1,\cdots,l_m\}$, $l_i < l_{i+1}$, we define $\bigsqcup\limits_{l \in L}u_l$ as a sequence of variable assignments $u_{l_1} \cdots u_{l_m}$. Based on \cite{zbrzezny2007sat}, the parallel product of TIOAs in the network $\mathcal{N}$ is defined as follows.

\begin{definition}
	(Parallel Product) Let TIOAs $\mathit{TA_i}, i=1,\cdots,n$, do not have shared output actions or shared clocks. 
	The parallel product of $\mathit{TA_1},\cdots,\mathit{TA_n}$, denoted by $\mathit{TA_1}\otimes\cdots\otimes\mathit{TA_n}$, is $\mathit{TA}=(Q, q_{0},\mathit{Var},\mathit{Clk},\mathit{Act_{\mathit{in}}},\mathit{Act_{out}},T,I)$, where
\begin{gather*}
	Q=Q_1 \times \cdots \times Q_n, q_0=(q_{0_1}, \cdots, q_{0_n}),\\ \mathit{Var}= \bigcup\limits_{i=1}^{n} \mathit{Var_i},
	\mathit{Clk}= \bigcup\limits_{i=1}^{n} \mathit{Clk_i},\\ \mathit{Act_{\mathit{in}}}=\bigcup\limits_{i=1}^{n} \mathit{Act_{in_i}}\setminus \mathit{shared(\mathcal{N})},\\
	\mathit{Act_{\mathit{out}}}=\bigcup\limits_{i=1}^{n} \mathit{Act_{out_i}}\setminus \mathit{shared(\mathcal{N})},\\I((q_1,\cdots,q_n))=\bigwedge\limits_{i=1}^{n} I_i(q_i),
\end{gather*}
	 and $T$ is defined in the following way:

	 \begin{itemize}
	 	\item  $((q_1,\cdots,q_n),\psi,a, r,u,(q'_1,\cdots,q'_n)) \in T$ iff
	 	
	 	there exists $i \in \{1,\cdots,n\}$ such that $a \in \mathit{Act_{in_i}} \cup \mathit{Act_{out_i}} \setminus \mathit{shared(\mathcal{N})}$, $(q_i,\psi,a, r,u,q'_i) \in T_i$, and for all $j \in \{1,\cdots,n\} \setminus \{i\}$,  $q'_j=q_j$ holds;
	 \end{itemize}
 	 
 \begin{itemize}
 		\item $((q_1,\cdots,q_n),\psi,\tau, r,u,(q'_1,\cdots,q'_n)) \in T$ iff
 		
 		there exists $a \in \mathit{shared(\mathcal{N})}$ and $i \in L$ for $L=\{k|(q_k,\psi_k,a, r_k,u_k,q'_k) \in T_k\}$ such that $(q_i,\psi_i,a!, r_i,u_i,q'_i) \in T_i$ and for all $j \in L \setminus \{i\}$, $(q_j,\psi_j,a?, r_j,u_j,q'_j) \in T_j$, and $\psi=\bigwedge\limits_{k \in L}\psi_k$, $r=\bigcup\limits_{k \in L} r_i$, $u=u_i\bigsqcup\limits_{k \in L\setminus \{i\}}u_k$, and for all $k \in \{1, \cdots, n\} \setminus L,  q'_j=q_j$ holds. \qed 
 \end{itemize}

\end{definition} 

Note that the state of the system modeled by a network of TIOAs is obtained by clock values, values of all variables, and the locations of all TIOAs in the network. In the rest of the paper, we benefit from the syntax of the \textsc{Uppaal} modeling language \cite{behrmann2006tutorial} to use functions as macros for  expressions in  guards and updates in TIOAs. 
Furthermore, we use different colors such as purple, green, light blue, and dark blue to respectively distinguish invariants, guards,  synchronization actions, and clock reset and variable assignments in the figures related to TIOAs. 
\section{Formal Compositional Semantics of CoodAA} \label{section::sem}
In this section, we first review the coordinated adaptive actor model. We then provide a formal  compositional  semantics for  CoodAA 
in terms of TIOAs. 

\subsection{Background: Coordinated Adaptive Actor Model} \label{subsec::cooda}

We introduced the coordinated adaptive actor model in \cite{16_CoordinatedActorsForReliableSelfAdaptiveSystems}. CoodAA consists of  a set of coordination policies, which are described in a coordinator, and a set of actors. Actors, as units of computations, communicate via message passing, and   
can be categorized into a set of components. Each component, shown in Fig.~\ref{fig::architecture}, besides actors has its own coordinator that describes the coordination policies relevant to the actors of the component.   
   A component can be a nested component, meaning that it can consist of several components.  
   The structure of each component (and as a consequence  CoodAA itself) conforms to the MAPE-K feedback loop. 

 The managed system in a self-adaptive system  is  controlled by a feedback control loop. A well-known approach to realize a self-adaptive system is by means of the MAPE-K feedback control loop \cite{Kephart:2003:VAC:642194.642200}. This loop consists of \textit{Monitor}, \textit{Analyze}, \textit{Plan}, and \textit{Execute} components together with the \textit{Knowledge base}. The model@runtime, as an abstraction of the system and its environment, is kept in the \textit{Knowledge base}. The \textit{Monitor} component monitors the system and its environment, and updates the model@runtime. In the case of detecting a change, the \textit{Analyze} component analyzes the model@runtime. The analysis results are given to the \textit{Plan} component, and the \textit{Plan} component makes an adaptation plan.  The adaptation plan is  applied to the model@runtime and  the model@runtime is again analyzed. If the requirements of the system are satisfied, the adaptation plan is sent to the system through the \textit{Execute} component. Otherwise, the \textit{Plan} component makes another adaptation plan.  


As shown in Fig.~\ref{fig::architecture}, 
the actors of the component construct the model@runtime and the coordinator consists of the Analyze and Plan activities  \cite{16_CoordinatedActorsForReliableSelfAdaptiveSystems}. 
The coordinator dispatches messages among actors to execute the actor-based model@runtime and analyze it by investigating the future behavior of the system.

\begin{figure}
	\centering
	\fbox{\includegraphics[width=8cm,keepaspectratio]{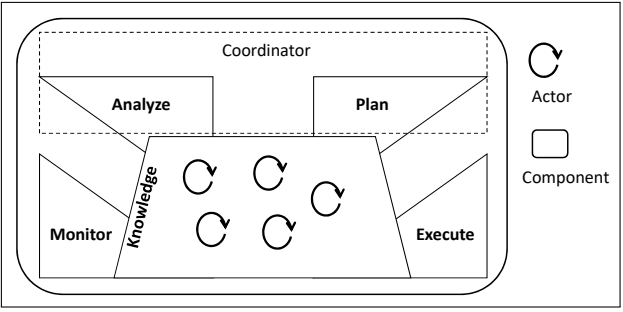}}
	\caption{The mapping between a component and the  MAPE-K feedback loop. An actor-based model@runtime is kept in the \textit{Knowledge base}.
	}
	\label{fig::architecture}%
\end{figure}

CoodAA  is aligned with the structure of track-based systems, each sub-track is modeled by an actor,  the controller is modeled by a coordinator, and the  moving objects are modeled as messages passing among the actors \cite{cooda}. Each message carries information such as the identifier of a moving object, its traveling plan (route and time schedule decisions), speed, fuel, etc. The  coordinator is able to adapt the system by rerouting/rescheduling the moving objects considering the congestion and environmental conditions. Upon occurring a change, the model@runtime is updated based on a snapshot taken from the system.  Then, the coordinator obtains new  routing/scheduling plans. 
The coordinator operates either in the regular phase or in the adaptation phase. 
In the regular phase, the coordinator dispatches the messages, and the routes/schedules are given in the messages passed to the actors. 
 In the adaptation phase, the coordinator makes decisions to adapt the system.
 After the decision making, the new routes/schedules are passed to the actors while the coordinator moves to the regular phase. 
 The messages are passed between the actors based on the 
 plans given to the actors.

In \cite{cooda}, the coordinator is augmented with different rerouting/rescheduling algorithms. The coordinator is able to predict the behavior of the system through executing the model@runtime,  measure several metrics using simulation, and based on the calculated results,  select the best algorithm for rerouting/rescheduling purpose. The analysis in \cite{cooda} is based on simulation not verification.

CoodAA is initially inspired from
an adaptive actor-based framework  proposed by Khakpour et al.~in \cite{KHAKPOUR20123}. The framework is called PobSAM (Policy-based Self-Adaptive Model) and is an integration of algebraic formalisms
and actor-based Rebeca models. A hierarchical extension of PobSAM is proposed by Khakpour et al.~in
\cite{KHAKPOUR20122770}.
The Reactive Object Language, Rebeca \cite{DBLP:journals/fuin/SirjaniMSB04}, is an actor-based \cite{hew72,Agha:1986:AMC:7929} modeling language. 
Rebeca is used for modeling and formal verification of concurrent and distributed systems.
The model of computation in  Rebeca is event-driven and the communication is asynchronous.
Timed Rebeca \cite{DBLP:conf/facs2/KhamespanahSVK15,DBLP:journals/scp/KhamespanahSSKI15} extends  Rebeca
to model the timing features.
In PobSAM there is no explicit notion of a coordinator, no timing constraint, and no focus on verification at runtime.



\subsection{Summary of Definitions}

\begin{table*}
	\begin{center}
		\begin{tabular}{|m{\textwidth}|} 
			\hline
			\begin{itemize}
				
				\item $a_i$ is an actor with the unique identifier $i$, defined as $a_i=(\mathit{status_i}, \mathit{movingPlan_i},\mathit{Mtds},\mathit{Change},\mathit{P_{I_i}},$
				$ \mathit{P_{O_i}})$ 
				\begin{itemize}
					\item $\mathit{status_i}$ with a value of $\{\mathit{Free, Occupied, Error/Adapt}\}$ denotes the state of the actor
					\item $\mathit{movingPlan_i}$ stores the traveling plan  given to the actor by a message
					\item $\mathit{Mtds}= \mathit{interact_{i,j}(transP) \{body\}}$ is the set of message handlers of the actor $a_i$,  $ transP=(objectId,travelPlan)$,  $ body$ is a sequence of statements, and  $j=1,\cdots,\mathit{directions}$ 
					\item $\mathit{Change} = \mathit{change_i(args)}\{body\}$ is a special message handler where the sequence of its input arguments is $args= ( \mathit{errorORadapt},\mathit{adapted},\mathit{error},\mathit{errorOver}, \mathit{newPlan})$
					\item  $\mathit{P_{I_i}}=\{p_{I_{i,j}}|j=1,\cdots,\mathit{directions}\}\cup \{\mathit{change_{i}}\}$ is the set of input ports of the actor $a_i$
					\item $\mathit{P_{O_i}}=\{p_{O_{i,j}}|j=1,\cdots,\mathit{directions}\}$ is the set of output ports of the actor $a_i$
				\end{itemize}
				\item $\mathit{ch_i}$ is a channel with the unique identifier $i$,  defined as $\mathit{ch_i}=(s_i,d_i)$
				\begin{itemize}
					\item $s_i$ is the source end of the channel $\mathit{ch_i}$
					\item $d_i$ is the sink (destination) end of the channel $\mathit{ch_i}$
				\end{itemize}
				\item 	$C_i$ is a component with the unique identifier $i$, defined as $C_i=(A_i,\mathit{CH_{i},\mathit{B_i}, \mathit{F_i} })$
				\begin{itemize}
					\item $A_i=\parallel_{j \in I_{C_i}} a_j$ is a set of actors $\{a_j |j \in I_{C_j}\}$ concurrently executing, where $I_{C_i}$ is the set of identifiers of the actors belonging to $C_i$
					\item $\mathit{CH_{i}}$ is the set of all channels of $C_i$
					\item $\mathit{F_i}:A_i \rightarrow A_i$ is the coordination function of $C_i$
					
					\item $\mathit{B_{i}}:P_i \rightharpoonup \mathit{CH_i}$ is the binding function of $C_i$, where $P_i=\bigcup_{a_j \in A_i}(P_{I_j} \cup P_{O_j})$ is a set of  ports of all actors of $C_i$
				\end{itemize}
				\item 	$C_k=C_i \parallel C_j$ is a composite component with the unique identifier $k$, defined as $C_k=(\mathit{A}_k , \mathit{CH_{k}},\mathit{{B_k}}, \mathit{F_k})$ 
				\begin{itemize}
					\item $A_k=A_i \cup A_j$ is a set of actors concurrently executing
					\item $\mathit{CH_k}=\mathit{CH_{i}} \cup \mathit{CH_{j}} \cup \mathit{NewCH}$ is the set of channels of $C_k$, where $\mathit{NewCH}$ is a set of channels to connect boundary output ports of a component to boundary input ports of the other component. The sets of boundary input and output  ports of the component $C_i$ are respectively defined as $P_{I_{C_i}}=\{p|a_l \in A_i \wedge p \in P_{I_l} \wedge  \nexists \mathit{ch} \in \mathit{CH_i}, (p,\mathit{ch}) \in B_i\}$ and $P_{O_{C_i}}=\{p| a_l \in A_i \wedge p \in P_{O_l} \wedge  \nexists \mathit{ch} \in \mathit{CH_i}, (p,\mathit{ch}) \in B_i\}$
					\item $\mathit{F_k}:A_k \rightarrow  A_k$ is the coordination function of $C_k$
					
					\item $\mathit{B_{k}}=B_i \cup B_j \cup NewB$ is the binding function of $C_k$, where $\mathit{NewB}:P_k \rightharpoonup \mathit{NewCH}$ and $P_k$ is the set of all boundary input and output ports of the components  $C_i$ and $C_j$.
				\end{itemize}
				
				\item A coordinated adaptive actor model is a composite component, which is a component itself.
				
				
			\end{itemize}
			\\
			\hline
		\end{tabular}
	\end{center}
	\caption{Summary of definitions and notations for the coordinated adaptive actor model (CoodAA)
	}
	\label{tab::summery}
\end{table*}

In this section, the basic elements of CoodAA, including actors, components, coordinators, and channels are formally described. The summary of definitions and notations are provided in Table~\ref{tab::summery}. 
An actor has a variable  \textit{status} that shows the status of the actor being free, occupied or in an adaptation phase. It also has a variable to keep the plan (\textit{movingPlan}); the plan specifies the direction and the time to send out the message.
	An actor has several input and output ports (modeling several directions that a moving object can arrive at or depart from a sub-track).
	Input and output ports  are  communication interfaces of the actor with other actors. 
	An actor has a set of message handlers and each message handler corresponds to an input port of the actor.
	The actor can be informed about an adaptation or an environmental change by receiving a message  over a special input port, the \textit{change} port. 


\begin{definition} \label{def::actor}
	(Actor) An actor, $a_i$, with the unique identifier $i$, is defined as $(\mathit{status_i}, \mathit{movingPlan_i},
	\mathit{Mtds}_i,\mathit{change_i(args)}\{body\},
	\mathit{P_{I_i}},$ $\mathit{P_{O_i}})$,
	where $\mathit{status_i}$ having a value of $\{\mathit{Free, Occupied, Error/Adapt}\}$ denotes the state of the actor, $\mathit{movingPlan_i}$ stores the traveling plan of the moving object,  $\mathit{P_{I_i}}=\{p_{I_{i,j}}|j=1,\cdots,directions\}\cup \{\mathit{change_{i}}\}$ and $\mathit{P_{O_i}}=\{p_{O_{i,j}}|j=1,\cdots,directions\}$ are respectively the sets of input and output ports of the actor,  $\mathit{Mtds}_i=\{\mathit{interact_{i,j}(transP)}\{body\}|j=1,\cdots,directions\}$ is the set of message handlers of the actor with the sequence of input arguments $ transP= (objectId,travelPlan)$,  $\mathit{change_i(args)}$ is a special message handler with the sequence of input arguments  $\mathit{args}=( \mathit{errorORadapt},\mathit{adapted},\mathit{error},\mathit{errorOver},\\ \mathit{newPlan})$, and $ body=\mathit{stm}^*$ as the body of a message handler is a sequence of statements. 
	\qed
\end{definition}

The \textit{status} variable of the actor has the initial value of \textit{Free}, and the \textit{movingPlan} variable is null.  The main computation of the actor is performed in its message handlers. A statement in the body of a message handler can be an assignment statement ($\mathit{assign}$),  a conditional statement ($\mathit{cond}$), a send statement, or a delay statement, i.e. $\mathit{stm}::= \mathit{assign}| \mathit{cond} | \mathit{send(p, msg)} | \mathit{delay(t)}$,  where $\mathit{p}$ is an output port, $\mathit{msg}$ is a sequence of values, and $t$ is a time variable. Definitions of  assignment and conditional statements are like in regular programming languages. The actor can send a sequence of values as a message over an output port using the $\mathit{send}$ statement.  
The actor is also able to introduce a delay during the execution of its message handlers. The $\mathit{delay}(t)$ statement models the passage of $t$ units of time for the actor, where $t$ is derived from  the plan of the actor.


Two actors are connected via a primitive medium that is called a channel. A channel has a source and a sink (destination) end. Each port of an actor can be connected to an end of at most one channel. 

\begin{definition} (Channel) 
	A channel, $\mathit{ch_i}$, with the unique identifier $i$, is defined as  $(s_i,d_i)$, where $s_i$ and $d_i$ are respectively the source and the sink of the channel. \qed
\end{definition}

The source of a channel is connected to at most one output port and the sink of a channel is connected to at most one input port. The bindings between channels and ports of the actors are defined through the binding function of the component. A component is defined as follows.

\begin{definition} (Component) A component, $C_i$, with the unique identifier $i$, is defined  as $C_i=(A_i, \mathit{CH_{i}}, \mathit{{B_i}}, \mathit{F_i},)$, where $A_i$ is the set of internal actors of $C_i$, $\mathit{CH_i}$ is the set of channels belonging to $C_i$, $\mathit{F_i}:A_i \rightarrow  A_i$ is the coordination function of $C_i$, and $\mathit{B_{i}}:P_i \rightharpoonup \mathit{CH_i}$ is the binding function of $C_i$, where $P_i=\bigcup_{a_j \in A_i}(P_{I_j} \cup P_{O_j})$ is a set of  ports of all actors of $C_i$. \qed
	\label{def::component}
\end{definition}

The coordinator of a component is defined in the form of a function. In the rest of the paper, we use the terms coordinator and coordination function interchangeably. The coordination function of $C_i$ is able to adapt the behaviors of  actors by putting  messages on the \textit{change} ports of the actors.  The binding function of $C_i$ defines the topology of the component by connecting ports of actors to ends of channels.

Each component has  sets of boundary input and output  ports through which the component communicates with other components. These sets are defined based on the sets of input and output ports of the constituent actors of the component. The input port  $p \in \bigcup_{a_l \in A_i}P_{I_l}$ is a boundary input port of the component $C_i$ if $p$ is not bound to any channels of the component $C_i$ using the binding function $B_i$. Similarly, the output port $p \in \bigcup_{a_l \in A_i}P_{O_l}$ is a boundary output  port of the component $C_i$ if $p$ is not bound to any channels of the component $C_i$ using the binding function $B_i$. Therefore, the sets of boundary input and output  ports of the component $C_i$ are respectively defined as $P_{I_{C_i}}=\{p| p \in P_{I_l} \wedge a_l \in A_i \wedge \nexists \mathit{ch} \in \mathit{CH_i}, (p,\mathit{ch}) \in B_i\}$ and $P_{O_{C_i}}=\{p| p \in P_{O_l} \wedge a_l \in A_i \wedge \nexists \mathit{ch} \in \mathit{CH_i}, (p,\mathit{ch}) \in B_i\}$.

The composition of two (or more) components forms a composite component that is defined as follows.

\begin{definition} 
	(Composite Component) The composite component $C_k=(\mathit{A}_k , \mathit{CH_{k}},\mathit{{B_k}}, \mathit{F_k})$ 
	is a component such that there exist two components $C_i=(A_i, \mathit{CH_{i}},\mathit{{B_i}},\mathit{F_i} )$ and $C_j=(A_j, \mathit{CH_{j}},\mathit{{B_j}}, \mathit{F_j})$ whose composition results in $C_K$, where $A_k=A_i \cup A_j$,  $\mathit{CH_k}=\mathit{CH_{i}} \cup \mathit{CH_{j}} \cup \mathit{NewCH}$, 
	 $\mathit{F_k}:A_k \rightarrow  A_k$, 
	 and $\mathit{B_{k}}=B_i \cup B_j \cup NewB$. 
	 The set $\mathit{NewCH}$ is the set of channels that connect boundary output ports of a component to boundary input ports of the other component. The links between boundary ports and the new channels are defined using   $\mathit{NewB}:P_k \rightharpoonup \mathit{NewCH}$, where $P_k$ is the set of all boundary ports of the components  $C_i$ and $C_j$. The composition of $C_i$ and $C_j$ is denoted by $C_i \parallel C_j$.  \qed
	\label{def::compC}
\end{definition}

Based on Definition.~\ref{def::compC}, each composite component is itself a component. The coordinated adaptive actor model is a component composed of all components of the model.


\subsection{Compositional Semantics of CoodAA}
In this section, we present the compositional semantics of the coordinated adaptive actor model of a track-based system. Each component is presented in the form of a network of TIOAs of actors and a coordinator, which are shown in  Fig.~\ref{fig::automata1}. 
Each actor is specified by a separate TIOA. There is no separate TIOA for a channel, since here we have zero-capacity  channels, where each channel connects two ports and   synchronises the communications of two actors.  
Fig.~\ref{fig::automata1}\protect\subref{fig::automata::abstract} shows an abstract view of an actor where changing the state between free, occupied and adaptation is clear. In Fig.~\ref{fig::automata1}\protect\subref{fig::automata::detailed} we show a more detailed view with more details on the interactions and transitions.
We present an abstract TIOA of the coordinator in Fig.~\ref{fig::automata1}\protect\subref{fig::automata::coordinator} which shows the  operations of the coordinator on  actors.
We include an abstract view of the coordinator to present  a complete formal semantics of CoodAA.
This abstract view is enough for our discussions in this paper because our focus in 
this paper is on the verification of compatibility of components. 
%
In Magnifier, the coordinator analyzes and adapts the model@runtime, and the verification of  compatibility of components  that is performed in each iteration is 
on  the model@runtime, which is a snapshot of the system and only consists of the actors.
The TIOA of the coordinator is not needed for that verification. 

\begin{figure*}
	\centering
	\subfloat[An abstract representation of TIOA of an actor $\mathit{a_j}$ of the component $\mathit{C_i}$. The arrival of the moving object into the sub-track is modeled by the transition from \emph{Free} to \emph{Occupied}. The sub-track remains occupied during the traveling time of the moving object, formulated in the \textit{Occupied} location, where $\mathit{clock}$ shows the time elapsed since the arrival of the moving object. The departure of the moving object from the sub-track is modeled by the transition from \emph{Occupied} to \emph{Free}. An actor goes into its  \emph{Error/Adapt} state  when a change happens to the sub-track itself and/or the plan of the moving object needs an adaptation.
	]{{\fbox{\includegraphics[width=0.25\linewidth,keepaspectratio]{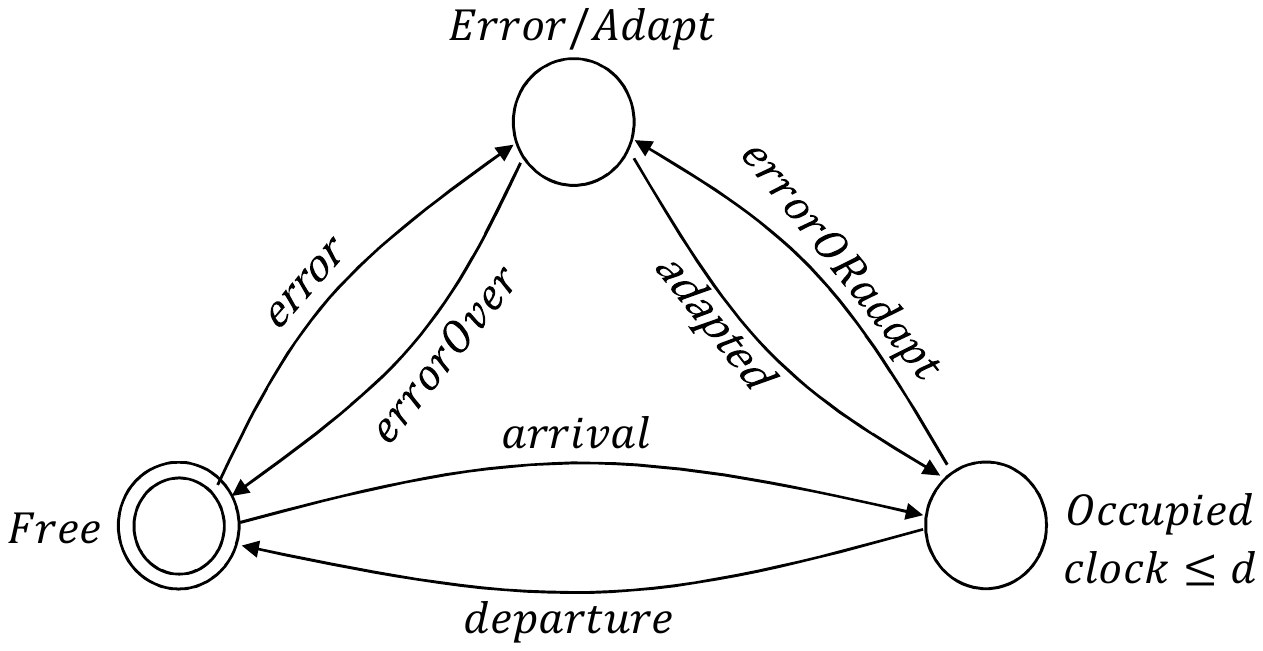}}\label{fig::automata::abstract}}}
	\quad
	\subfloat[The detailed model of TIOA of an actor that is shown in Fig.~\ref{fig::automata1}\protect\subref{fig::automata::abstract}. The automaton moves from \textit{Free} to \textit{Occupied} by synchronizing with $\mathit{interact_{ch}}?$, where \emph{ch} is an input channel of the actor. The variable \textit{movingPlan} is the variable of the actor. The traveling plan in \textit{movingPlan} is updated using the the \emph{update} function (the route of the moving object is updated by removing the first entry from the route).  
	The function \emph{travT} calculates the delay for the actor. The automaton moves from \textit{Occupied} to \textit{Free} by synchronizing with $\mathit{interact_\mathit{ch}}!$.  The \textit{transP} variable is used to transfer a message between TIOAs of two actors. 
	The functions $\mathit{inChS(j)}$  returns the set of input channels  of the actor. The function  $\mathit{outCh}$ returns the output channel over which the message \textit{movingPlan} is sent. 
	The automaton moves to \emph{Error/Adapt} by synchronizing with $\mathit{error_{j}}$ or $\mathit{errorORadapt_{j}}$, and moves back by synchronizing with $\mathit{errorOver_{j}}$ or $\mathit{adapted_{j}}$ of the coordination TIOA in Fig.~\ref{fig::automata1}\protect\subref{fig::automata::coordinator}.  ]{{\fbox{\includegraphics[width=0.4\linewidth,keepaspectratio]{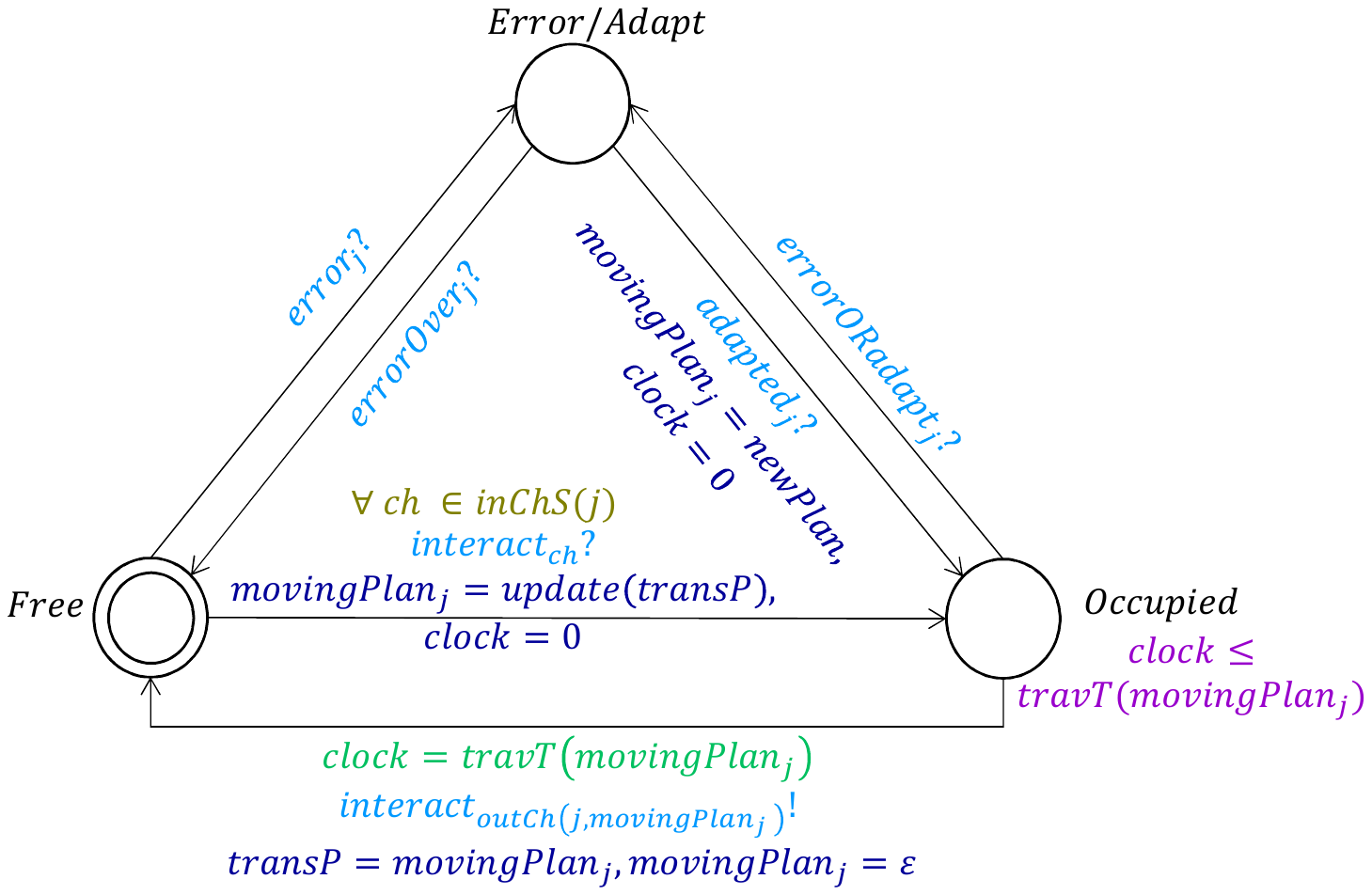}}\label{fig::automata::detailed}}}
	\quad
	\subfloat[The abstract representation of TIOA of the coordinator of the component $C_i$,  where $I_{C_i}$ is the set of identifiers of actors of $C_i$. The automaton changes the locations of the actors using a set of output actions, which are defined for each actor of the component $C_i$. The function \textit{adapt} develops a new traveling plan that is sent to the actor $a_j$ using the global variable \textit{newPlan}.  ]{{\fbox{\includegraphics[width=0.25\linewidth,keepaspectratio]{}}\label{fig::automata::coordinator}}}
	\caption{An abstract TIOA of an actor $\mathit{a_j}$ of  component  $\mathit{C_i}$ is shown in  \protect\subref{fig::automata::abstract} and a more detailed view is shown in \protect\subref{fig::automata::detailed}. An abstract representation of TIOA of the coordinator of the component $C_i$ is shown in \protect\subref{fig::automata::coordinator}.
	}
	\label{fig::automata1}%
\end{figure*}

\begin{figure*}
	\centering
	\subfloat[The TIOA of an actor $a_i$ of the component $C_i$ in the model@runtime. This automaton is obtained from the automaton of  Fig.~\ref{fig::automata1}\protect\subref{fig::automata::detailed} by removing the adaptation mode of the actor. This is because the model@runtime is analyzed after the coordinator applies its adaptation decision to the actors. ]{{\fbox{\includegraphics[width=0.47\linewidth,keepaspectratio]{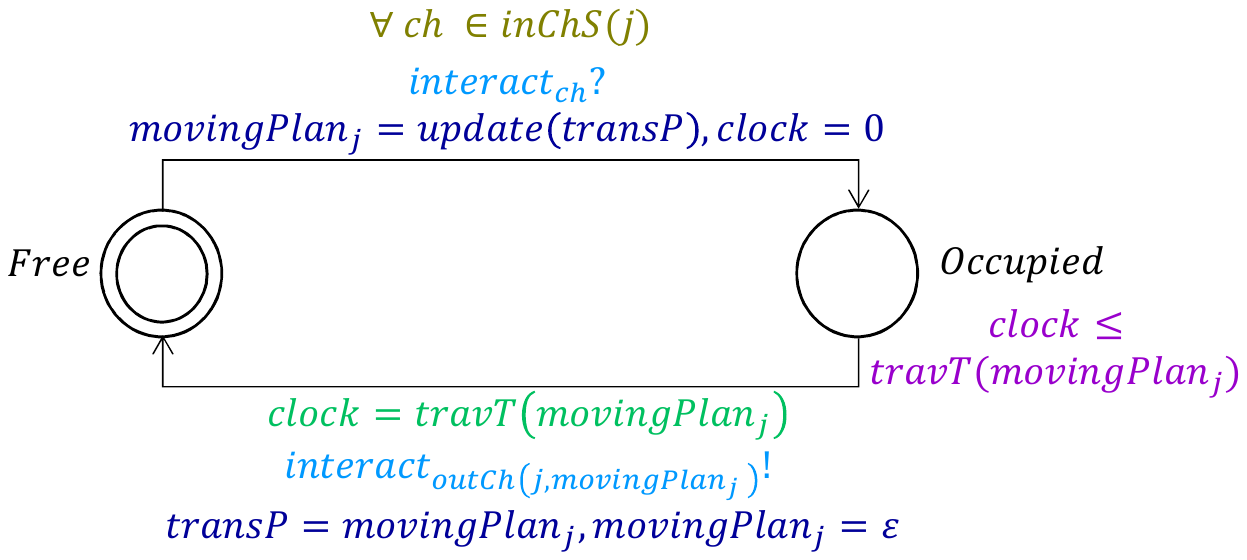}}\label{fig::automata::simplified}}} 
	\quad
	\subfloat[The TIOA of an augmented environment actor $\mathit{\mathit{aa}_j}$ belonging to $C_i$. 
	The augmented environment actor $\mathit{\mathit{aa}_j}$ has a list  $\mathit{ERS}$ of messages that have to be sent or received by the actor at the pre-specified times and over the pre-specified channels.
	The actor stays at state $q_0$ until the time progresses up to a time at which $\mathit{\mathit{aa}_j}$ sends or receives a message. 
	The time required to be elapsed to send or receive a message is calculated by the function \emph{timeProg}. 
	This automaton synchronizes on the action $\mathit{interact_{expCh(ERS_j)}}$  to receive or send a message  over an expected channel \emph{expCh(ERS)}.
	Functions \emph{in} and \emph{out} determine if a message should be received or sent by  $\mathit{\mathit{aa}_j}$, respectively.
	The top edge represents receiving, and it is only enabled if the message to be received matches the message that $\mathit{\mathit{aa}_j}$ is expecting. This condition is checked by the function \emph{hasMsg}.
	On the lower, edge $\mathit{\mathit{aa}_j}$ sends a message, and using the function \emph{expMes}, the first message in the list is added to \emph{transP} which denotes passing the message by $\mathit{\mathit{aa}_j}$ to the receiver actor.
	Finally, the function \emph{updateL} removes the first entry from the \emph{ERS} list.  
	]{{\fbox{\includegraphics[width=0.47\linewidth,keepaspectratio]{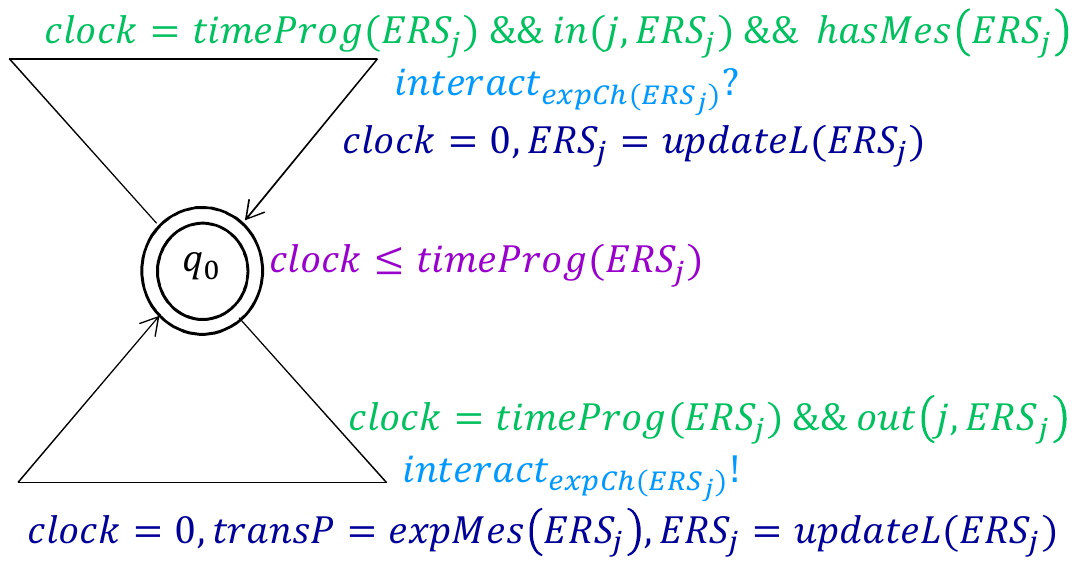}}\label{fig::automata::environment}}} 
	\caption{The TIOAs of an actor and an augmented environment actor in the model@runtime are respectively shown in \protect\subref{fig::automata::simplified} and \protect\subref{fig::automata::environment}. The automaton presented in  \protect\subref{fig::automata::environment} is described in Section.~\ref{section::magnifier}. 
	}%
	\label{fig::automata2}%
\end{figure*}

The TIOA of an actor, modeling a sub-track, without any details about the edges is shown in Fig.~\ref{fig::automata1}\protect\subref{fig::automata::abstract}. This automaton has three locations that correspond to the values of the \textit{status} variable of the actor. 
The \textit{Free} location that is also the initial location represents that the sub-track  is empty. The automaton moves from \textit{Free} to \textit{Occupied} whenever a moving object arrives at the sub-track. We suppose that the route of the moving object is updated over this transition. The route of a moving object is a sequence of sub-tracks traveled by the moving object. By passing a moving object from a sub-track, the first entry in the route of the moving object, referring to the current sub-track, is removed. The clock $\mathit{clock}$ 
represents the time elapsed since the arrival of the moving object at the sub-track. The traveling time of the moving object in Fig.~\ref{fig::automata1}\protect\subref{fig::automata::abstract} is denoted by $d$.  The sub-track remains occupied during the traveling time of the moving object, formulated in the \textit{Occupied} location. When a sub-track is occupied another moving object can not enter into it, because a sub-track is a critical section. 
The automaton moves from \textit{Occupied} to \textit{Free} whenever the moving object leaves the sub-track. A change can happen to a sub-track at any time, while the sub-track is empty or occupied. If the  sub-track is empty and an adversarial event  happens to the sub-track, the automaton moves from \textit{Free} to \textit{Error/Adapt}.  
The automaton moves back to the \textit{Free} location the moment this adversarial event is removed. The automaton moves from \textit{Occupied} to \textit{Error/Adapt} if the sub-track is occupied and an adversarial event happens to the sub-track. The other case through which the automaton moves to \textit{Error/Adapt} is when the sub-track is occupied and the traveling plan of the moving object should be adapted.  The automaton moves back to the \textit{Occupied} location whenever an adaptation decision in the form of rerouting/rescheduling for the traveling plan is made. 


The TIOA of an actor $a_j$ of the component $C_i$ is shown in Fig.~\ref{fig::automata1}\protect\subref{fig::automata::detailed} with more details. This automaton has a \textit{movingPlan} variable that corresponds to the \textit{movingPlan} variable of the actor in Definition.~\ref{def::actor}.  The automaton has a set of input actions $\mathit{interact_{ch}}$, where \textit{ch} is the channel bound to an input port of the actor. These actions correspond to the interact handlers of the actor. Similarly, for each channel \textit{ch} bound to an output port of the actor an output action $\mathit{interact_{ch}}$ is defined. Regarding the \textit{change} port of the actor, the automaton has a set of input actions $\mathit{errorORadapt_{j}}$, $\mathit{adapted_{j}}$, $\mathit{error_{j}}$, $\mathit{errorOver_{j}}$,  where each action has the same name as an input argument of the change handler. This automaton has access to the global variable \textit{transP} that corresponds to the input argument of the interact handlers and is used to transfer a value between TIOAs of two actors. The global variable \textit{newPlan} transfers a value from  TIOA of the coordinator to  TIOA of the actor. This variable corresponds to the \textit{newPlan} argument of the change handler.  The TIOA of an actor is defined as follows.

\begin{definition} \label{actor::dec}
	(TIOA of an Actor) The TIOA associated with an actor $a_j$ of $\mathit{C_i}$ is $TA=(\{\mathit{Free},\mathit{Occupied},\mathit{Error/Adapt}\}, \mathit{Free},\mathit{Var},\{\mathit{clock}\},\mathit{Act_{\mathit{in}}},\\\mathit{Act_{out}},T,I)$, where $\mathit{Var}=\{\mathit{movingPlan_j}\}$, $\mathit{Act_{\mathit{in}}}=\{\mathit{interact_{ch}}|\mathit{ch} \in \mathit{inChS(j)}\} \cup \{\mathit{error_{j}}, \mathit{errorOver_{j}},\mathit{errorORadapt_{j}},\mathit{adapted_{j}}\}$, $\mathit{Act_{out}}=\{\mathit{interact_{ch}}| \mathit{ch} \in \mathit{outChS(j)} \}$, $I(\mathit{Occupied})=\mathit{clock} \leq travT(movingPlan_j)$, and $T$ is defined as follows.
	\begin{multline*}
	\tag{Receive}
	\forall \mathit{ch} \in \mathit{inChS(j)}: (\mathit{Free},\mathit{true},\mathit{interact_{ch}},\{\mathit{clock}\},\\\{\mathit{movingPlan_j=}\mathit{update(transP)}\},\mathit{Occupied})
	\end{multline*}
	\begin{multline*}
	\tag{Send}
	(\mathit{Occupied},\mathit{clock}=travT(movingPlan_j), \\\mathit{interact_{\mathit{outCh(j,movingPlan_j)}}},\emptyset,\\\{\mathit{transP=movingPlan_j},\mathit{movingPlan_j=\epsilon}\},\mathit{Free}) 
	\end{multline*}  
	\begin{multline*} \tag{Error}
	(\mathit{Free},\mathit{true},\mathit{error_{j}},\emptyset,\emptyset,\mathit{Error/Adapt})
	\end{multline*} 
	\begin{multline*} \tag{NoError}
	(\mathit{Error/Adapt},\mathit{true},\mathit{errorOver_{j}},\emptyset,\emptyset,\mathit{Free})
	\end{multline*}
	\begin{multline*} \tag{Adapt}
	(\mathit{Occupied},\mathit{true},\mathit{errorORadapt_{j}},\emptyset, \emptyset,\\\mathit{Error/Adapt})
	\end{multline*}
	\begin{multline*} \tag{Adapted}
	(\mathit{Error/Adapt},\mathit{true},\mathit{adapted_{j}},\{\mathit{clock}\},\\\{\mathit{movingPlan_j=newPlan}\},\mathit{Occupied})
	\end{multline*} \qed
\end{definition}

In the following, we describe edges of  TIOA of the actor $a_j$. For the sake of convenience, we call the channel bound to an input port of an actor an input channel of the actor. Similarly, we call the channel bound to an output port of an actor the output channel of the actor.

An actor is always waiting to receive a message over an input port.  If a message is present, the actor receives the message  and the message handler  corresponding to that input port is triggered. By triggering $\mathit{interact_{j,l}(transP)}$ of the actor $\mathit{a_j}$ on its input port $p_{j,l}$, the actor receives the message \textit{transP} sent from another actor. When the message is received, the \textit{status} variable of the actor  is set to \textit{Occupied}. 
The message \textit{transP} includes the object id of the message ($objectId$) and the travel plan ($travelPlan$);  	\textit{travelPlan} includes  the traveling route of the moving object, its schedule, the amount of  fuel, and its speed.  This information is stored in the \textit{movingPlan} variable of the actor. 
The actor updates the traveling plan (by removing the first entry in the route) before storing it in \textit{movingPlan}. These operations are specified through the \textit{Receive} edge that is defined for every input channel of the actor.  The actor $a_j$ of $\mathit{C_i}$ receives a message from the port connected to the  channel $\mathit{ch}$ whenever TIOA of the actor is synchronized with the input action $\mathit{interact_{ch}}$.   The auxiliary functions used over this edge are described as follows. Let $\AID$ be the set of all actor identifiers,  $\mathit{CH}$ be the set of all channels in the model, and $\MSG$ be the set of all messages in the form of (\textit{objectId}, \textit{travelPlan}).  The function $\mathit{inChS(j)}$, where $\mathit{inChS:\AID \rightarrow 2^{\mathit{CH}}}$, returns the set of all input channels of the actor $a_j$. The function $\mathit{update: \MSG \rightarrow \MSG}$ receives a message and returns a new message in which the traveling plan is updated. 


The actor introduces a delay that is derived based on the traveling plan included in  the \textit{movingPlan} variable. This operation is formulated in the \textit{Occupied} location of the automaton, where the function $\mathit{travT: \MSG \rightarrow \R}$ receives a message and calculates the amount of the delay. After passing the delay time, the actor  sends out the message. The output port of the actor over which the message is sent is also determined using the traveling plan. After sending the message, the \textit{status} variable of the actor  is set to \textit{Free} and \textit{movingPlan} is set to \textit{null}. These operations are specified through the \textit{Send} edge that is defined for every output channel of the actor. The function $\mathit{outChS(j)}$, where $\mathit{outChS}:\AID \rightarrow 2^\mathit{CH}$, returns the set of  all output channels of the actor $a_j$.  The automaton of the actor is synchronized with the output action $\mathit{interact_{\mathit{outCh(j,movingPlan_j)}}}$ over this edge. The function $\mathit{outCh(j,movingPlan_j)}$, where $\mathit{outCh}:\AID \times \MSG \rightarrow \mathit{CH}$, returns  the output channel of the actor $a_j$ over which channel the message $\mathit{movingPlan_j}$ is sent. 
The message is transferred between two actors whenever their TIOAs are synchronized over an interact action.  The message is delivered to the receiver actor using the \textit{transP} variable. 

The execution of an interact handler of the actor cannot be preempted  by other interact handlers (the sub-track is a critical section), but the change handler has the highest priority for execution and can preempt interact handlers. By triggering the change handler, the message (\textit{errorORadapt}, \textit{adapted}, \textit{error}, \textit{errorOver}, \textit{newPlan}) sent from the coordinator is received.  The \textit{status} variable of the actor is set to \textit{Error/Adapt} under the following conditions. First,   \textit{error} is true and  \textit{status} is \textit{Free}, second, \textit{errorORadapt} is true and  \textit{status} is \textit{Occupied}. The value of \textit{status} changes from \textit{Error/Adapt} to \textit{Free} if \textit{errorOver} is true. Furthermore, the value of \textit{status} changes from \textit{Error/Adapt} to \textit{Occupied} if \textit{adapted} is true. If the latter case holds, \textit{movingPlan} variable of the actor is set to the new plan stored in \textit{newPlan} of the message. These operations are specified through the \textit{Error}, \textit{NoError}, \textit{Adapt}, and \textit{Adapted} edges. The  automaton respectively is synchronized with the input actions $\mathit{error_{j}}$ and $\mathit{errorOver_{j}}$ over the \textit{Error} and \textit{NoError} edges. Furthermore, the automaton respectively is synchronized with the input actions  $\mathit{errorOradapt_{j}}$ and $\mathit{adapted_{j}}$ over the \textit{Adapt} and \textit{Adapted} edges.

An abstract TIOA for the coordinator is presented in Fig.~\ref{fig::automata1}\protect\subref{fig::automata::coordinator}. As can be seen, this automaton is synchronized with the automaton of an actor over the actions  $\mathit{error_{j}}$, $\mathit{errorOver_{j}}$, $\mathit{errorOradapt_j}$, and $\mathit{adapted_j}$.  This shows the role of the coordinator to update the  state and/or the traveling plan of an actor. 
\textbf{TIOA of an Actor in the Model@runtime.} 
The adaptation mechanism of coordinator  obtains new plans for the actors. In Magnifier, the model@runtime is verified after an adaptation decision is applied to the model. 
An actor with  \textit{movingPlan}=null and \textit{status=Error/Adapt} (an empty sub-track with an adversarial environmental condition) does not contribute in communications. Therefore,  the \textit{Error/Adapt} location is not needed in TIOA of an actor in the model@runtime. As shown in Fig.~\ref{fig::automata2}\protect\subref{fig::automata::simplified}, we simplify TIOA of the  actor $a_j$ of the component $C_i$.  Compared to Definition.~\ref{actor::dec}, this simple TIOA does not contain the   \textit{Error/Adapt} location of the actor. 

\section{Verification of Model@runtime Using Magnifier} \label{section::magnifier}
In this section, we develop a compositional approach to verify the system in the case of a change occurring and applying adaptation to components. We first provide insight into the Magnifier approach. We then present the definitions used to describe the approach. Finally, we formally explain the Magnifier approach and prove its correctness. 

\subsection{Overview} \label{sec::Mag}
In this section, we present an overview of the Magnifier approach. We first informally explain the approach on track-based systems. When a track-based system is designed, initial traveling plans of the moving objects are selected in a way that no conflict happens between the moving objects, and the moving objects  arrive at their destinations at the pre-specified times. In fact, the initial traveling plan of a moving object imposes constraints on the arrival of the moving object at each area of its route. When a change happens to an area,  the moving objects traveling across the area are rerouted  if there is an unavailable sub-track in their routes.
%
This way the plan for the area is adapted. 
Note that the presence of a change (or its effect) in an area 
 may last for a while, so any change in the plan must consider the possible future effects. 
Under the following conditions, the correctness properties of the system are satisfied and the change does not propagate: 
\begin{enumerate}
	\item The moving objects traveling across the area depart from it based on their initial traveling plans.
	\item The moving objects intending to enter into the area at a future time, arrive at the area based on their initial traveling plans.
	\label{item2}
	\item The moving objects entering into the area at a future time, depart from the area based on their initial traveling plans.
	\label{item3}
\end{enumerate}
In the case of violating one of the above conditions, the change is propagated to the adjacent areas.

There is a set of 
properties for track-based systems that has to be satisfied.
Collision of moving objects is avoided by design, a sub-track can only contain one moving object at one moment in time. 
Moving objects must arrive at their destinations at the pre-specified times. This property is checked by Magnifier. We also check that the system is deadlock-free.
A deadlock in a track-based system happens whenever  moving objects are stuck in a traffic blockage and cannot find an available route towards their destinations. If a moving object is stuck in some area of the traveling space,  then it does not depart from the area. This causes propagation of the change  through the whole system, and the object does not arrive at its destination on time. Deadlock can be caught by Magnifier. 
We also check if the fuel of  moving objects go under a certain threshold.
We use separate functions to detect a deadlock or running out of fuel and to stop the analysis.

The adaptation for an adjacent area is triggered whenever the change propagates into the area. This means that the traveling plans of the moving objects entering into the area or traveling across the area may need to be adapted. Therefore,  in the case of propagating the change, all areas affected by the change are composed to form a new area. The traveling plans of the moving objects traveling across the new area are adapted. If the moving objects arrive at the new area and depart from it based on their initial traveling plans, the change propagation stops. 
As explained informally for track-based systems, the Magnifier approach uses an iterative algorithm to assure the correctness of the system by involving the least number of the components in the analysis. 
%
The  environment of a component is abstracted to the external messages that are sent to the component at the pre-specified times. When a change occurs, Magnifier zooms-in on the component affected by the change. The model@runtime corresponding to the component is updated based on the snapshot taken from the system at the change point. The model@runtime is adapted based on an adaptation policy defined in the coordinator. The adaptation contains the change if the adapted component can work with its environment by satisfying the constraints on their interactions. In fact, the propagation of the change stops if the adapted component is able to receive messages of its environment at the pre-specified times and over pre-specified ports, and guarantees to deliver messages to its environment at the pre-specified times and over pre-specified ports. Otherwise, the change is propagated to the environment. In this case, Magnifier zooms-out and  concentrates on a new component resulted from composing all components affected by the change. The new component is adapted. The same procedure is repeated for the new component. In other words, the propagation of the change stops if the new component can work with its environment. Otherwise, Magnifier zooms-out and extends its verifying domain by composing the components affected by the change propagation and the previous ones. 

One can argue that in a compositional approach we can check the change in one component and then check its propagation to the neighborhood components one by one.
But a change may propagate back to the component which was the source of the change and develop a circular dependency. This situation is shown in the example of Section~\ref{section::problemDef}.
In Magnifier,
by composing the components and forming a new component, all changes circulating between two components happen inside of the new component and their effects are considered.

\begin{table*}
	\begin{center}
		\begin{tabular}{|m{\textwidth}|} 
			\hline
			\begin{itemize}
				\item  $C_j$ is an environment component of $C_i$, $C_j \in \mathit{Env(C_i)}$, if  $\exists \mathit{ch} \in \mathit{CH_{\mathcal{M}}} \cdot p_i \in P_{C_i}, p_j \in P_{C_j}, \{(p_i,\mathit{ch}),(p_j,\mathit{ch})\} \subseteq B_\mathcal{M}$, where $C_i$ and $C_j$ are components of $C_\mathcal{M}=(A_\mathcal{M},\mathit{CH_{\mathcal{M}}},\mathit{B_\mathcal{M}},  \mathit{F_\mathcal{M}})$ and  $P_{C_i}=\{p|  a_l \in A_i \wedge p \in (P_{I_l} \vee P_{O_l}) \wedge \nexists \mathit{ch} \in \mathit{CH_i} \cdot (p,\mathit{ch}) \in B_i\}$ is the set of boundary ports of the component $C_i$. The set of all environment components of $C_i$ is denoted by $\mathit{Env(C_i)}$ 
				\item $a_k$ of $C_j$ is an environment actor of $C_i$, $a_k \in C_j |_{C_i}$,  iff  $ C_j \in \mathit{Env(C_i)} \wedge \exists \mathit{ch} \in \mathit{CH_{\mathcal{M}}}, p_j \in (P_{I_j} \vee P_{O_j}), p_i \in P_{C_i} \cdot \{(p_i,\mathit{ch}),(p_j,\mathit{ch})\} \subseteq B_\mathcal{M}$. The set of actors of $C_j$ that are environment actors  of $C_i$ is denoted by $C_j |_{C_i}$

				\item $aa_k$ is the augmented environment actor, corresponding to the environment actor $a_k \in C_j |_{C_i}$, defined as   $(\mathit{ERS_k},\mathit{init_k}()\{\mathit{body}\},\mathit{Mtds_k},P_{I_k}, P_{O_k})$
				\begin{itemize}
					\item $\mathit{ERS_k}$ is an ordered list. Each entry of  $\mathit{ERS_k}$ is defined as $(\mathit{transP},\mathit{t},\mathit{ch})$, where   $transP=(objectId,travelPlan)$ is a message, $\mathit{t}$ is a delay value, and $\mathit{ch}$ is a channel identifier
					\item $\mathit{init_k}()\{\mathit{body}\}$ is an initialisation method where \textit{body} is a sequence of statements
					\item $\mathit{Mtds_k}=\mathit{interact_{k,l}(transP)}\{\mathit{body}\}$ is the set of message handlers of $aa_k$ where $l=1, \cdots,\mathit{Num}$ and \textit{Num} is the number of input ports of $a_k$ through which  $a_k$ interacts with $C_i$
					\item $P_{I_k}$ is the set of input ports of $aa_k$. Each port is an input port of $a_k$ 
					through which $a_k$ communicates with $C_i$ 
					\item $ P_{O_k}$ is the set of output ports of $aa_k$. Each port is an output port of $a_k$ 
					through  which $a_k$ communicates with $C_i$
				\end{itemize}
				\item  $C_j \downarrow_{C_i}$ is the set of augmented environment actors of $C_i$ where each actor of this set corresponds to an actor of $C_j |_{C_i}$
				\item $\N_{C_i}$ denotes the network of TIOAs of the component $C_i$
				\item $\N_{C_{a,i}}$ denotes the network of TIOAs of the adapted component $C_i$
				\item $\N_1 \bowtie \cdots \bowtie \N_n$ denotes that the networks $\N_1,\cdots,\N_n$ are compatible, where the parallel product of their TIOAs does not reach a deadlock state
				
			\end{itemize}
			\\
			\hline
		\end{tabular}
	\end{center}
	\caption{Summary of definitions and notations for the Magnifier approach
	}
	\label{tab::summery2}
\end{table*}

\subsection{Preliminary Definitions for Magnifier }

In this section, we present the definitions on which the Magnifier approach relies. The notations and the summary of definitions are given in Table.~\ref{tab::summery2}.  Let $C_\mathcal{M}=(A_\mathcal{M},\mathit{CH_{\mathcal{M}}},\mathit{B_\mathcal{M}},  \mathit{F_\mathcal{M}})$, composed of a set of components, be the coordinated adaptive actor model of a track-based system, where $A_\mathcal{M}$ is the set of actors, $\mathit{CH_{\mathcal{M}}}$ is the set of channels, $\mathit{B_\mathcal{M}}$ is the binding function, and $\mathit{F_\mathcal{M}}$ is the coordination function of $C_\mathcal{M}$.  The component $C_i$ of $C_\mathcal{M}$ models an area of the system, and interacts with a set of components called environment components of $C_i$. Let $P_{C_i}=\{p|  a_l \in A_i \wedge p \in (P_{I_l} \vee P_{O_l}) \wedge \nexists \mathit{ch} \in \mathit{CH_i} \cdot (p,\mathit{ch}) \in B_i\}$ be the set of boundary ports of the component $C_i$. An environment component is defined as follows.

\begin{definition} \label{def::envC}
	(Environment Component) The component $C_j$ is called an environment component of the component $C_i$ if  there exists a channel  $\mathit{ch}$ connecting a boundary port $p_i$ of the component $C_i$ to a boundary port $p_j$ of the component $C_j$, i.e. $\mathit{ch} \in \mathit{CH_{\mathcal{M}}}, p_i \in P_{C_i}, p_j \in P_{C_j}, \{(p_i,\mathit{ch}),(p_j,\mathit{ch})\} \subseteq B_\mathcal{M}$. The set of all environment components of $C_i$  is denoted by $\mathit{Env(C_i)}$. \qed
\end{definition}

\begin{example}
A model with five components $C_1$, $C_2$, $C_3$, $C_4$, and $C_5$ is shown in Fig.~\ref{interactiveComponents}(a). The connections between the actors are denoted by arrows. The components $C_2$, $C_3$, and $C_4$ are environment components of $C_1$.
\end{example}

\begin{figure}
	\centering
	\includegraphics[width=.45\textwidth,keepaspectratio]{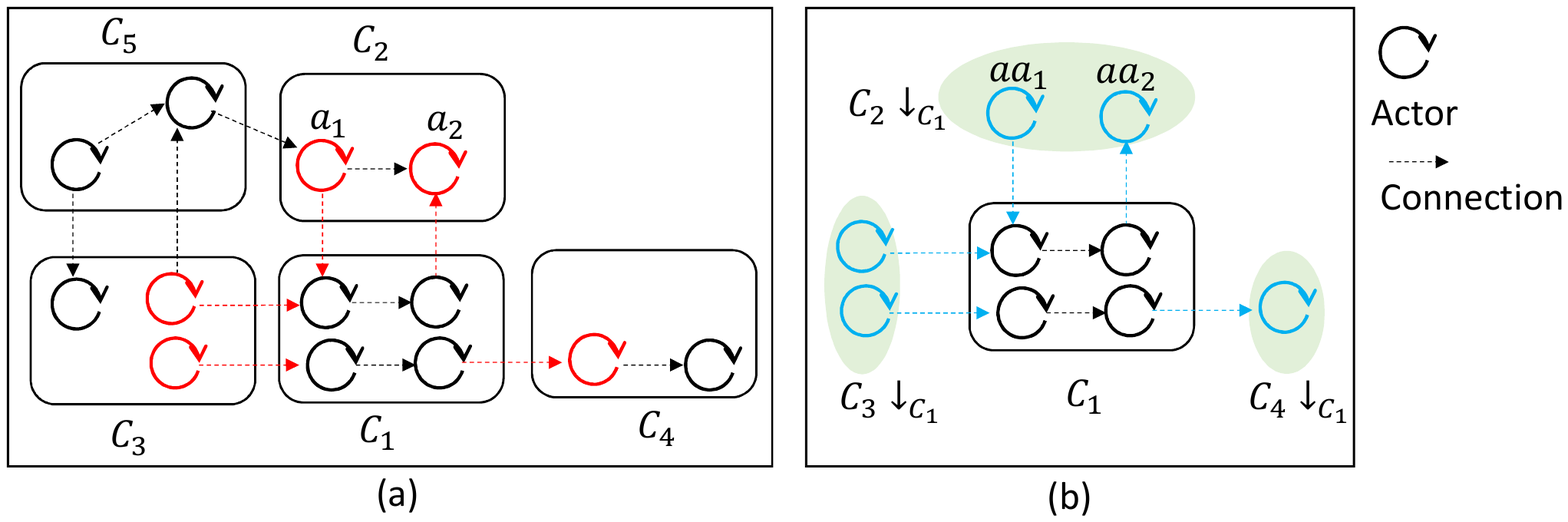}
	\caption{A model consisting of 5 components is shown in (a). The connections between actors are shown with dashed arrows. The red actors are environment actors of the component $C_1$. The interface components of $C_1$, i.e. $C_j \downarrow_{C_1}$, $j=1,2,3$, are shown in (b). The augmented environment actors of $C_1$ along with their ports are shown in blue.}
	\label{interactiveComponents}
\end{figure}

Using Definition~\ref{def::envC}, we define an environment actor. An environment actor of a component is an actor 
whose input and output ports are bound to the input and output ports of the component by a set of channels (and
hence directly sends or receives messages to/from the component). 

\begin{definition}
	(Environment Actor) Let $C_j \in \mathit{Env(C_i)}$ be an environment component of $C_i$. The actor $a_j$ of $C_j$ is an environment actor of $C_i$ iff  $\exists \mathit{ch} \in \mathit{CH_{\mathcal{M}}}, p_j \in (P_{I_j} \vee P_{O_j}), p_i \in P_{C_i} \cdot \{(p_i,\mathit{ch}),(p_j,\mathit{ch})\} \subseteq B_\mathcal{M}$. \qed
\end{definition}


We use $C_j |_{C_i}$ to denote the set of actors of $C_j$ that are environment actors  of $C_i$.

\begin{example}
The red actors shown in Fig.~\ref{interactiveComponents}(a) are environment actors of $C_1$. For the environment component $C_2$, $C_2 |_{C_1}=\{a_1,a_2\}$.
\end{example} 

Suppose that $a_k \in A_j$ is an environment actor of  the component $C_i$.  We use $P_{I_{k,C_i}}$ and $P_{O_{k,C_i}}$ to respectively denote the sets of input and output ports of the actor $a_k$ over which it communicates with the component $C_i$, i.e.  $P_{I_{k,C_i}}=\{p \in P_{I_k}| \exists p_i \in P_{C_i},  \mathit{ch} \in \mathit{CH}_\mathcal{M} \cdot \{(p,\mathit{ch}),(p_i,\mathit{ch})\} \subseteq B_\mathcal{M}\}$, $P_{O_{k,C_i}}=\{p \in P_{O_k}| \exists p_i \in P_{C_i},  \mathit{ch} \in \mathit{CH}_\mathcal{M} \cdot \{(p,\mathit{ch}),(p_i,\mathit{ch})\} \subseteq B_\mathcal{M}\}$.

\vspace{0.5cm}
\noindent\textbf{Abstraction of the environment.} In order to abstract the environment of a component in Magnifier, corresponding to each environment actor an augmented environment actor is defined for the component. 
We augment all the significant information of an environment component to the augmented environment actor.
An augmented environment actor has a list of expected  receives and sends,  called  \textit{ERS}, where each entry of the list contains a message, a delay value, and the identifier of a channel. 
Besides, this actor has a set of input ports, a set  of output ports, an init method, and a set of message handlers, where each  handler corresponds to an input port of the actor. The augmented environment actor corresponding to the environment actor $a_k$ has  $P_{I_{k,C_i}}$ and $P_{O_{k,C_i}}$ as the sets of its input and output ports, respectively.

\begin{definition}
	(Augmented Environment Actor) An augmented environment actor, $aa_k$, corresponding to the environment actor $a_k$, 
	is defined as $(\mathit{ERS_k},\mathit{init_k}()\{\mathit{body}\},\mathit{Mtds_k},P_{I_k}, P_{O_k})$, where $\mathit{ERS_k}$ is an ordered list, $\mathit{init_k}$ is a method, $\mathit{Mtds_k}=\{\mathit{interact_{k,l}(transP)}\{\mathit{body}\}|l=1, \cdots,\mathit{Num}\}$ is the set of message handlers of the actor with the sequence of input arguments $transP=(objectId,travelPlan)$, and $P_{I_k}=P_{I_{k,C_i}}$ and $ P_{O_k}=P_{O_{k,C_i}}$ are the sets of input and output ports of the actor, respectively. The number of input ports is denoted by  $\mathit{Num}$. Each entry of  $\mathit{ERS_k}$ is defined as $(\mathit{transP},\mathit{t},\mathit{ch})$, where  $\mathit{t}$ is a delay value and $\mathit{ch}$ is a channel identifier. \qed
	
\end{definition}

 We use $C_j \downarrow_{C_i}$ to denote the set of augmented environment actors of $C_i$ where each actor of this set corresponds to an actor of $C_j |_{C_i}$. 

\begin{example}
The augmented environment actors of $C_1$ along with their ports are shown in blue in Fig.~\ref{interactiveComponents}(b),  i.e. $C_2 \downarrow_{C_1}=\{\mathit{aa_1, aa_2}\}$ where $\mathit{aa_1}$ and $\mathit{aa_2}$ correspond to the actors $a_1$ and $a_2$ in Fig.~\ref{interactiveComponents}(a), respectively.
\end{example} 

We now define an interface component of a component. 
The interface components of a component (or visible parts of its environment)  
are sets of its augmented environment actors. 

\begin{definition} \label{def::intCom}
	(Interface Component) For each $C_j \in \mathit{Env(C_i)}$, $C_j \downarrow_{C_i}$ is called an interface component of the component $C_i$. \qed
\end{definition}

The definition of the interface component is inspired from the approach of \cite{clarke1989compositional}, where it defines interface processes. For two processes $P_1$ and $P_2$, $P_1 \downarrow \Sigma_{P_2}$ is an interface process of $P_2$, where $\Sigma_{P_2}$ is the set of symbols (i.e. atomic propositions) associated with $P_2$. The interface process $P_1 \downarrow \Sigma_{P_2}$ is the process $P_1$ in which all symbols that do not belong to $\Sigma_{P_2}$ are hidden. 

Finally, we  define compatible TIOAs. In the Magnifier approach, two components can interact if their TIOAs are compatible.  

\begin{definition} \label{def::compT}
	(Compatible TIOAs) Two or more TIOAs are compatible if the parallel product of them does not reach a deadlock state. \qed
\end{definition}

Our definition of compatibility is inspired from the approach of \cite{10.1007/3-540-45828-X_9}, in which two components (timed interfaces) are compatible if there is an environment to avoid the parallel product of the components from reaching an error state (the environment makes the components work together). 
In our approach, for checking the compatibility we do not consider any helpful environment.
Note that a deadlock state in the product of two TIOAs 
is different from a deadlock in a track-based system.
We use Definition.~\ref{def::compT} to define the compatibility between networks of TIOAs. We call the networks $\N_1$ and $\N_2$ compatible if all TIOAs in $\N_1$ and $\N_2$ are compatible. Similarly, the networks $\N_1,\cdots,\N_n$ of TIOAs are compatible, denoted by $\N_1 \bowtie \cdots \bowtie \N_n$, if the networks are pairwise compatible.

\subsection{Magnifier Approach}
In this section, we use $\N_{C_i}$ to denote the network of TIOAs of the component $C_i$ such that all TIOAs in $\N_{C_i}$ are compatible. We also use $\N_{C_{a,i}}$ to denote the network of TIOAs of the adapted component $C_{a,i}$ such that all TIOAs in $\N_{C_{a,i}}$ are compatible. We first describe the Magnifier approach for two components.   
Consider a system whose model consists of only two interacting components $C_1$ and $C_2$. 
To ensure that  correctness properties of the system are satisfied, it is checked whether $\N_{C_1}$ and $\N_{C_2}$ are compatible in the absence of a change. None of the correctness  properties are violated and there is a safe execution for the model if  $\N_{C_1} \bowtie \N_{C_2}$. By detecting a change, the  component affected by the change is adapted. Consequently, a new network for the adapted component is obtained.  Suppose that a change in $C_1$ is detected. If  $\N_{C_{a,1}} \bowtie \N_{C_2}$, the provided adaptation in $C_1$ does not result in a  change propagation to its environment component ($C_2$) and no more adaptation is required. Otherwise, the change is propagated to $C_2$. This case shows that the provided adaptation changes the observable behavior of $C_1$, and $C_2$ has to be adapted to consider the new behavior of $C_1$.


Although the proposed approach  works effectively for the small systems, checking the compatibility in a system with several components is an expensive process since the product of TIOAs of the adapted component and  TIOAs of its environment components may result in a large state space. To reduce the state space in our analysis, 
we propose that instead of an environment component, its observable part to the component, which is the set of environment actors, is considered in the product.  

As previously mentioned, the initial traveling plan of a moving object imposes constraints on the arrival of the moving object at each area of its route. The moving object arrives at each area of its route at a pre-specified time and from a pre-specified direction. 
Regarding the arrivals of the moving objects at a component, the environment actors send messages at the pre-specified times to the component, and regarding the departures of the moving objects from the component, the environment actors  receive messages at the pre-specified times from the component. Therefore, the environment actors should abstract the environment of the component to the messages that are sent and received at the pre-specified times. To this end, in Magnifier an environment actor is replaced with an augmented environment actor that has the \textit{ERS} list. 
An entry of \textit{ERS} either specifies that the augmented environment actor expects
to receive a message over an input port after an amount of time, or specifies that the actor intends to send a message over an output port after an amount of time. 
As the sub-track corresponding to an (augmented) environment actor is a critical section, \textit{ERS} is an ordered list, where the first entry contains the message which should  be  sent or received by the actor first.  Suppose that the augmented environment actor has received or sent the message of the first entry of \textit{ERS} at time $t$. It will send or receive the message of the second entry at time $t+t'$ where $t'$ is kept as a delay value in the second entry of \textit{ERS}. 
The same argument is valid for the rest of the entries. As \textit{ERS} in a model of a track-based system is calculated from the initial traveling plans of the moving objects, the schedules of the moving objects in \textit{ERS} do not lead to any conflicts between the moving objects. 


In the following, we define TIOA of an augmented environment actor $aa_j$ that is used in the model@runtime. As shown in Fig.~\ref{fig::automata2}\protect\subref{fig::automata::environment}, this automaton has an  \textit{ERS} variable that corresponds to the \textit{ERS} list of the actor. The automaton has a set of input actions $\mathit{interact_{ch}}$, where \textit{ch} is a channel bound to an input port of the actor. Similarly, for each channel \textit{ch} bound to an output port of the actor an output action $\mathit{interact_{ch}}$ is defined. This automaton has access to the global variable \textit{transP} that corresponds to the input argument of the interact handlers and  is  used  to  transfer  a  value  between  TIOAs  of  an actor of CoodAA and the augmented environment actor.  

\begin{definition} \label{actor::enviActor}
	(TIOA of an Augmented Environment Actor) The TIOA associated with an augmented environment actor $\mathit{aa}_j$ is $TA=(\{q_0\}, \mathit{q_0},\mathit{Var},\{\mathit{clock}\},\mathit{Act_{\mathit{in}}},\mathit{Act_{out}},T,I)$, where $\mathit{Var}=\{\mathit{ERS}_j\}$, $\mathit{Act_{\mathit{in}}}=\{\mathit{interact_{ch}}|\mathit{ch} \in \mathit{inChS(j)}\}$, $\mathit{Act_{out}}=\{\mathit{interact_{ch}}| \mathit{ch} \in \mathit{outChS(j)}\}$, $I(\mathit{q_0})=\mathit{clock} \leq \mathit{timeProg(ERS_j)}$, and $T$ is defined as follows.
	\begin{multline*}
	\tag{Send}
	(q_0,\mathit{\mathit{clock}=timeProg(ERS_j)} \wedge \mathit{out(j,ERS_j)}, \\ \mathit{interact_{\mathit{expCh(ERS_j)}}},\emptyset,\{\mathit{transP=expMes(ERS_j)},\\\mathit{ERS_j=updateL(ERS_j)}\},q_0) 
	\end{multline*}
	\begin{multline*}
	\tag{Receive}
	(q_0,\mathit{\mathit{clock}=timeProg(ERS_j)} \wedge \mathit{in(j,ERS_j)} \wedge \\ \mathit{hasMes(ERS_j)},\mathit{interact_{expCh(ERS_j)}},\{\mathit{clock}\},\{\mathit{ERS_j=}\\\mathit{updateL(ERS_j)}\},q_0)
	\end{multline*} \qed  
\end{definition}

In the following, we describe edges of TIOA of $\mathit{aa}_j$. If a message is present for $\mathit{aa}_j$ over an input port, $\mathit{aa}_j$ receives the message, and the interact handler corresponding to that  input port is triggered. In this case, the input arguments of the handler are not null. Besides the \textit{init} method of the actor, an interact handler can trigger an interact handler by passing null values as input arguments.  The \textit{init} method of $\mathit{aa}_j$ is executed whenever the actor is initialized. This method triggers an interact handler of $\mathit{aa}_j$. 
Let \textit{e}=(\textit{msg}, \textit{t}, \textit{ch}) be the first entry of the \textit{ERS} list of the actor. By triggering an interact handler, the actor introduces a delay with the value of \textit{t} if the input arguments of the handler are null. This operation of the actor is formulated in the $q_0$ location of the automaton shown in Fig.~\ref{fig::automata2}\protect\subref{fig::automata::environment}.  The location $q_0$ ensures that the time progresses up to \textit{t}  at which the message \textit{msg} is sent or received. Let $\mathit{En}$ be the set of all entries of all \textit{ERS} lists in the model.  The function $\mathit{timeProg}(\mathit{ERS})$, where  $\mathit{timeProg}: 2^{\mathit{En}} \rightarrow \R$, returns $t$. After the time is progressed by  $t$ time units, the message \textit{msg} should be sent or received. 

If \textit{ch} is bound to an input port of the actor, the actor expects to receive the message \textit{msg}. In this case, the interact handler terminates and the actor waits to receive a message. If \textit{ch} is bound to an  output port of the actor the  message \textit{msg} should be sent over that port. 
After sending the message, \textit{e} is removed from the \textit{ERS} list, and the actor invokes the current executing interact handler with null values as arguments. This way, the same process for the first entry of \textit{ERS} repeats. As shown in   Fig.~\ref{fig::automata2}\protect\subref{fig::automata::environment}, the \textit{Send}  edge is enabled if $\mathit{out}(j,\mathit{ERS}_j)$, where $\mathit{out}: \AID \times 2^{\mathit{En}} \rightarrow \mathit{bool}$, returns true. The function $\mathit{out}(j,\mathit{ERS}_j)$ determines if \textit{ch} is bound to an output port of $\mathit{aa_j}$. 
The automaton of $\mathit{aa}_j$ synchronizes over the action $\mathit{interact_{\mathit{expCh(ERS_j)}}}$ to send the message $\mathit{expMes(ERS_j)}$ over the channel ${\mathit{expCh(ERS_j)}}$. The function $\mathit{expMes(ERS_j)}$, where  $\mathit{expMes}: 2^{\mathit{En}} \rightarrow \MSG$, returns the message \textit{msg}. The function ${\mathit{expCh(ERS_j)}}$, where $\mathit{expCh}: 2^{\mathit{En}} \rightarrow \mathit{CH}$, returns the channel \textit{ch}. 
  This edge uses $\mathit{updateL(ERS_j)}$, where $\mathit{updateL}: 2^{\mathit{En}} \rightarrow 2^{\mathit{En}}$, to remove \textit{e} from  the \emph{ERS} list and returns the rest of the list.   
 
 By triggering an interact handler, if the input arguments of the handler are not null,  the actor checks whether it has received the message \textit{msg} over the  input port bound to \textit{ch}. 
 An error is thrown if this condition does not hold. Otherwise, \textit{e} is removed from \textit{ERS}, and the actor invokes the current executing interact handler. This way, the same operation for the first entry of \textit{ERS} repeats. As shown in Fig.~\ref{fig::automata2}\protect\subref{fig::automata::environment}, the \textit{Receive} edge is enabled if the functions \textit{in} and \textit{hasMsg} return true.
 %
The function $\mathit{in}(j,\mathit{ERS}_j)$, where $\mathit{in}: \AID \times 2^{\mathit{En}} \rightarrow \mathit{bool}$, returns true if \textit{ch} is bound to an input port of the actor. The function $\mathit{hasMes(ERS_j)}$, where $\mathit{hasMes}:2^{\mathit{En}} \rightarrow \mathit{bool}$, returns true if the message to be received is equal to \textit{msg}. The automaton synchronizes on $\mathit{interact_{expCh(ERS_j)}}$ to receive \textit{msg} over the channel \textit{ch}, which is determined by $\mathit{expCh(ERS_j)}$. 
 In the case of an error is thrown, TIOAs of $\mathit{aa_j}$ and the sender actor reach a deadlock state.

In Magnifier, an interface component is described  by a network of TIOAs where each TIOA models an augmented environment actor. Suppose that a change happens to a component, and the component is adapted to the change. If the network of TIOAs of the adapted component and the networks of TIOAs of its interface components are compatible, the change does not propagate to the environment components. Otherwise, the change is propagated to the environment components. 
Since each component has a well-defined interface, we are able to focus on a component and define an interface component for each one of its environment components.
This way, we are able to find the direction where the change is propagated and to find the components affected by the change propagation. The change propagates from the component $C_i$ to the component $C_j \in \mathit{Env(C_i)}$ if there is an actor of $ C_j \downarrow_{C_i}$ such that no transition is performed in the location $q_0$ of its TIOA  (Definition.~\ref{actor::enviActor}) even after letting time progress. This   shows a deadlock in the product of TIOAs of the component $C_i$ and the interface component $ C_j \downarrow_{C_i}$, and means that this augmented environment actor is not able to either send a message over a pre-determined port at a pre-specified time, or receive an expected message from a pre-determined port at a pre-specified time. In our approach, by propagating the change, all  components affected by the change are composed to create a new component that is adapted (please note that an actor follows the semantics defined in Fig.~\ref{fig::automata2}(b) whenever it belongs to an interface component, but all actors have the semantics shown in Fig.~\ref{fig::automata2}(a) whenever two components are composed to make a new  component). It is then checked whether the network of TIOAs of the new component and networks of TIOAs of its interface components are compatible. 


%
\begin{example}
For a better understanding of the approach, consider the following example.
Suppose that a change in the component $C_1$ of Fig.~\ref{interactiveComponents}(a) is detected and this component (its model$@$runtime) is adapted. If  $\N_{C_{a,1}} \bowtie \N_{C_2 \downarrow_{C_1}} \bowtie \N_{C_3 \downarrow_{C_1}} \bowtie \N_{C_4 \downarrow_{C_1}}$ 
does not hold, the change propagates into some of the environment components. Let the change propagates to the components $C_2$ and $C_3$. It means that $C_1$ with its current adaptation is not able to either receive  messages from $C_i, i=2,3,$ or send messages to $C_i$ at the pre-specified times. 
Consequently, the adaptation for $C_2$ and $C_3$ is triggered.  
The components $C_1$, $C_2$, and $C_3$ are adapted and are composed to provide the new  component $C_{1,2,3}$. 
If  $\N_{C_5\downarrow_{C_3}}  \bowtie \N_{C_4\downarrow_{C_1}}  \bowtie \N_{C_{1,2,3}}$ holds, 
the change propagation stops, and the change is not propagated further than $C_1$, $C_2$, and $C_3$.
\end{example}



\subsection{Correctness of the Proposed Approach}

In the previous section, we defined interface components of an adapted component, where each interface component is an abstraction of an environment component. In the absence of a change, the correctness properties of the system are preserved as each component of the system preserves a set of local correctness properties, i.e. each component receives and sends messages at the pre-specified times and over the pre-specified ports. In the presence of a change, the correctness properties are satisfied if the adapted component can work with its environment, i.e. the adapted component and its environment satisfy their input and output assumptions. This is where the networks of TIOAs of the adapted component and its interface components are compatible. 
In this section, the correctness of the proposed approach is proved in Theorem.~\ref{correctness}. This theorem explains that reducing the environment components to the interface components is correct. 

\begin{theorem}
	The networks of TIOAs of the adapted component and its interface components are compatible if and only if the networks of TIOAs of the adapted component and its environment components are compatible.
	\label{correctness}
\end{theorem}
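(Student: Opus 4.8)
The plan is to prove the two implications by exhibiting the interface component $C_j\downarrow_{C_i}$ as a \emph{faithful abstraction} of the environment component $C_j$ with respect to the channels shared with $C_i$, and then to transport executions (and in particular deadlocks) back and forth between the two parallel products. Since compatibility of a family of networks is defined componentwise, I would proceed one environment component at a time: it suffices to show that replacing a single $\N_{C_j}$ by $\N_{C_j\downarrow_{C_i}}$, while keeping $\N_{C_{a,i}}$ and the remaining (interface or environment) networks as the context, preserves deadlock-freedom of the product; iterating over all $C_j\in\mathit{Env}(C_i)$ then yields the theorem.

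The technical heart is a lemma I would prove first: \emph{in any context, $\N_{C_j}$ and $\N_{C_j\downarrow_{C_i}}$ admit exactly the same timed traces of $\mathit{interact}_{ch}$ actions on the channels shared with $C_i$, and along any such trace the environment of $C_j$ across that boundary is deadlocked in one product exactly when it is in the other.} The ``same traces'' part is immediate from the construction of the augmented environment actor: its list $\mathit{ERS}_k$ records precisely the messages, instants and channels at which the border actor $a_k$ crosses the $C_i$-boundary when $C_j$ runs under the initial traveling plans, and the internal actors of $C_j$ only relay messages and never touch a boundary port. As long as $C_{a,i}$ behaves on schedule, the \emph{Send}/\emph{Receive} edges of $aa_k$ (Definition~\ref{actor::enviActor}) fire at the same instants (guard $\mathit{clock}=\mathit{timeProg}(\mathit{ERS}_k)$) and carry the same messages (guard $\mathit{hasMes}$) as the \emph{Send}/\emph{Receive} edges of $a_k$, whose timing is pinned by the \textit{Occupied}-invariant $\mathit{clock}\le\mathit{travT}$ and whose message discipline is pinned by the critical-section property of a sub-track. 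The delicate half is the deadlock correspondence: if $C_{a,i}$ deviates at the boundary — a wrong message, or the right one at the wrong time or over the wrong channel — then the matching $aa_k$ edge is disabled with no alternative and no further time progress, a deadlock; and I claim the same deviation provokes a \emph{reachable} deadlock in $\N_{C_{a,i}}\otimes\N_{C_j}$. Here the border actor $a_k$ is permissive and may accept an out-of-schedule message, so the deadlock need not occur at $a_k$ itself; instead one uses that an adaptation only reroutes/reschedules the fixed, finite set of in-transit moving objects (no new messages are created, routes are finite), so a changed boundary behaviour necessarily perturbs the conflict-free schedule of the initial plans and forces some critical section to be occupied out of turn, blocking a later scheduled entry and stalling that sender in \textit{Occupied} at $\mathit{clock}=\mathit{travT}$ with no enabled transition. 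Establishing this ``deviation $\Rightarrow$ reachable deadlock'' claim rigorously is the main obstacle, and I would isolate it as a separate sub-lemma.

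Granting the lemma, both directions are short. For ``$\N_{C_{a,i}}\bowtie\N_{C_j\downarrow_{C_i}}\Rightarrow\N_{C_{a,i}}\bowtie\N_{C_j}$'': absence of a reachable deadlock in the interface product forces $C_{a,i}$ to perform across the boundary exactly the exchanges scheduled in the $\mathit{ERS}$ lists, at the scheduled times and channels — any deviation would disable an $aa_k$ edge or strand $C_{a,i}$ in \textit{Occupied}; by the lemma, $C_j$ in the environment product then sees exactly the boundary inputs it saw in the pre-change model, hence runs as it did there, which is deadlock-free by the standing assumption that the unchanged system is compatible, and $C_{a,i}$, facing an identical interface, is deadlock-free too. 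For the converse I argue the contrapositive: a reachable deadlock in the interface product is witnessed by some $aa_k$ stuck at $q_0$ (or by $C_{a,i}$ itself) because a scheduled exchange fails; taking the earliest such failure along the run — up to which every boundary exchange was the scheduled one and is therefore mirrored inside $C_j$ by the lemma — the sub-lemma lifts it to a reachable deadlock in $\N_{C_{a,i}}\otimes\N_{C_j}$. Iterating over all environment components completes the proof.
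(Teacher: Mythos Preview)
Your plan is sound and, frankly, considerably more careful than the paper's own argument. The paper proves both directions by a two-line contradiction: if one product deadlocks, it asserts that the deadlock must be witnessed by some border (augmented) environment actor failing to send or receive on schedule, and then asserts that the corresponding actor on the other side fails in the same way. There is no separate lemma, no trace-equivalence statement, and no attempt to justify why a deadlock in $\N_{C_{a,i}}\otimes\N_{C_j}$ must occur at a border actor rather than somewhere inside $C_j$, nor why a stuck $aa_k$ forces a reachable deadlock in the full environment product given that the genuine $a_k$ is input-permissive in its \textit{Free} location.

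Your decomposition into a boundary-trace-equivalence lemma plus a ``deviation $\Rightarrow$ reachable deadlock'' sub-lemma is exactly the technical content the paper's proof skips over, and your identification of the sub-lemma as the crux is accurate: the paper silently relies on the standing assumption that the initial plans are globally conflict-free and that any off-schedule boundary crossing eventually jams a critical section, but never states or proves this. What your approach buys is an honest account of where the argument could fail and a modular structure (one $C_j$ at a time, in context) that would actually support a rigorous proof; what the paper's approach buys is brevity, at the cost of leaving precisely the point you flag as ``the main obstacle'' unaddressed. If you carry your sub-lemma through, you will have proved strictly more than the paper does.
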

\begin{proof}
	"if": By contradiction. Suppose that the networks of TIOAs of the adapted component $C_i$ and its environment components are compatible, but $\N_{C_i} \bowtie \N_{C_{j_1} \downarrow_{C_i}} \bowtie \cdots \bowtie \N_{C_{j_n} \downarrow_{C_i}}$ , where $C_{j_k} \in \mathit{Env(C_i)}$ and $k=1,\cdots,|\mathit{Env(C_i)}|$, does not hold. 
	It means that there exists an interface component $C_{j_l}\downarrow_{C_i},  C_{j_l} \in \mathit{Env(C_i)}$, and an augmented environment actor $\mathit{aa}_j \in {C_{j_l}\downarrow_{C_i}}$ such that this actor is  not able to either  receive an expected message from an expected port at a pre-specified time or send a message over a pre-specified port at a pre-specified time. Let $\mathit{aa_j}$ corresponds to the environment actor $a_j$. However, the actor $a_j$ belongs to $C_{j_l}$, e.g. $a_j \in A_{j_l}$. This means that $\N_{C_i} \bowtie \N_{C_{j_1}} \bowtie \cdots \bowtie \N_{C_{j_n}}$ does not hold, which contradicts the assumption. 
	
	"only if": By contradiction. Let the networks of TIOAs of the adapted component $C_i$ and its interface components be compatible, but $\N_{C_i} \bowtie \N_{C_{j_1}} \bowtie \cdots \bowtie \N_{C_{j_n}}$ , where $C_{j_k} \in \mathit{Env(C_i)}$ and $k=1,\cdots,|\mathit{Env(C_i)}|$, does not hold. We assumed that the adaptation results in a new network of compatible TIOAs for the component $C_i$. Furthermore, as each component $C_j \in \mathit{Env(C_i)}$ is not yet affected by a change, all TIOAs in $\N_{C_j}$ are compatible. Therefore, there exists $C_j \in \mathit{Env(C_i)}$ and an  environment actor $\mathit{a}_j \in A_j$ such that this actor is not able to either receive an expected message from an expected port at a pre-specified time or send a message over a pre-specified port at a pre-specified time.  However, this actor corresponds to an actor of $C_j \downarrow_{C_i}$. This means that $\N_{C_i} \bowtie \N_{C_{j_1} \downarrow_{C_i}} \bowtie \cdots \bowtie \N_{C_{j_n} \downarrow_{C_i}}$ does not hold, which contradicts the assumption. 
\end{proof}

\section{Implementing and Evaluating Magnifier Using Ptolemy II} \label{sec::evaluation}

In this section, we briefly describe the implementation of Magnifier 
for an ATC case study with several control areas in Ptolemy II \cite{ptolemaeus2014system} as a proof of concept for effectiveness and efficiency of the work. 
The main reason for using Ptolemy II instead of \textsc{Uppaal} is that Ptolemy enables us to automate the iterative and incremental process of Magnifier using its so called \textit{director}. The change propagated through the system can be automatically traced, and the  verification scope can be extended to bring more components into the analysis. Here, we first give a background on Ptolemy II. We then  describe our implementation and  compare the time consumption and the memory consumption between the compositional and non-compositional approaches. The Ptolemy model and implementations of the provided algorithms in this section are available online\footnote{ \url{http://www.ce.sharif.ir/~mbagheri/MagImp.zip}}.

\subsection{Background: Ptolemy Framework} \label{subsec::Ptolemy}


Ptolemy II \cite{ptolemaeus2014system} is an actor-based modeling and simulation  framework that provides different models of computation  with fully deterministic semantics. A model of computation defines the semantics of interactions of  actors, and is implemented in a Ptolemy director. In \cite{cooda}, we developed a Ptolemy template based on the coordinated adaptive actor model to model and analyze self-adaptive TTCSs. In this template, each sub-track is modeled by a Ptolemy actor, and the moving objects are considered as messages passing among the actors. The pathways between the sub-tracks are modeled by interconnections between the actors. The Ptolemy actors are connected through channels. An actor can read a message form an input channel and send a message over an output channel.  Furthermore, the controller (coordinator) is modeled by a Ptolemy director. In \cite{cooda}, we developed a director by extending the Discrete Event (DE) director. DE is one of the most commonly used models of computation in Ptolemy. 

The DE director has a buffer of events. Each message communicated in the model is packaged in an event. An event besides a message  has a time tag and a reference to the receiver actor in the communication. In fact, instead of direct message passing among the actors, upon sending a message, an event is created and is placed into the internal buffer of the director. The director keeps the model time and stamps the event with the model time at which the message has been sent. 
An actor can execute the \textit{fireAt} instruction to ask the director to trigger the actor at a future time. In this case, an event with an empty message is generated for the director. 
The director labels the event with the requested future time. 
All the events tagged with the value of the current model time are enabled events.
The director takes an enabled event  from
its buffer, triggers the actor referred to by the event, and delivers the message to the actor. 
To choose between a set of enabled events, a deterministic policy (the so-called topological sort) is used.  
If there is no enabled event, the model time progresses to reach the time of the event with the smallest time tag. 

\subsection{Ptolemy Implementation of Magnifier}

In \cite{cooda}, we implemented CoodAA and its MAPE-K architecture by extending the DE director in Ptolemy II. We modeled  ATC  and  railroad examples, and we checked different rerouting plans in different situations. The difference between the examples is in the number of ports of actors and the topology of their bindings. 
Our analysis in \cite{cooda} was based on the simulation engine provided by  
Ptolemy II.
In this paper, we use the Ptolemy II template proposed in \cite{cooda}, and extend the DE director to develop  the Magnifier director that supports formal verification. In each iteration, after replanning, Magnifier builds the state space to check the compatibility of components.

%
Our implementation is faithful to the TIOA semantics of Section~\ref{section::sem}. As described in Section~\ref{subsec::Ptolemy}, the Ptolemy director keeps a buffer of events and therefore it can adapt the network by manipulating the messages encapsulated in the events. The Magnifier director mimics the non-deterministic selection of actors for being executed. 
From a set of enabled events, the Magnifier director takes one of them non-deterministically. Finally, if there is no enabled event (i.e. with the  time tag equal to the current model time), the model time progresses. 
This non-determinism is shown in Fig.~\ref{fig::automata2}\protect\subref{fig::automata::simplified} since from a set of enabled edges in a network of TIOAs (i.e. multiple actors), an edge is selected non-deterministically. 

In our implementation, each actor (sub-track) is modeled by a Ptolemy actor. An actor is first triggered to receive the message corresponding to a moving object. Then, the actor executes the fireAt instruction and asks the director to trigger it again at a future time to send its message out. This way the traveling time of the moving object through the sub-track is modeled. 
In the proposed semantics in Section~\ref{section::sem}, there is no centralized buffer as messages are directly delivered to the actors at the same model time.
In our implementation, like in our semantics, the events created per sending a message are processed at the same model time at which the message has been sent.  
The global model time in Ptolemy mimics the synchronous progress of time of clocks in a network of TIOAs. 

The magnifier director  provides the assertion-based verification. It generates the state space of a given component, and performs the reachability analysis. 
For the sake of simplicity, we assume that all the coordinators of all components (the ATC controllers of all areas) have the same adaptation policy (rerouting algorithm). This way, we have only one coordinator (instead of a nested model and multiple coordinators). 
The Magnifier director generates the state space of the model of 
an ATC example 
with several components, where the components are composed to create a new  component. The rerouting algorithm and the algorithm given in the following section are implemented in the director.  It is notable that designing the rerouting algorithm is not the concern of this paper. 

\subsubsection{Generating the State Space}
Here, we explain the algorithm to generate the state space. We also present the pseudocode of this algorithm. 

\textbf{Algorithm.} 
The algorithm to generate the state space of a component is shown in Algorithm.~\ref{algo::all}. Let the initial state of the component be a timed state. We call a state a timed state if a time transition is enabled at the state. The algorithm uses a queue to store the timed states (line 3).  
It dequeues a timed state (line 5), and after progressing the time (line 6), uses  Depth-First Search (DFS) to generate all the traces starting with that state and ending with the new timed states (line 8). The algorithm terminates whenever no new timed state is generated (lines 9-12). Otherwise, the generated timed states are added to the end of the queue (line 13). 
The function \emph{timeProg} progresses the time to the smallest time of the events stored in the buffer of the director in state $s$, and the function \emph{deQueue} removes and returns the first state of the queue.

The main computation of the algorithm is performed by the function \emph{depthFS}, presented in Algorithm.~\ref{algo::dfs}. 
Let \emph{buffer(s)} denotes the buffer of the events kept in the director in state $s$. The function \emph{depthFS} takes an event (line 11), and using the function \emph{trigger}, triggers the actor referred to by the event to generate the next state (line 12). Compared to state $s$, the buffer of the director in the next state stores the events that are possibly created by triggering the actor. Furthermore, the taken event is removed from the buffer in the next  state. 
The state $s$ has several outgoing transitions (resp. several next states) if several actors can be triggered at the state. The next state is returned if it is a timed state (lines 7-10). Otherwise, the actors which can be triggered in the next state are triggered.  The algorithm to generate the state space terminates whenever one of the following conditions is fulfilled: all the moving objects supposed to travel through the component depart from it (reach their destinations), a disaster happens (i.e. the fuel of a moving object is zero), and the analysis time passes a threshold. These conditions are checked using the \emph{terminate} function over a state (lines 3-6). 

We use this algorithm to also generate the state space of a set of components in the non-compositional approach.
%
It is notable that the state space of the system does not have a Zeno behavior, since the travel of every moving object across a sub-track takes time.  The minimum progress of the time in our model is assumed to be one unit.

\begin{algorithm}
	\DontPrintSemicolon 
	\SetKwFunction{algo}{All}
	\KwIn{$s_0$ as the initial state of the component}
	\KwOut{\emph{stateSpace} that is the state space}
	\Begin{
		$\mathit{stateSpace} \gets \{s_0\}$\;
		$\mathit{queue} \gets \{s_0\}$\;
		\While{$\mathit{queue} \neq \emptyset$}{
			$s \gets \mathit{deQueue(queue)}$\;
			$s' \gets \mathit{timeProg(s)}$\;
			$\mathit{stateSpace} \gets \mathit{stateSpace} \cup \{s'\}$\;
			$\mathit{states} \gets \mathit{depthFS(s')}$\;
			\If{$\mathit{states}=\emptyset$}{
				\nl \KwRet $\mathit{stateSpace}$\;
			}
			$\mathit{queue} \gets \langle \mathit{queue} | \mathit{depthFS(s')} \rangle$\;
		}
		\nl \KwRet $\mathit{stateSpace}$\;
	}
	\caption{Algorithm to generate the state space}
	\label{algo::all}
\end{algorithm}

\begin{algorithm}
	\DontPrintSemicolon 
	\SetKwFunction{algo}{All}
	\KwIn{$s$ as a state}
	\KwOut{\emph{timedStates} as a set of timed states}
	\Begin{
		$\mathit{timedStates} \gets \emptyset$\;
		\If{$\mathit{terminate(s)}$}{
			\nl \KwRet $\emptyset$\;
		}
		\If{$\mathit{timeStat(s)}$}{
			\nl \KwRet $\mathit{\{s\}}$\;
		}
		\ForEach{$ e \in \mathit{buffer(s)}$}
		{
			$s_1 \gets \mathit{trigger(s,e)}$\;
			$\mathit{stateSpace} \gets \mathit{stateSpace} \cup \{s_1\}$\;
			\If{$depthFS(s_1)=\emptyset$}{\nl \KwRet $\emptyset$\;}
			$\mathit{timedStates} \gets \mathit{timedStates} \cup \mathit{depthFS(s_1)}$
		}
		\nl \KwRet $\mathit{timeStates}$\;
	}
	\caption{depthFS}
	\label{algo::dfs}
\end{algorithm}

\textbf{Composition.} Assume that a storm happens at time $t$, and the component $C_1$ is affected by the storm. 
Also, assume that flight plans of the aircraft are not necessarily the initial flight plans and have been adjusted based on the data monitored from the system. We use flight plans of the aircraft to extract the state of the system, describing positions of the aircraft in the traffic network at time $t$. We then set the initial state of $C_1$ to the state of the system. Similarly, we use flight plans of the aircraft to obtain states of environment actors of  $C_1$ by identifying the aircraft entering into $C_1$ in the future. These environment actors will directly send messages to $C_1$ at times $t'$, $t' \geq t$, and expect to receive messages from $C_1$ at times $t'$, $t' \geq t$. To generate the state space, we first compose $C_1$ with its environment actors to create a new component. 
We then use the above algorithm to generate the state space of the new component. 
If the composition of the state spaces of $C_1$ and the environment actors does not reach a deadlock state, the state space of the new component is successfully generated. Suppose the case in which the composition reaches a deadlock state. In this case, assume that an environment actor belonging to the environment component $C_2$ is not able to send its message at the pre-specified time $t'', t'' \geq t $, to $C_1$. This means that the change is propagated from $C_1$ to $C_2$ at time $t''$. We repeat the same procedure as $C_1$ to set the state of $C_2$ to the state of the real system at time $t''$. We also set states of those environment actors that send messages to $C_2$ at the times greater than $t''$.  Therefore, from $t''$ on, we have a component, composed of $C_1$, $C_2$, and several environment actors, whose state space is generated. 
This procedure terminates whenever the algorithm reaches a state in which all moving objects supposed to travel across the new component depart from it at their pre-specified times. In other words, the model has a trace during which all messages are received from the new component at the pre-specified times.

\subsection{Experimental Setting}
To compare the compositional and non-compositional approaches, we focused on an ATC example with a $\mathit{n \times n}$ mesh map, where the location of each sub-track is shown by the pair $\mathit{(x,y)}$ in the mesh. We also considered $\mathit{2\times (n-1)}$ source airports (each one is connected to a sub-track whose location is the pair $\mathit{(0,i)}$ or $\mathit{(i,0)}$, $\mathit{0 \leq i < n}$ ), and $\mathit{2 \times (n-1)}$ destination airports (each one is connected to a sub-track whose location is the pair  $\mathit{(n-1,i)}$ or $\mathit{(i,n-1)}$). We developed an algorithm to generate the initial flight plans of $m$ aircraft, and an algorithm (an adaptation policy) to reroute the aircraft as follows. The pseudo-codes of algorithms are given in the appendix.

\textbf{ALG1: Generating the initial plans.} 
This algorithm randomly generates the source $\mathit{(x_s,y_s)}$, the destination $\mathit{(x_d,y_d)}$, and a departure time from the source airport for each aircraft. The departure times follow an exponential distribution with the parameter $\lambda$. The time difference between two subsequent departures from a source airport should not be less than the flight time $\mathit{FD}$, which shows the traveling time of an aircraft across a sub-track. The aircraft $A$ can travel through the sub-track with the location $(x,y)$ if $A$ has no time conflict with the aircraft $B$, which is also supposed to travel across $(x,y)$. Similar to the XY routing algorithm \cite{chawade2012review}, ALG1 attempts to find a route from $\mathit{(x_s,y_s)}$ to $\mathit{(x_d,y_d)}$ by first traversing the X dimension and then traversing the Y dimension of the mesh. ALG1 switches its traversing direction from X to Y whenever the aircraft has a time conflict with another aircraft along the X dimension. 
If from the location $\mathit{(x,y)}$ the aircraft has a time conflict along  both dimensions, ALG1 
backtracks  to a location before $\mathit{(x,y)}$ on the route, where it can switch its traversing direction to Y to travel through  a new location. 
This procedure continues until a route is discovered. ALG1 does not guarantee to find the most efficient (e.g. shortest) route.

\textbf{ALG2: Rerouting algorithm.} This algorithm attempts to find a route with the same length as the initial route of the aircraft. Assume that the aircraft is going to leave the location $\mathit{(x_0,y_0)}$ and the rest of its route is $\mathit{[(x_1,y_1),(x_2,y_2),\cdots,(x_n,y_n)]}$. Also, assume that the sub-track $T$ with the location $\mathit{(x_1,y_1)}$ is unavailable, and the moving object is not able to travel through it. In the ATC example, a sub-track is unavailable if it is stormy or is occupied by another aircraft. The algorithm finds a neighbor of $(x_0,y_0)$, e.g. the sub-track $T'$, that is available. 
It then tries to find a route with the length 2 from $T'$ to  $\mathit{(x_2,y_2)}$. If there is no such route, it attempts to find a route with the length 3 from $T'$ to $\mathit{(x_3,y_3)}$, and so on. The algorithm uses the same procedure as ALG1 to find a route. It first traverses the X dimension and then traverses the Y dimension of the mesh. If a route from $T'$ to $\mathit{(x_i,y_i)}$, $2 \leq i \leq n$, is found, the route is concatenated with the rest of the route of the aircraft from $\mathit{(x_{i+1},y_{i+1})}$ to $\mathit{(x_n,y_n)}$, and the resulting route is returned.  
It is possible that there is no available sub-track $T'$, or using the above approach, a route with the same length as the initial route is not found. 
Then, the algorithm selects a neighbor of $(x_0,y_0)$ even if it is occupied, and returns a route from it to $(x_n,y_n)$. The aircraft will stay one more unit of time in $(x_0,y_0)$, and will fly based on its new route. If using the both above approaches no route is found, the moving object will stay one more unit of time in $(x_0,y_0)$, and then will fly based on its initial route if $T$ becomes available. Otherwise, the procedure of ALG2 repeats. 
 Although ALG2 uses the same procedure as ALG1 to find a route, 
 it avoids the stormy track, but
 it does not check the time conflict  with other
aircraft in the future.  If a potential conflict is detected, we will take care of it by rerouting the aircraft upon detecting the conflict. 

\textbf{Scenarios.} Different parameters such as the rerouting algorithm, 
the time of the storm, the place of the storm, the network traffic volume, the amount of concurrency arisen from flight plans of the aircraft, and the network dimension change the results of  experiments. Because the traffic network of the ATC domain has a cascaded architecture, 
the place of the storm can be typically approximated as the middle of the traffic network. Therefore, we select the middlemost sub-track of the network as the place of the storm.  We perform three sets of experiments; (ES1)
that is to compare the time and memory consumptions
between the compositional and non-compositional approaches, (ES2) that is to depict the variation of the time consumption in a set of experiments for each approach, and
(ES3) that is to compare the scalability of the approaches. The scenarios are described in the following.  
In our experiments, we assume that FD as the traveling time of an aircraft across a sub-track is one. We also assume that the aircraft consumes one unit of fuel per one unit of the traveling time. Furthermore, the fuel of each aircraft is more than the length of the longest path in the traveling network.

\noindent\textbf{(ES1)}. 
We consider a $15 \times 15$ mesh structure, divided into $9$ regions of $5 \times 5$,  as the traffic networks in (ES1). 
The fuel of each aircraft  is set to $325$. We use ALG1 to generate 150 batches of flight plans per each $\lambda$ in $\{0.5, 0.25, 0.125\}$, where $\lambda$ is the parameter of the exponential distribution to generate departure times of the aircraft from source airports.  By increasing the value of $\lambda$, the mean interval time between two departures decreases. As a result, the network traffic volume  and subsequently the concurrency contained in the model might increase. We expect that  the compositional approach performs better than the non-compositional approach even in a low concurrency model.
%
%
Each generated batch contains flight plans of 2000 aircraft. Per each batch $P_i, 1\leq i \leq 150$, 
we generate 4 batches $P_{ij}, 1\leq j \leq 4$, such that $P_{i1}$ contains the first 500 flight plans of $P_i$, $P_{i2}$ contains the first 1000 flight plans of $P_i$, and so on.  We use  both approaches to analyze each batch $P_{ij}$ per each time of the storm in $\{100,200,400,600,800\}$. Obviously, whenever the storm occurs late, the most of the moving objects have arrived at their destinations. We remove the batch $P_i$ from the experiments of both approaches if for a batch $P_{ij}$ and a time of the storm, the model in one of the approaches is not deadlock-free, or its verification time passes the threshold (the results of experiments in which the models are not deadlock-free are investigated in (ES2)).  Table.~\ref{tab1} shows the number of experiments in which both approaches do not face a deadlock and both approaches analyze the model in less than a threshold (No Deadlock or TimeOver), the number of experiments in which both approaches face a deadlock (With Deadlock), and the number of experiments in which the verification time in both approaches passes a threshold (TimeOver). The threshold of the analysis time is set to an hour. In our experiments, per each $j$, we calculate the averages of the analysis time and the number of states of the batches $P_{ij}$.  
\begin{table}
	\caption{ The number of experiments in which the model in both approaches faces a deadlock (With Deadlock), in which the model does not face a deadlock and its analysis time is less than a threshold (No Deadlock), in which the verification time passes the threshold (TimeOver). The traffic network has a $n \times n$ mesh structure. $\lambda$ is the parameter of the exponential distribution to generate the departure times of the aircraft.}{
		\begin{tabular}{|m{1pc}|m{1.5pc}|m{4pc}|m{5pc}|m{3pc}|}
			\hline
			n& $\lambda$ & With Deadlock & No Deadlock or TimeOver & TimeOver \\ \hline
			15 & 0.5& 27& 120 & 3\\ \hline 
			15 & 0.25& 37&110& 3\\ \hline 
			15 & 0.125& 56& 94& 0\\ \hline
		\end{tabular}
	}
	\label{tab1}
\end{table}

\noindent\textbf{(ES2)}. The traffic network in (ES2) has the same configuration as the traffic network in (ES1). 
In (ES2), we use the batches of flight plans generated in (ES1) for $\lambda=0.5$. The reason for considering  $\lambda=0.5$ is that   the network might have the highest traffic volume for $\lambda=0.5$ compared to  $\lambda \in \{0.25, 0.125\}$. We use both approaches to analyze each batch $P_i$, containing the flight plans of 2000 aircraft. Since the possibility of propagating the change increases when the storm happens early, we suppose that the storm happens at time 100.   As shown in Table.~\ref{tab1}, the model in 120 experiments is deadlock-free and is analyzed in less than the predefined threshold. Compared to (ES1) that calculates the average of the analysis time for this set of experiments, (ES2) illustrates the variation of the analysis time in this set for each approach.  Furthermore, (ES2) depicts the variation of the time consumption to detect a deadlock in 27 experiments that are not deadlock-free.

\noindent\textbf{(ES3)}.  As the aim in (ES3) is to compare the scalability of the approaches, we consider a larger traffic network that is a $18 \times 18$ mesh structure with  $9$ regions of $6 \times 6$ in our experiments. The fuel of each aircraft  is set to $425$. We assume that the change happens at time 100. We use ALG1 to generate a batch $P$ of 5500 flight planes with $\lambda=0.5$.  
In (ES3), we start with the first 100 flight plans of $P$, and gradually increase the number of flight plans to compare the scalability of two approaches. The scalability of the approaches is measured by the number of the aircraft. To this end, We define a threshold for the verification time and set this threshold to 45 minutes. The approach that can analyze a model with more number of the aircraft in less than the defined threshold is more scalable.

\begin{figure*}
	\centering
	\subfloat[]{{\includegraphics[width=0.32\linewidth,height=4cm,keepaspectratio]{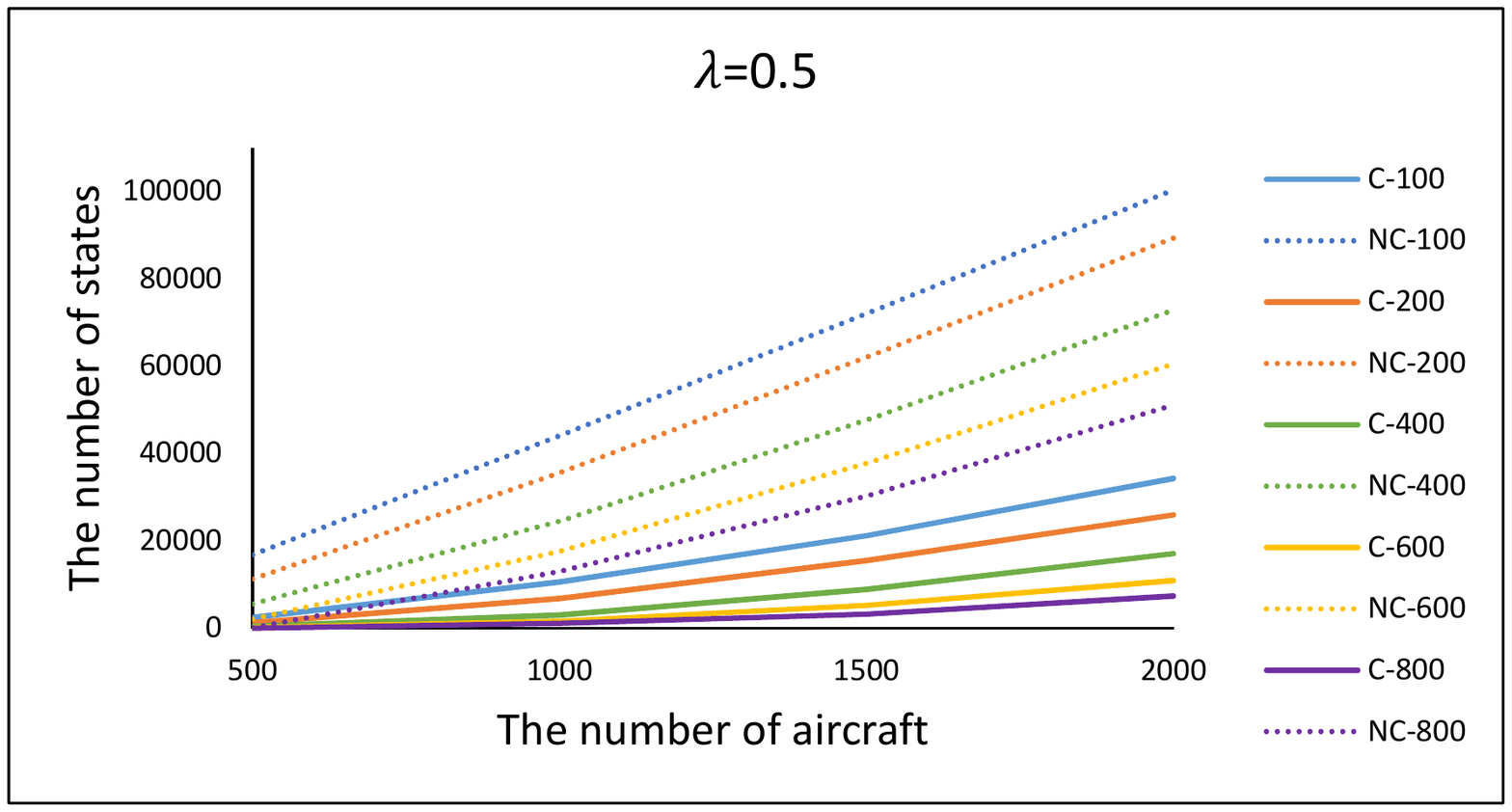} }}%
	\subfloat[]{{\includegraphics[width=0.32\linewidth,height=4cm,keepaspectratio]{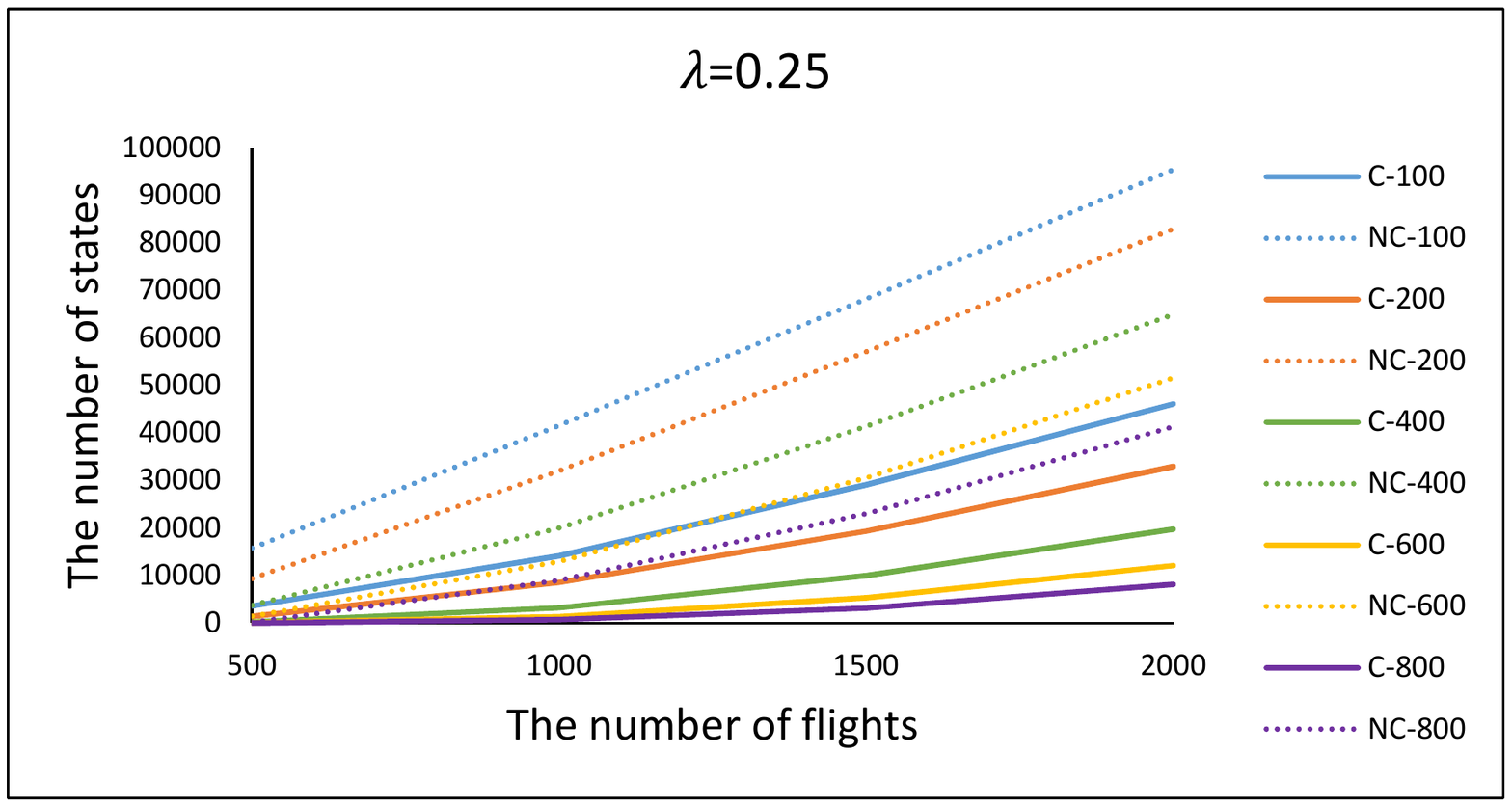} }}%
	\subfloat[]{{\includegraphics[width=0.32\linewidth,height=4cm,keepaspectratio]{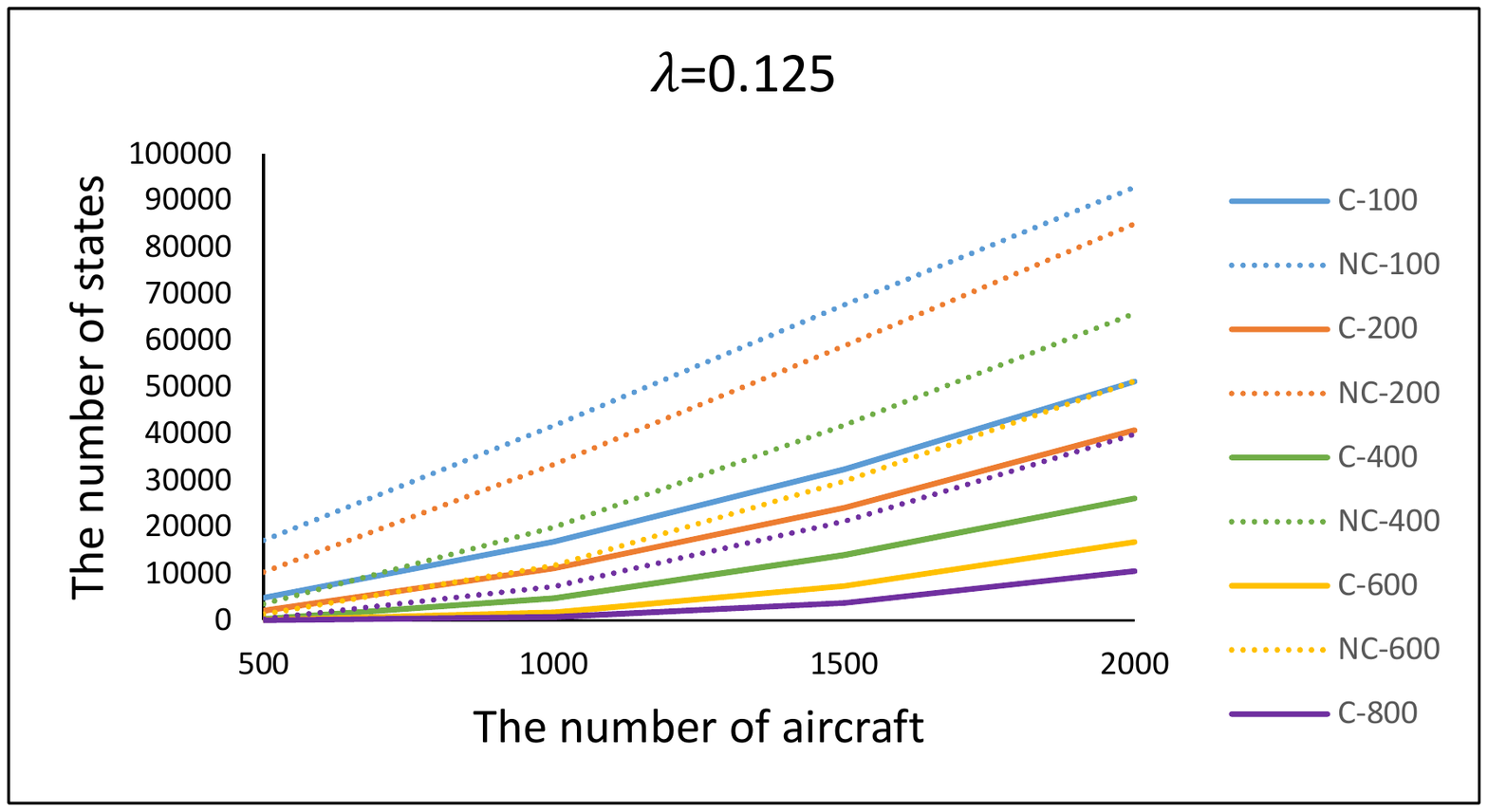} }}%
	\caption{The number of states in (ES1) for each value of $\lambda$ in $\{0.5,0.25,0.125\}$, where $\lambda$ is the parameter of the exponential distribution to generate the departure times of the aircraft. The notations $C$ and $\mathit{NC}$ refer to the compositional and non-compositional approaches, respectively. The time at which the storm happens varies in the set $\{100,200,400,600,800\}$. As an instance, $C-100$ depicts the results of the compsitional approach when a storm occurs at time 100.}%
	\label{15-0.5}%
\end{figure*}

\begin{figure*}
	\centering
	\subfloat[]{{\includegraphics[width=0.47\linewidth,height=4cm]{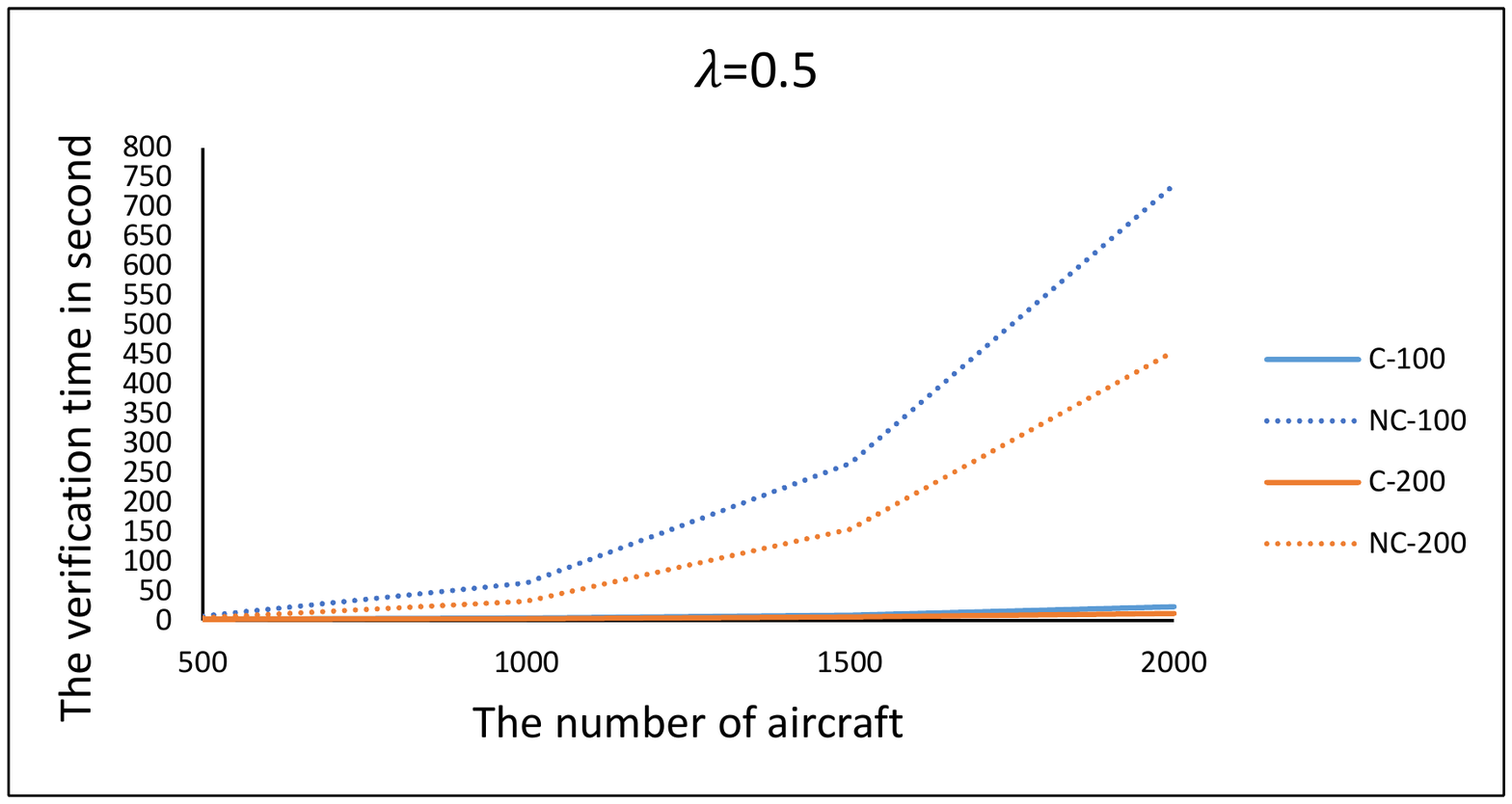} }}%
	\subfloat[]{{\includegraphics[width=0.47\linewidth,height=4cm]{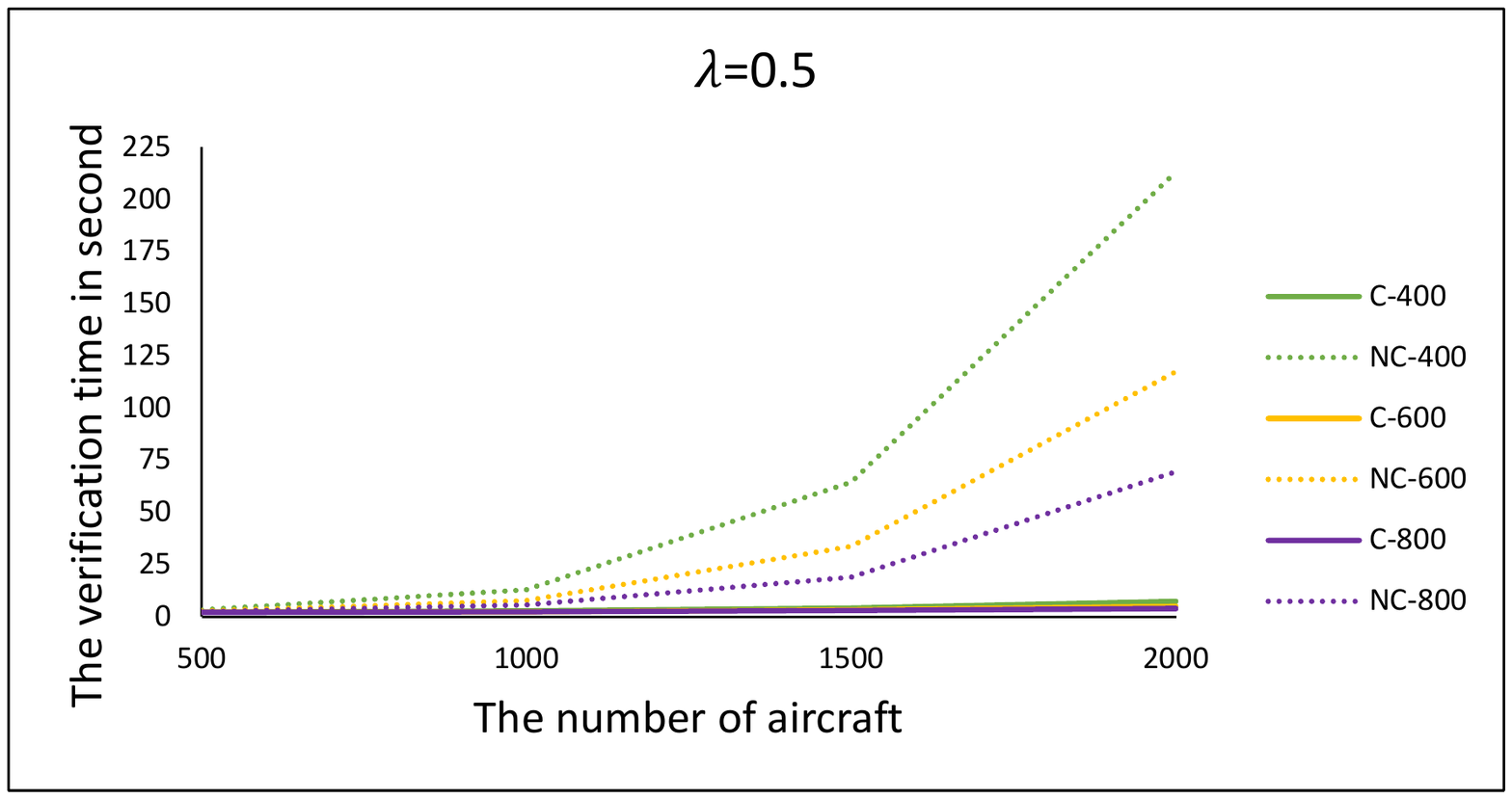} }}%
	\caption{The verification time in (ES1) for $\lambda=0.5$. The left side depicts the verification time in the compositional (C) and non-compositional (NC) approaches when a storm occurs at a time in $\{100,200\}$. The right side depicts the verification time of each approach when a storm occurs at a time in $\{400,600,800\}$. The right and the left side figures show the verification time with different scales.}%
	\label{15-0.25}%
\end{figure*}
\begin{figure*}
	\centering
	\subfloat[]{{\includegraphics[width=0.32\linewidth,height=4cm,keepaspectratio]{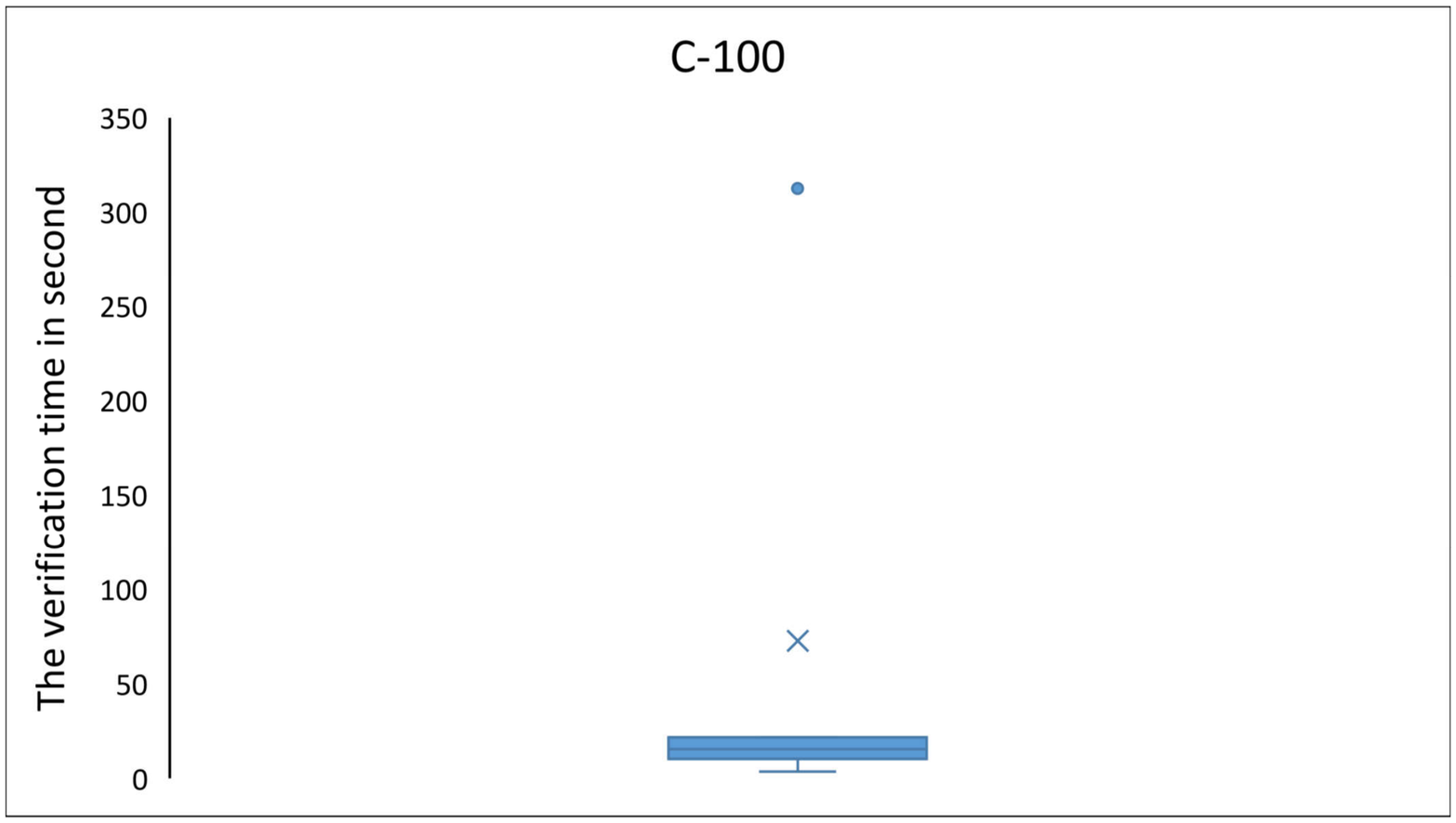} }}%
	\subfloat[]{{\includegraphics[width=0.32\linewidth,height=4cm,keepaspectratio]{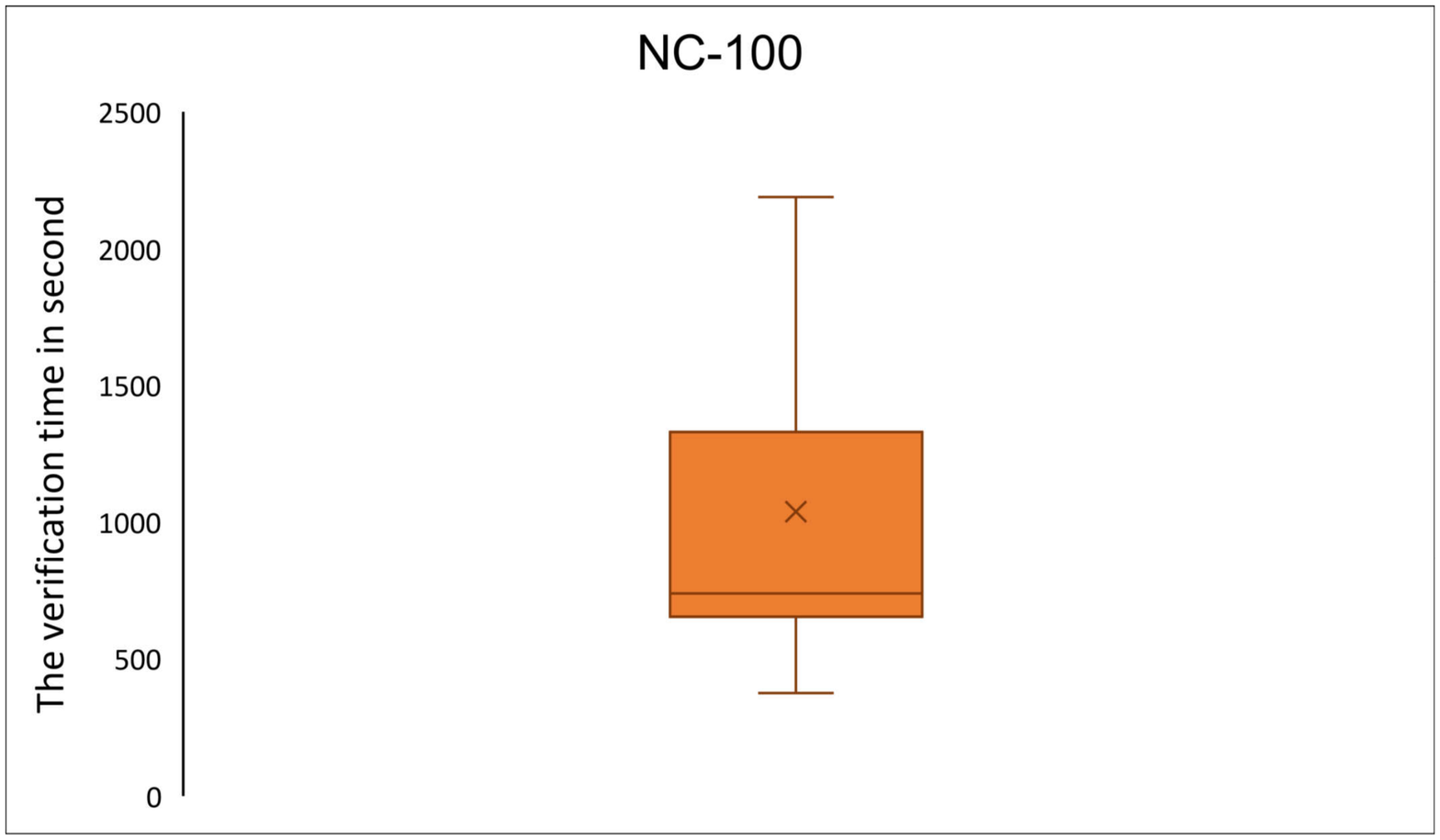} }}%
	\subfloat[]{{\includegraphics[width=0.32\linewidth,height=4cm,keepaspectratio]{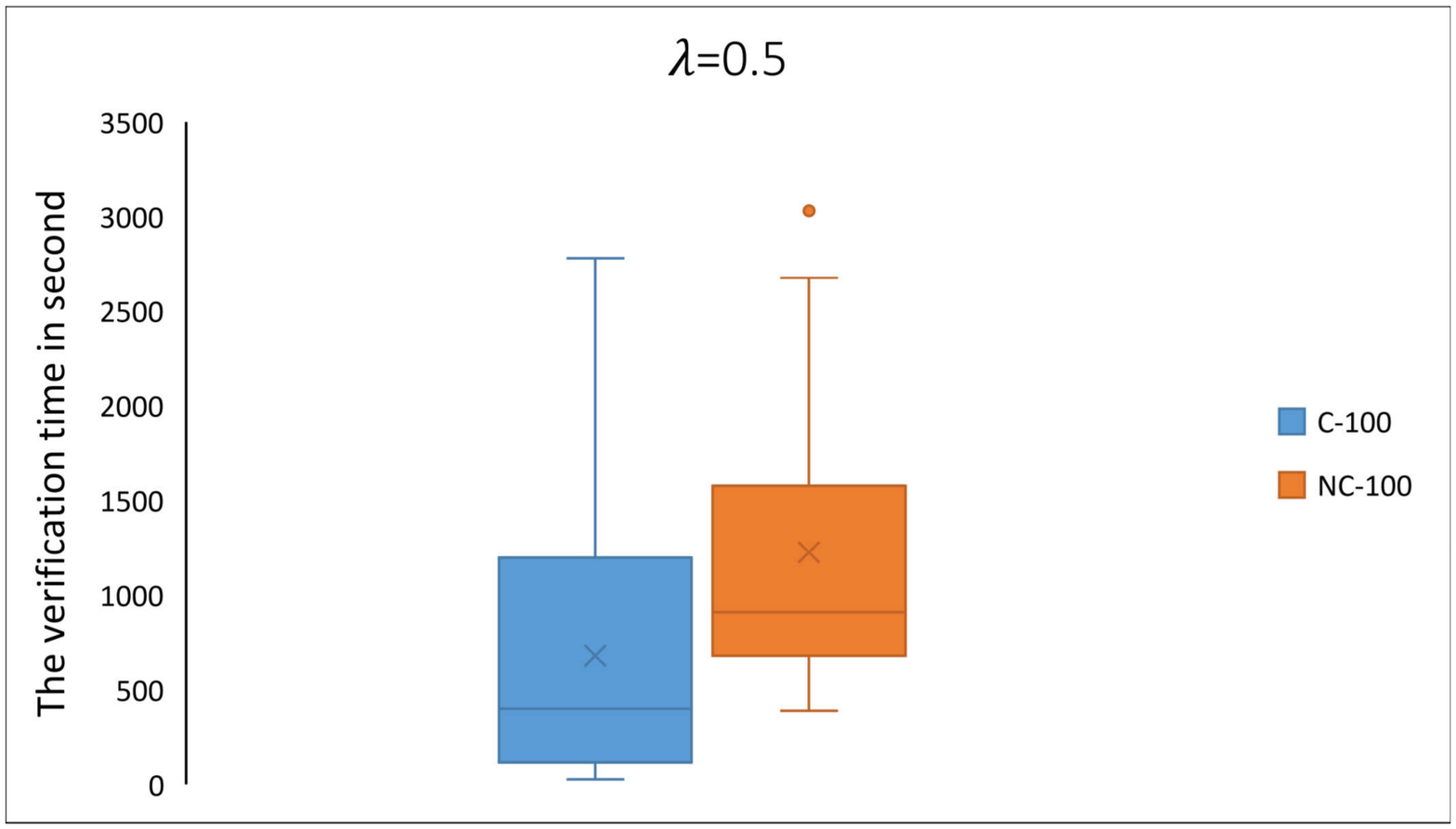} }}%
	
	\caption{The verification time in (ES2) for $\lambda=0.5$. The storm occurs at time 100. The variations of the time needed to verify the experiments with no deadlock using the compositional (C) and non-compositional approaches (NC) are depicted in parts (a) and (b), respectively. The variations of the time needed to detect a deadlock using both approaches are depicted in part (c).}%
	\label{15-0.125}%
\end{figure*}

\begin{figure*}
	\centering
	\includegraphics[width=\linewidth,height=4cm]{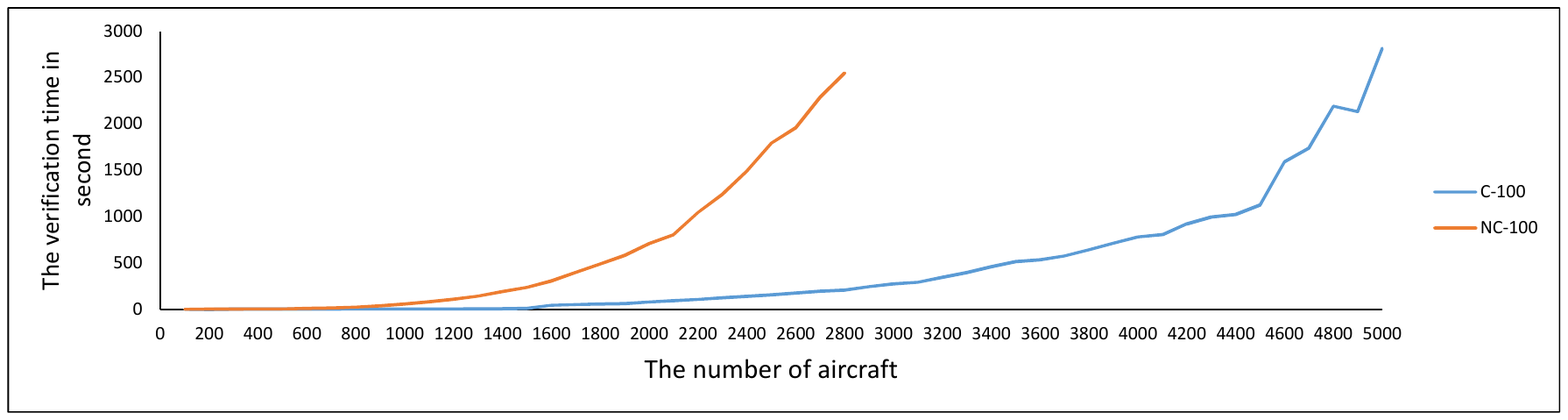}
	\caption{The  scalability of compositional (C) and non-compositional (NC) approaches in (ES3). Both approaches are run for the same  scenario with $\lambda=0.5$. The storm occurs at time 100. The scalability is measured in the number of aircraft, while the verification time is set to a threshold. The non-compositional approach is not able to verify a model with more than 2800 aircraft in a time less than the defined threshold.}
	\label{scalability}
\end{figure*}

\subsection{Comparison of the Magnifier approach and the Non-compositional Approach }

We run our experiments on an ubuntu 18.04 LTS amd64 machine with 67G memory and Intel (R) Xeon (R) CPU E5-2690 v2 @ 3.00GHZ. A part of our experimental results are shown in Figures  ~\ref{15-0.5}, ~\ref{15-0.25}, ~\ref{15-0.125} and ~\ref{scalability}. In these figures, "C" and "NC" refer to the compositional and non-compositional approaches, respectively. The legend entry $\mathit{C-i}, i \in \{100,200,400,600,800
\}$ depicts the experimental results of the compositional approach for the time $i$ at which the storm happens. The legend entry $\mathit{NC-i}$ depicts the results for the non-compositional approach. As shown in Fig.~\ref{15-0.5} and Fig.~\ref{15-0.25}, using the compositional approach results in decreasing the verification time and the number of states.  As expected, by increasing the number of aircraft, the number of states and accordingly, the verification time increase. 
The same results are valid for the smaller value of the time at which the storm occurs, since fewer aircraft have arrived at their destinations when the storm happens. By increasing the time at which the storm occurs, the differences between the results of the compositional and non-compositional approaches decrease. It is because most of the aircraft have arrived at their destinations when the storm happens late. To have a better representation of the verification time difference between the compositional and non-compositional approaches, we depict the results of the verification time for $\lambda=0.5$ in two diagrams with two different time scales, shown in Fig.~\ref{15-0.25}. As can be seen, the compositional approach is able to verify a model with 2000 aircraft in a few seconds for the smallest value of the time at which the storm happens. By increasing the  time interval between two departures from a source airport, the number of aircraft entering into the traffic network after the storm happens increases. Therefore, as shown in Fig.~\ref{15-0.5}, the number of states in the compositional approach increases whenever the value of $\lambda$ decreases. 
%

The results of our experiments in (ES2) are shown in Fig.~\ref{15-0.125}. The variation of the verification time in a set of experiments with no deadlock when the compositional approach is used is shown in Fig.~\ref{15-0.125}(a). The results of the same set of experiments for the case in which the non-compositional approach is used are depicted in Fig.~\ref{15-0.125}(b). We also depict the variation of the time  needed to detect a deadlock in a set of experiments using the compositional and non-compositional approaches in Fig.~\ref{15-0.125}(c). As shown in Fig.~\ref{15-0.125}(a), excluding the outliers, the model in our experiments is analyzed in less than 22 seconds using the compositional approach, while this time is around 2190 seconds in the non-compositional approach. Also, in our experiments, the average time for detecting a deadlock in the compositional approach is around 11 minutes, while this value in non-compositional approach is around 20 minutes.

We can define the latency of an adaptation policy by defining a threshold over the analysis time. The latency in our approach is defined as the time needed to adapt the model and to check it for correctness properties. We suppose that a human is involved in adapting the system if the  threshold is passed. As an instance, consider that the latency of the adaptation policy in our approach is 3 minutes. From 150 experiments, only 21 experiments (the verification time of 3 experiments pass the threshold, the outlier experiment in Fig.~\ref{15-0.125}(a), and 17 experiments of 27 experiments that face a deadlock) 
need  human intervention in the compositional approach. 
This is while all experiments in the non-compositional approach need more than 3 minutes analysis time. 

The results of our experiment in (ES3) are shown in Fig.~\ref{scalability}. To compare the scalabililty of both approaches, we run both approaches for the same scenario. Furthermore, we define a threshold for the verification time and set this threshold to 45 minutes. 
As can be seen, the non-compositional approach is not scaled for more than 2800 aircraft. 
The results of the compositional approach in Fig.~\ref{scalability} have fluctuations appeared between 4600 to 5000 aircraft. By adding new aircraft to the traffic network, some areas are congested, and consequently the concurrency of the model increases. This event results in some fluctuations and the fast growth of the "C" plot between 4600 to 5000 aircraft. Except for this range, this plot has a normal growth, since by adding the new aircraft, the behaviors of the congested areas has not sensibly changed. 
\section{Related Work} \label{sec::relatedwork}
In this section, we concentrate on four classes of most related studies. The first class is concerned with the theory of interfaces. The second class is about  modeling and verifying traffic control systems. The third class describes the most closely related work that use compositional methods for the verification purpose, and the fourth class is about formal analysis of self-adaptive systems at runtime. 

\textbf{Interface Theory}. The theory of interfaces is a widely studied topic. This theory describes the main features that each component-based design should obey, such as refinement, structural composition, and conjunction. The same as ours, the focus of \cite{10.1007/3-540-45828-X_9} is on the structural composition. The work of \cite{10.1007/3-540-45828-X_9} presents a theory of timed interfaces to explain the timing constraints on inputs and outputs of the components. A timed interface is encoded as a two-player timed game in which the environment as the input player provides inputs for the component and the component as the output player creates outputs. This work proposes an optimistic approach of composition that is two components can work together if there is a helpful environment to make them work together. In other words, two components are compatible if the environment has a winning strategy to avoid immediate and time error states in parallel product of two components. An immediate error state is reachable if a component sends an output that is not acceptable by the other component. A component blocks the progress of the time in a time error state. Similar to the approach of \cite{10.1007/3-540-45828-X_9}, \cite{david2010timed} proposes an optimistic approach for the structural composition of two interfaces specified by Timed Input Output Transition Systems (TIOTSs). In contrast to \cite{10.1007/3-540-45828-X_9}, \cite{david2010timed} 
assumes that the system is input-enabled. The input-enabled assumption is also considered when interfaces are specified by TIOAs \cite{1253264}. The approach of composition in \cite{1253264} is pessimistic. It means that two components should work together in all environments. Optimistic treatment of the composition is also considered in \cite{DBLP:phd/dnb/Bujtor18}, where an interface theory for Modal Input Output Automata is proposed. The approach of composition in \cite{DBLP:journals/corr/abs-1101-4731} is also pessimistic. In \cite{DBLP:journals/corr/abs-1101-4731}, two components are compatible if they do not reach an error state in their parallel product. The interfaces in \cite{DBLP:journals/corr/abs-1101-4731} are specified by Modal Input Output transition systems in which the timing constraints are not specified. 

Compared to the related work, we follow a pessimistic approach that is two components can work together if parallel  product of their TIOAs does not reach a deadlock state. Also the same as \cite{10.1007/3-540-45828-X_9}, we are able to express the input assumptions and there is no need to the input-enabled assumption. Since in our approach  the components do not block the progress of the time, we do not need to define time error states. 

\textbf{Modeling and Verifying Traffic Control Systems (TCSs)}. TCSs such as ATC and train control systems, due to the tight interconnection of the physical plant and the controller software, are mostly categorized as hybrid systems. There is a vast literature on verifying dynamic models of TCSs 
to detect the future conflicts among the moving objects \cite{Bu:2011:TOH:2000367.2000368,4177563,CHENG2016169}, to resolve the potential conflicts through the trajectory planning \cite{Loos:2013:FVD:2461328.2461350,6339013}, and to evaluate the correctness of the communication protocols among different entities of the system \cite{Damm2007,ZHAO2014337,CHEN2011505}. These approaches use the \textit{Lagrangian} models in which 
 the moving objects, e.g. aircraft or trains, along with their operational details are the concern of modeling \cite{menon2004new,10.1007/978-3-030-02146-7_1}. As an instance of this kind of modeling, the model of a train control system is created by composing a set of hybrid automata, where each automaton is the model of a train. Modeling the dynamic behaviors of each moving object in these approaches needs a set of differential equations, which due to the large number of the moving objects, makes the analysis of TCSs difficult \cite{menon2004new}. Furthermore, this approach of modeling is only necessary when we need to have a microscopic view of the traffic for our analysis purposes.

In contrast to the Lagrangian-based approaches, our approach is based on \textit{Eulerian} models in which 
 the regions of the traveling space, e.g. the sub-tracks in \did{track-based systems}, are the concern of modeling \cite{menon2004new,10.1007/978-3-030-02146-7_1}. Although this kind of modeling  may lose some operational details of the moving objects, it is more appropriate for modeling  rerouting/rescheduling of the moving objects \cite{10.1007/978-3-030-02146-7_1}. In other words, the adaptation of \did{the system} in a macroscopic view is concerned with aggregate behaviors of the moving objects, and affects the whole traffic network by rerouting/rescheduling the moving objects \cite{10.1007/978-3-030-02146-7_1}.  Therefore, by modeling each sub-track as an actor, we develop a one-dimensional model of the traveling space instead of a complex multi-dimensional model of the moving objects. The properties of our interest such as preventing a moving object from running out of the fuel, and the arrival of a moving object at its destination at a pre-specified time are handled by adding a few features to the message corresponding to the moving object. For instance, each message carries information about the remaining fuel of the moving object for the rest of its travel, or carries the designated time for the arrival of the moving object at each sub-track in its route.  This approach of modeling not only provides an acceptable fidelity for the problem \cite{cooda}, but also relieves the analysis difficulties. It is notable that a few of the mentioned approaches such as \cite{Bu:2011:TOH:2000367.2000368} verify the system at runtime. However, the approach of \cite{Bu:2011:TOH:2000367.2000368} is not compositional.

Based on the related work, there is an increased interest towards the scheduling and path planning of moving objects in TCSs. A scheduling problem \cite{Behrmann:2005:OSU:1059816.1059823,Rasmussen2006} is the problem of efficiently assigning resources to a set of tasks such that some constraints are met, e.g. two tasks do not simultaneously use the same resource. Although the scheduling and path planning are not concerns of this paper, we briefly study the modeling approach of several work by putting the hybrid and dynamic modeling of TCSs aside. 
 
 In \cite{Behrmann:2005:OSU:1059816.1059823}, the same as a sub-track, a resource is in the idle state or in the use state. The resource maintains its use state until a clock reaches a usage time. In \cite{Behrmann:2005:OSU:1059816.1059823,Rasmussen2006}, Priced Timed Automata (PTA) is used for the scheduling and planning problem. 
  \cite{Rasmussen2006} uses PTA for the aircraft landing problem where a landing time and a runway should be assigned to each aircraft. 
  A minimum delay between two aircraft landing on the same runway should be preserved. 

In \cite{7945013}, a grid-shaped workspace with static obstacles is shared between a set of robots, where tasks of the system to be performed by the robots are created dynamically. The aim of the paper is to compute an optimal collision-free motion plan for a robot whenever a new task is created. To program these applications, the authors use the P programming language. 
The robots and also the plan generator are processes of the P language. 
The concern of this paper is not reducing the model checking time, but decreasing the motion generation time.
A multi-robot system is modeled by a network of timed automata  
in \cite{1302413}. 
The approach  of \cite{1302413} obtains all possible trajectories that move robots on a grid-shaped workspace from their initial locations to their destination locations. To this end, each obstacle, each robot, and each controller associated with a robot is modeled by a separate timed automaton. 
The trajectories are checked against Computational Tree Logic properties in   \textsc{Uppaal}.
%
In \cite{7810369}, a timed automaton for a robot and its environment is created, where each edge shows the movement of the robot and locations are partitions of the space. A  property of interest specified in Metric Interval Temporal Logic 
is transformed into another automaton. The product of the robot automaton and the property automaton is 
given to \textsc{Uppaal}. Any execution that starts from an initial state and reaches a final accepting state is an accepting trajectory. 

Compared to the related work, we model sub-tracks instead of moving objects. This approach seems to be efficient and scalable when an enormous number of moving objects pass through a fixed workspace. Furthermore, we assume that plans are given or calculated using a planning algorithm, and 
obstacles are dynamically generated  at runtime.  However, obtaining optimal plans  regarding the harsh timing constraints is a difficult task.

\textbf{Compositional Methods}. 
%
%
%
In \cite{5069079}, an  Assume-Guarantee based approach for verifying self-adaptive systems at design time is proposed. In \cite{5069079}, the changed component is adapted. Then, a backward reasoning starts and re-generates a new assumption for the adapted component. If the new assumption is weaker than the previous assumption of the component, the adaptation is correct. Otherwise, the reasoning continues on the context of the changed component. If it reaches a null assumption on the context of the system, the adaptation is incorrect. The paper focuses on safety properties of the system, and does not consider the change propagation. The work in \cite{David2012} defines a refinement relation and a weakening operator to check the satisfaction of a property over a real-time system. Each property is divided into a set of subspecifications for which an assumption and a guarantee are  defined. The subspecifications, assumptions, and guarantees are defined by TIOAs. The assumption and guarantee are combined into a contract using the weakening operator. The property is satisfied if subspecifications refine their corresponding contracts and vise versa. This approach is not proposed for verifying self-adaptive systems at runtime and consequently does not consider the change and its effects on the system. 

Magnifying-Lens Abstraction (MLA), presented in \cite{deAlfaro2007}, copes with the state space explosion in obtaining the maximal probabilities over a Markov Decision Process (MDP). 
It partitions the state space into regions, and calculates the upper and lower bounds for the maximal reachability or safety properties on the regions. It magnifies on a region at a time and obtains the values of mentioned parameters by calculating their values for each concrete state.  
Unlike MLA, the bounds of sub-properties in our approach are given through the interface components. 
Furthermore, the change propagation is not a concern in \cite{deAlfaro2007}.
The mechatronic UML (mUML) approach, 
proposed in \cite{Giese2013}, 
 uses the refined UML model component and refined state charts to formally define components of a system, their interactions, and their timing and hybrid behaviors. 
Besides separately checking the safety of each component, mUML checks whether  interfaces of components are well-defined and  components refine their interfaces. In contrast to our approach, this approach is able to model hybrid behaviors, but change propagation is not  considered in  \cite{Giese2013}. Furthermore, as explicitly mentioned in the paper,  verification at runtime is not a concern in \cite{Giese2013}. 


 The ACPS language to design and verify self-adaptive CPSs is proposed in \cite{8595374}. 
 The components of the system are categorized in different groups such that each group affects on satisfaction of one requirement. 
 An adaptation is encoded for each group such that the requirement is preserved. Finally, the ACPS definition of each group is translated to a verification tool and is separately verified. This work assumes that grouping the relevant components to a requirement is possible. 
 In contrast to \cite{8595374}, instead of considering a fixed number of components per each requirement, we increase the verification domain whenever it is needed. Furthermore, \cite{8595374} does not consider the change propagation phenomenon.
 

\textbf{Formal Analysis of Self-adaptive Systems at Runtime}. 
Incremental runtime verification of MDPs, described in the PRISM language, is proposed in \cite{IROPS}, where runtime changes are limited to vary parameters of the PRISM model. An MDP is constructed incrementally by inferring a set of states needed to be rebuilt. The constructed MDP is then verified using an incremental verification technique. 
Runtime verification of parametric Discrete Time Markov Chains (DTMCs) is accomplished in \cite{PVARFSAS}. In this method, probabilities of transitions are given as variables. Then, the model is analyzed and a set of symbolic expressions is reported as the result. By substituting real values of the variables at runtime, verification is reduced to calculating the values of the symbolic expressions. 

In \cite{DWNRFSAS}, a self-adaptive software is designed as a dynamic software product line (DSPL). Then, an instance of DSPL is chosen at runtime considering the environmental changes. This approach uses parametric DTMCs to model common behaviors of the products and each variation point separately. Therefore, there is no need to verify each configuration separately. 
RINGA, introduced in \cite{LEE2018200},  uses Finite State Machines (FSM) to develop a design-time model of a system, and abstracts the model for using at runtime. Each state of the model implements a module, while a transition triggers an adaptation. Each transition is assigned an equation that is parameterized by  environmental variables. The value of the equation is calculated at runtime. Lotus$@$runtime \cite{7968129} 
uses Probabilistic Labeled Transition Systems (PLTS) to develop a model@runtime. It monitors execution traces of the system and updates the probabilities in PLTS. The desirable properties in \cite{7968129} are explained through a source state, a target state, a condition to be satisfied, and the probability of satisfying the condition. 

In comparison to \cite{IROPS,PVARFSAS,DWNRFSAS,LEE2018200,7968129} which use  state-based models, an actor model is in a higher level of abstraction. Our actor-based approach besides decreasing the semantic gap between the model$@$runtime and  applications, 
facilitates the modular analysis of the system. 

The failure propagation is studied in \cite{ Priesterjahn2013} that checks whether the structural adaptation of the system is fast enough to prevent a hazard. 
After an adaptation, it is checked whether the remaining failures in the system lead to a hazard. Our approach, besides detecting a hazard, assures the satisfaction of the timing properties of the system. Based on the circumstances existing at the time the change occurs, 
different thresholds over the analysis time can be imposed.
It is assumed that humans are involved if the adaptation cannot be handled during the expected time. The latency-aware adaptation is studied in \cite{DBLP:conf/sigsoft/MorenoCGS15}, where a probabilistic model checker proactively selects an adaptation strategy to maximize the utility of the system. Unlike  \cite{DBLP:conf/sigsoft/MorenoCGS15}, our focus is on effectively verifying the system behavior. 


The work of \cite{6224401} investigates which state of the system is a safe state to update the implementation of the system whenever an environment assumption is changed. Furthermore, based on the old controller, a new controller is automatically synthesized for the software system. 
The approach of \cite{6224401} is applied on a RailCab system where an accident should be avoided before the RailCabs enter into a crossing. An infrastructure to deploy and execute new controllers on embedded devices is proposed in \cite{LaManna:2015:TED:2820489.2820499}. 
\cite{6224401} does not verify the system after adapting it to a change. 


\section{Discussion and Future Work} \label{sec::conclude}
We proposed Magnifier, a compositional approach that iteratively detects the propagation of a change and  incrementally involves the components affected by a change into the analysis. An adaptation policy may contain the change and prevent the change to be propagated. 
In the worse case,  the change propagates to the whole system and Magnifier needs to compose all components of the model.
We compared the compositional approach of Magnifier and the non-compositional approach in Section \ref{sec::evaluation}.
The comparisons between our model, CoodAA, and other similar models on self-adaptive systems are presented in \cite{cooda,16_CoordinatedActorsForReliableSelfAdaptiveSystems}. In Section \ref{sec::relatedwork}, we included a comparison of  Magnifier with other analysis approaches for TTCS, and other compositional methods.  

Here we include  an observation regarding a comparison between the non-compositional approach and the worst case for Magnifier when the change propagates all the way to include the whole system. 
We argue that even in the worst case Magnifier performs better than the non-compositional approach. Our experiments testify this observation.
Looking more carefully into this comparison and building a formal proof is a part of our future work.
%
%
This observation can be justified as follows. Suppose that a change happens at time \textit{t}. The non-compositional approach involves all actors of the model that have a message at time \textit{t} into the analysis, and starts to generate the state space. In contrast, Magnifier focuses on the component affected by the change, and starts to generate the state space by involving those actors of the component having a message at time \textit{t}. Let the branching factor for a state be the number of outgoing transitions of the state or the number of actors that can be triggered at the state. At the beginning, Magnifier has a lower branching factor on all states of the state space compared to the non-compositional approach. At some point in future, e.g. at time \textit{t'}, when all components are affected by the change, both approaches involve the same number of actors into the analysis, and both approaches generates the same number of states and transitions. However, between \textit{t} and \textit{t'}, the graph of the state space in Magnifier is smaller than the graph of the state space in the  non-compositional approach. Therefore, even in the worst case, Magnifier performs better than the non-compositional approach in terms of the verification time and the memory consumption.


Our Ptolemy II implementation of Magnifier is specialised for ATC. 
In \cite{10.1007/978-3-030-02146-7_1}, Lee and Sirjani show how CoodAA can capture TTCS applications in general. Here we consider a constant number of four for the ports of all actors, and the topology formed by connecting the ports through channels is a mesh. The extension to a dynamic number of ports and further than to dynamic bindings, seem like natural future work.
The general idea of Magnifier is not limited to TTCSs. It can be generalised for any control system with a modular design. We need to extend our model to include more general actors with different behaviors, and also different number of ports, and  different bindings among the ports (channels that form the topology).
To investigate the details of such extension is another future direction.
The possibility of analyzing actors in a compositional way is a consequence of their isolation discussed in  \cite{8958682} by Sirjani, Khamespanah and Ghassemi. 
Hence, we believe that CoodAA and Magnifier can be further extended and used in different areas and applications based on the foundations provided in this paper.

 

\ifCLASSOPTIONcompsoc
  \section*{Acknowledgments}
\else
  \section*{Acknowledgment}
\fi

The work on this paper has been supported in part by the project "Self-Adaptive Actors: SEADA" (nr. 163205-051) of the Icelandic Research Fund. 
The work of the second author is supported in part by KKS SACSys Synergy project (Safe and Secure Adaptive Collaborative Systems),  KKS DPAC Project (Dependable Platforms for Autonomous Systems and Control), and SSF Serendipity project at Malardalen University, and MACMa Project (Modeling and Analyzing Event-based Autonomous Systems) at Software Center, Sweden.
The authors would like to thank Professor Edward A. Lee for his valuable suggestions.

\ifCLASSOPTIONcaptionsoff
  \newpage
\fi

\bibliographystyle{IEEEtran}
\bibliography{IEEEabrv,mybibfile}

\begin{thebibliography}{10}
\providecommand{\url}[1]{#1}
\csname url@samestyle\endcsname
\providecommand{\newblock}{\relax}
\providecommand{\bibinfo}[2]{#2}
\providecommand{\BIBentrySTDinterwordspacing}{\spaceskip=0pt\relax}
\providecommand{\BIBentryALTinterwordstretchfactor}{4}
\providecommand{\BIBentryALTinterwordspacing}{\spaceskip=\fontdimen2\font plus
\BIBentryALTinterwordstretchfactor\fontdimen3\font minus
  \fontdimen4\font\relax}
\providecommand{\BIBforeignlanguage}[2]{{%
\expandafter\ifx\csname l@#1\endcsname\relax
\typeout{** WARNING: IEEEtran.bst: No hyphenation pattern has been}%
\typeout{** loaded for the language `#1'. Using the pattern for}%
\typeout{** the default language instead.}%
\else
\language=\csname l@#1\endcsname
\fi
#2}}
\providecommand{\BIBdecl}{\relax}
\BIBdecl

\bibitem{cooda}
M.~Bagheri, M.~Sirjani, E.~Khamespanah, N.~Khakpour, I.~Akkaya, A.~Movaghar,
  and E.~A. Lee, ``Coordinated actor model of self-adaptive track-based traffic
  control systems,'' \emph{Journal of Systems and Software}, vol. 143, pp. 116
  -- 139, 2018.

\bibitem{Cheng2014}
B.~H.~C. Cheng, K.~I. Eder, M.~Gogolla, L.~Grunske, M.~Litoiu, H.~A.
  M{\"u}ller, P.~Pelliccione, A.~Perini, N.~A. Qureshi, B.~Rumpe, D.~Schneider,
  F.~Trollmann, and N.~M. Villegas, \emph{Using Models at Runtime to Address
  Assurance for Self-Adaptive Systems}.\hskip 1em plus 0.5em minus 0.4em\relax
  Cham: Springer International Publishing, 2014, pp. 101--136.

\bibitem{16_CoordinatedActorsForReliableSelfAdaptiveSystems}
M.~Bagheri, I.~Akkaya, E.~Khamespanah, N.~Khakpour, M.~Sirjani, A.~Movaghar,
  and E.~A. Lee, ``Coordinated actors for reliable self-adaptive systems,'' in
  \emph{Formal Aspects of Component Software: FACS 2016}, O.~Kouchnarenko and
  R.~Khosravi, Eds., 2017, pp. 241--259.

\bibitem{hew72}
C.~Hewitt, ``Description and theoretical analysis (using schemata) of planner:
  A language for proving theorems and manipulating models in a robot,''
  MASSACHUSETTS INST OF TECH CAMBRIDGE ARTIFICIAL INTELLIGENCE LAB, Tech. Rep.,
  1972.

\bibitem{Agha:1986:AMC:7929}
G.~Agha, \emph{Actors: A Model of Concurrent Computation in Distributed
  Systems}.\hskip 1em plus 0.5em minus 0.4em\relax Cambridge, MA, USA: MIT
  Press, 1986.

\bibitem{Kephart:2003:VAC:642194.642200}
J.~O. Kephart and D.~M. Chess, ``The vision of autonomic computing,''
  \emph{Computer}, vol.~36, no.~1, pp. 41--50, Jan 2003.

\bibitem{kaynar2003timed}
D.~K. Kaynar, N.~Lynch, R.~Segala, and F.~Vaandrager, ``Timed i/o automata: A
  mathematical framework for modeling and analyzing real-time systems,'' in
  \emph{RTSS 2003. 24th IEEE Real-Time Systems Symposium, 2003}.\hskip 1em plus
  0.5em minus 0.4em\relax IEEE, 2003, pp. 166--177.

\bibitem{clarke1989compositional}
E.~M. Clarke, D.~E. Long, and K.~L. McMillan, ``Compositional model checking,''
  in \emph{Logic in Computer Science. LICS'89, Proceedings., Fourth Annual
  Symposium on}.\hskip 1em plus 0.5em minus 0.4em\relax IEEE, 1989, pp.
  353--362.

\bibitem{ptolemaeus2014system}
C.~Ptolemaeus, \emph{System Design, Modeling, and Simulation: Using Ptolemy
  II}.\hskip 1em plus 0.5em minus 0.4em\relax Ptolemy. org Berkeley, CA, USA,
  2014.

\bibitem{KRUPITZER2015184}
C.~Krupitzer, F.~M. Roth, S.~VanSyckel, G.~Schiele, and C.~Becker, ``A survey
  on engineering approaches for self-adaptive systems,'' \emph{Pervasive and
  Mobile Computing}, vol.~17, no. Part B, pp. 184 -- 206, 2015, 10 years of
  Pervasive Computing' In Honor of Chatschik Bisdikian.

\bibitem{deLemos2013}
R.~de~Lemos, H.~Giese, H.~M\"{u}ller, M.~Shaw, J.~Andersson \emph{et~al.},
  ``\BIBforeignlanguage{English}{Software engineering for self-adaptive
  systems: A second research roadmap},'' in
  \emph{\BIBforeignlanguage{English}{Software Engineering for Self-Adaptive
  Systems II}}, ser. LNCS, R.~de~Lemos, H.~Giese, H.~M\"{u}ller, and M.~Shaw,
  Eds.\hskip 1em plus 0.5em minus 0.4em\relax Springer Berlin Heidelberg, 2013,
  vol. 7475, pp. 1--32.

\bibitem{bagheri2016runtime}
M.~Bagheri, E.~Khamespanah, M.~Sirjani, A.~Movaghar, and E.~A. Lee, ``Runtime
  compositional analysis of track-based traffic control systems,'' \emph{SIGBED
  Rev.}, vol.~14, no.~3, pp. 38--39, Nov. 2017.

\bibitem{TOS}
\emph{North atlantic operations and airspace manual}, International Civil
  Aviation Organization (ICAO), 2016.

\bibitem{zbrzezny2007sat}
A.~Zbrzezny and A.~P{\'o}{\l}rola, ``Sat-based reachability checking for timed
  automata with discrete data,'' \emph{Fundamenta Informaticae}, vol.~79, no.
  3-4, pp. 579--593, 2007.

\bibitem{10.1007/3-540-48778-6_18}
S.~Tripakis, ``Verifying progress in timed systems,'' in \emph{Formal Methods
  for Real-Time and Probabilistic Systems}, J.-P. Katoen, Ed.\hskip 1em plus
  0.5em minus 0.4em\relax Springer Berlin Heidelberg, 1999, pp. 299--314.

\bibitem{behrmann2006tutorial}
G.~Behrmann, A.~David, and K.~G. Larsen, ``A tutorial on uppaal 4.0,''
  \emph{Department of computer science, Aalborg university}, 2006.

\bibitem{KHAKPOUR20123}
N.~Khakpour, S.~Jalili, C.~Talcott, M.~Sirjani, and M.~Mousavi, ``Formal
  modeling of evolving self-adaptive systems,'' \emph{Science of Computer
  Programming}, vol.~78, no.~1, pp. 3 -- 26, 2012, special Section: Formal
  Aspects of Component Software (FACS’09).

\bibitem{KHAKPOUR20122770}
N.~Khakpour, S.~Jalili, M.~Sirjani, U.~Goltz, and B.~Abolhasanzadeh, ``Hpobsam
  for modeling and analyzing it ecosystems – through a case study,''
  \emph{Journal of Systems and Software}, vol.~85, no.~12, pp. 2770 -- 2784,
  2012, self-Adaptive Systems.

\bibitem{DBLP:journals/fuin/SirjaniMSB04}
M.~Sirjani, A.~Movaghar, A.~Shali, and F.~S. de~Boer, ``Modeling and
  verification of reactive systems using rebeca,'' \emph{Fundam. Inform.},
  vol.~63, no.~4, pp. 385--410, 2004.

\bibitem{DBLP:conf/facs2/KhamespanahSVK15}
E.~Khamespanah, M.~Sirjani, M.~Viswanathan, and R.~Khosravi, ``Floating time
  transition system: More efficient analysis of timed actors,'' in \emph{FACS},
  2015, pp. 237--255.

\bibitem{DBLP:journals/scp/KhamespanahSSKI15}
E.~Khamespanah, M.~Sirjani, Z.~Sabahi{-}Kaviani, R.~Khosravi, and M.~Izadi,
  ``{Timed Rebeca } schedulability and deadlock freedom analysis using bounded
  floating time transition system,'' \emph{Sci. Comput. Program.}, vol.~98, pp.
  184--204, 2015.

\bibitem{10.1007/3-540-45828-X_9}
L.~de~Alfaro, T.~A. Henzinger, and M.~Stoelinga, ``Timed interfaces,'' in
  \emph{Embedded Software}, A.~Sangiovanni-Vincentelli and J.~Sifakis,
  Eds.\hskip 1em plus 0.5em minus 0.4em\relax Berlin, Heidelberg: Springer
  Berlin Heidelberg, 2002, pp. 108--122.

\bibitem{chawade2012review}
S.~D. Chawade, M.~A. Gaikwad, and R.~M. Patrikar, ``Review of xy routing
  algorithm for network-on-chip architecture,'' \emph{International Journal of
  Computer Applications}, vol.~43, no.~21, pp. 975--8887, 2012.

\bibitem{david2010timed}
A.~David, K.~G. Larsen, A.~Legay, U.~Nyman, and A.~Wasowski, ``Timed i/o
  automata: a complete specification theory for real-time systems,'' in
  \emph{Proceedings of the 13th ACM international conference on Hybrid systems:
  computation and control}.\hskip 1em plus 0.5em minus 0.4em\relax ACM, 2010,
  pp. 91--100.

\bibitem{1253264}
D.~K. {Kaynar}, N.~{Lynch}, R.~{Segala}, and F.~{Vaandrager}, ``Timed i/o
  automata: a mathematical framework for modeling and analyzing real-time
  systems,'' in \emph{RTSS 2003. 24th IEEE Real-Time Systems Symposium, 2003},
  Dec 2003, pp. 166--177.

\bibitem{DBLP:phd/dnb/Bujtor18}
F.~Bujtor, ``Modal interface automata: {A} theory for heterogeneous
  specification of parallel systems,'' Ph.D. dissertation, University of
  Augsburg, Germany, 2018.

\bibitem{DBLP:journals/corr/abs-1101-4731}
S.~S. Bauer, R.~Hennicker, and S.~Janisch, ``Interface theories for
  (a)synchronously communicating modal i/o-transition systems,'' in
  \emph{Proceedings Foundations for Interface Technologies, {FIT} 2010, Paris,
  France, 30th August 2010.}, 2010, pp. 1--8.

\bibitem{Bu:2011:TOH:2000367.2000368}
L.~Bu, Q.~Wang, X.~Chen, L.~Wang, T.~Zhang, J.~Zhao, and X.~Li, ``Toward online
  hybrid systems model checking of cyber-physical systems' time-bounded
  short-run behavior,'' \emph{SIGBED Rev.}, vol.~8, no.~2, pp. 7--10, Jun.
  2011.

\bibitem{4177563}
H.~A.~P. {Blom}, J.~{Krystul}, and G.~J. {Bakker}, ``A particle system for
  safety verification of free flight in air traffic,'' in \emph{Proceedings of
  the 45th IEEE Conference on Decision and Control}, 2006, pp. 1574--1579.

\bibitem{CHENG2016169}
R.~Cheng, J.~Zhou, D.~Chen, and Y.~Song, ``Model-based verification method for
  solving the parameter uncertainty in the train control system,''
  \emph{Reliability Engineering and System Safety}, vol. 145, pp. 169 -- 182,
  2016.

\bibitem{Loos:2013:FVD:2461328.2461350}
S.~M. Loos, D.~Renshaw, and A.~Platzer, ``Formal verification of distributed
  aircraft controllers,'' in \emph{Proceedings of the 16th International
  Conference on Hybrid Systems: Computation and Control}, ser. HSCC '13.\hskip
  1em plus 0.5em minus 0.4em\relax ACM, 2013, pp. 125--130.

\bibitem{6339013}
K.~{Margellos} and J.~{Lygeros}, ``Toward 4-d trajectory management in air
  traffic control: A study based on monte carlo simulation and reachability
  analysis,'' \emph{IEEE Transactions on Control Systems Technology}, vol.~21,
  no.~5, pp. 1820--1833, 2013.

\bibitem{Damm2007}
W.~Damm, A.~Mikschl, J.~Oehlerking, E.-R. Olderog, J.~Pang, A.~Platzer,
  M.~Segelken, and B.~Wirtz, \emph{Automating Verification of Cooperation,
  Control, and Design in Traffic Applications}.\hskip 1em plus 0.5em minus
  0.4em\relax Springer Berlin Heidelberg, 2007, pp. 115--169.

\bibitem{ZHAO2014337}
Y.~Zhao and K.~Y. Rozier, ``Formal specification and verification of a
  coordination protocol for an automated air traffic control system,''
  \emph{Science of Computer Programming}, vol.~96, pp. 337 -- 353, 2014,
  special Issue on Automated Verification of Critical Systems (AVoCS 2012).

\bibitem{CHEN2011505}
``Performance analysis and verification of safety communication protocol in
  train control system,'' \emph{Computer Standards and Interfaces}, vol.~33,
  no.~5, pp. 505 -- 518, 2011.

\bibitem{menon2004new}
P.~K. Menon, G.~D. Sweriduk, and K.~D. Bilimoria, ``New approach for modeling,
  analysis, and control of air traffic flow,'' \emph{Journal of guidance,
  control, and dynamics}, vol.~27, no.~5, pp. 737--744, 2004.

\bibitem{10.1007/978-3-030-02146-7_1}
E.~A. Lee and M.~Sirjani, ``What good are models?'' in \emph{Formal Aspects of
  Component Software}, K.~Bae and P.~C. {\"O}lveczky, Eds.\hskip 1em plus 0.5em
  minus 0.4em\relax Cham: Springer International Publishing, 2018, pp. 3--31.

\bibitem{Behrmann:2005:OSU:1059816.1059823}
G.~Behrmann, K.~G. Larsen, and J.~I. Rasmussen, ``Optimal scheduling using
  priced timed automata,'' \emph{SIGMETRICS Perform. Eval. Rev.}, vol.~32,
  no.~4, pp. 34--40, Mar. 2005.

\bibitem{Rasmussen2006}
J.~I. Rasmussen, K.~G. Larsen, and K.~Subramani, ``On using priced timed
  automata to achieve optimal scheduling,'' \emph{Formal Methods in System
  Design}, vol.~29, no.~1, pp. 97--114, Jul 2006.

\bibitem{7945013}
A.~{Desai}, I.~{Saha}, J.~{Yang}, S.~{Qadeer}, and S.~A. {Seshia}, ``Drona: A
  framework for safe distributed mobile robotics,'' in \emph{2017 ACM/IEEE 8th
  International Conference on Cyber-Physical Systems (ICCPS)}, April 2017, pp.
  239--248.

\bibitem{1302413}
M.~M. {Quottrup}, T.~{Bak}, and R.~I. {Zamanabadi}, ``Multi-robot planning : a
  timed automata approach,'' in \emph{IEEE International Conference on Robotics
  and Automation, 2004. Proceedings. ICRA '04. 2004}, vol.~5, April 2004, pp.
  4417--4422 Vol.5.

\bibitem{7810369}
Y.~{Zhou}, D.~{Maity}, and J.~S. {Baras}, ``Timed automata approach for motion
  planning using metric interval temporal logic,'' in \emph{2016 European
  Control Conference (ECC)}, June 2016, pp. 690--695.

\bibitem{5069079}
P.~Inverardi, P.~Pelliccione, and M.~Tivoli, ``Towards an assume-guarantee
  theory for adaptable systems,'' in \emph{ICSE Workshop on Software
  Engineering for Adaptive and Self-Managing Systems}, 2009, pp. 106--115.

\bibitem{David2012}
A.~David, K.~G. Larsen, A.~Legay, M.~H. M{\o}ller, U.~Nyman, A.~P. Ravn,
  A.~Skou, and A.~Wksowski, ``Compositional verification of real-time systems
  using ecdar,'' \emph{International Journal on Software Tools for Technology
  Transfer}, vol.~14, no.~6, pp. 703--720, Nov 2012.

\bibitem{deAlfaro2007}
L.~de~Alfaro and P.~Roy, ``Magnifying-lens abstraction for markov decision
  processes,'' in \emph{Computer Aided Verification}, W.~Damm and H.~Hermanns,
  Eds., 2007, pp. 325--338.

\bibitem{Giese2013}
H.~Giese and W.~Sch{\"a}fer, \emph{Model-Driven Development of Safe
  Self-optimizing Mechatronic Systems with MechatronicUML}.\hskip 1em plus
  0.5em minus 0.4em\relax Berlin, Heidelberg: Springer Berlin Heidelberg, 2013,
  pp. 152--186.

\bibitem{8595374}
A.~{Borda}, L.~{Pasquale}, V.~{Koutavas}, and B.~{Nuseibeh}, ``Compositional
  verification of self-adaptive cyber-physical systems,'' in \emph{2018
  IEEE/ACM 13th International Symposium on Software Engineering for Adaptive
  and Self-Managing Systems (SEAMS)}, May 2018, pp. 1--11.

\bibitem{IROPS}
V.~Forejt, M.~Kwiatkowska, D.~Parker, H.~Qu, and M.~Ujma,
  ``\BIBforeignlanguage{English}{Incremental runtime verification of
  probabilistic systems},'' in \emph{\BIBforeignlanguage{English}{Runtime
  Verification}}, ser. LNCS, S.~Qadeer and S.~Tasiran, Eds., 2013, vol. 7687,
  pp. 314--319.

\bibitem{PVARFSAS}
A.~Filieri and G.~Tamburrelli, ``\BIBforeignlanguage{English}{Probabilistic
  verification at runtime for self-adaptive systems},'' in
  \emph{\BIBforeignlanguage{English}{Assurances for Self-Adaptive Systems}},
  ser. LNCS, J.~C{\'a}mara, R.~de~Lemos, C.~Ghezzi, and A.~Lopes, Eds., 2013,
  vol. 7740, pp. 30--59.

\bibitem{DWNRFSAS}
C.~Ghezzi and A.~Molzam~Sharifloo, ``\BIBforeignlanguage{English}{Dealing with
  non-functional requirements for adaptive systems via dynamic software
  product-lines},'' in \emph{\BIBforeignlanguage{English}{Software Engineering
  for Self-Adaptive Systems II}}, ser. LNCS, R.~de~Lemos, H.~Giese,
  H.~M\"{u}ller, and M.~Shaw, Eds., 2013, vol. 7475, pp. 191--213.

\bibitem{LEE2018200}
E.~Lee, Y.-G. Kim, Y.-D. Seo, K.~Seol, and D.-K. Baik, ``Ringa: Design and
  verification of finite state machine for self-adaptive software at runtime,''
  \emph{Information and Software Technology}, vol.~93, no. Supplement C, pp.
  200 -- 222, 2018.

\bibitem{7968129}
D.~M. Barbosa, R.~G. D.~M. Lima, P.~H.~M. Maia, and E.~Costa, ``Lotus@runtime:
  A tool for runtime monitoring and verification of self-adaptive systems,'' in
  \emph{2017 IEEE/ACM 12th International Symposium on Software Engineering for
  Adaptive and Self-Managing Systems (SEAMS)}, May 2017, pp. 24--30.

\bibitem{Priesterjahn2013}
C.~Priesterjahn, D.~Steenken, and M.~Tichy, ``Timed hazard analysis of
  self-healing systems,'' in \emph{Assurances for Self-Adaptive Systems:
  Principles, Models, and Techniques}, J.~C{\'a}mara, R.~de~Lemos, C.~Ghezzi,
  and A.~Lopes, Eds., 2013, pp. 112--151.

\bibitem{DBLP:conf/sigsoft/MorenoCGS15}
G.~A. Moreno, J.~C{\'{a}}mara, D.~Garlan, and B.~R. Schmerl, ``Proactive
  self-adaptation under uncertainty: a probabilistic model checking approach,''
  in \emph{Proceedings of the 2015 10th Joint Meeting on Foundations of
  Software Engineering, {ESEC/FSE}}, 2015, pp. 1--12.

\bibitem{6224401}
C.~{Ghezzi}, J.~{Greenyer}, and V.~P.~L. {Manna}, ``Synthesizing dynamically
  updating controllers from changes in scenario-based specifications,'' in
  \emph{2012 7th International Symposium on Software Engineering for Adaptive
  and Self-Managing Systems (SEAMS)}, June 2012, pp. 145--154.

\bibitem{LaManna:2015:TED:2820489.2820499}
V.~P. La~Manna, J.~Greenyer, D.~Clun, and C.~Ghezzi, ``Towards executing
  dynamically updating finite-state controllers on a robot system,'' in
  \emph{Proceedings of the Seventh International Workshop on Modeling in
  Software Engineering}, ser. MiSE '15.\hskip 1em plus 0.5em minus 0.4em\relax
  Piscataway, NJ, USA: IEEE Press, 2015, pp. 42--47.

\bibitem{8958682}
M.~{Sirjani}, E.~{Khamespanah}, and F.~{Ghassemi}, ``Reactive actors: Isolation
  for efficient analysis of distributed systems,'' in \emph{2019 IEEE/ACM 23rd
  International Symposium on Distributed Simulation and Real Time Applications
  (DS-RT)}, 2019, pp. 1--10.

\end{thebibliography}

\appendix
	
	In this section, the pseudo-codes for ALG1 and ALG2 are given. We only present the details that are necessary for understanding the idea of ALG1 and ALG2. 
	
	\textit{\textbf{Pseudo-code for ALG1}}.	Algorithm~\ref{algo::ALG1} generates the initial flight plans of $m$ aircraft. It calls Algorithm~\ref{algo::generateRoutes} to generate a route. The inputs, outputs, variables, and the functions used in Algorithm~\ref{algo::ALG1} are described as follows. 
	\begin{itemize}
	    \item $m$ is the number of moving objects.
	    \item $\lambda$ is the parameter of the exponential distribution.
	    \item $\mathit{FD}$  is the traveling time across a sub-track.
	    \item $\mathit{flightPlans}$  is the list of initial flight plans of $m$ aircraft.
	    \item \textit{genSource()} randomly generates the source of an aircraft.
	    \item \textit{genDes()} randomly generates the destination of an aircraft.
	    \item $\mathit{genDepartureTime(x_s,y_s,\lambda)}$ randomly generates the departure time of an aircraft from the  source airport connected to the given sub-track, where the departure time follows an exponential distribution with the parameter $\lambda$.
	    \item $\mathit{add(flightPlans, j,x_s,y_s,x_d,y_d,dTime,route)}$ adds the flight plan of a given aircraft with its source, destination, route, and  departure time to the list of initial flight plans and returns the resulting list.
	    \item $\mathit{generateRoute(x_s,y_s,dTime,x_d,y_d,[~])}$ generates the initial route of an aircraft. The pseudo-code of this  function is given in Algorithm~\ref{algo::generateRoutes}.
	\end{itemize}
	
	\begin{algorithm}
	\DontPrintSemicolon 
	\SetKwFunction{algo}{All}
	\KwIn{$m$, $\lambda$, and $\mathit{FD}$ }
	\KwOut{$\mathit{flightPlans}$}
	\Begin{
		$\mathit{flightPlans} \gets [~]$\;
		$j \gets 0$\;
	    \While{$j < m$}{
    	    $(x_s,y_s) \gets \mathit{genSource()}$\;
    	    $(x_d,y_d) \gets \mathit{genDes()}$\;
    	    $\mathit{dTime} \gets \mathit{genDepartureTime(x_s,y_s,\lambda)}$\;
    	    $\mathit{route} \gets \mathit{generateRoute(x_s,y_s,dTime,x_d,y_d,[~])}$\;
    	    \If{$\mathit{route} \neq [~]$}{
    	        $j \gets j+1$\;
    	        $\mathit{flightPlans} \gets \mathit{add(flightPlans, j,x_s,y_s,x_d,y_d,dTime,route)}$\;
    	    }
	    }
		
		\nl \KwRet $\mathit{flightPlans}$\;
	}
	\caption{ALG1}
	\label{algo::ALG1}
\end{algorithm}

The inputs, outputs, variables, and the functions used in Algorithm~\ref{algo::generateRoutes} are described as follows.
\begin{itemize}
    \item $(x_s,y_s)$  is the first sub-track (connected to the source airport) in the route of the aircraft.
    \item \textit{arrivalTime} is the arrival time of the aircraft at the first sub-track, which is equal to the departure time of the aircraft from the source airport.
    \item $(x_d,y_d)$  is the last sub-track (connected to the destination airport) in the route of the aircraft.
    \item \textit{flightPlans} is the list of flight plans of  aircraft (does not contain the flight plans of all $m$ aircraft when it is given as the input to the algorithm).
    \item \textit{route} is a route that is an empty list at first. To create the route, Algorithm~\ref{algo::generateRoutes} is recursively executed and sub-tracks are gradually added to the list.
    \item $\mathit{hasTimeConflict(x_s,y_s,arrivalTime,flightPlans)}$ returns true if  there exists an aircraft that either arrives at the sub-track $(x_s,y_s)$ at time $\mathit{arrivalTime}$ or departs from the sub-track at time $\mathit{arrivalTime+FD}$.
    \item $\mathit{add(route,(x_s,y_s))}$ adds the sub-track $(x_s,y_s)$ to the end of  $\mathit{route}$ and returns $\mathit{route}$.
    \item  $\mathit{increaseX}$ has a value of $\{0,1\}$. Let $(x,y)$ denotes the current position on a route. If  $\mathit{increaseX}$ is 1 then $x$ should be changed to $x+1$ (traversing the X dimension). Otherwise, $x$ does not change.
    \item $\mathit{doIncreaseX(x_s,x_d)}$ returns $1$ if $x_s < x_d$. Otherwise, it returns 0.
    \item  $\mathit{increaseY}$ has a value of $\{0,1,-1\}$. Let $(x,y)$ denotes the current position on a route. If  $\mathit{increaseY}$ is 1 then $y$ should be changed to $y+1$ and if $\mathit{increaseY}$ is -1 then $y$ should be changed to $y-1$ (both for traversing the Y dimension) . Otherwise, $y$ does not change.
    \item $\mathit{doIncreaseY(y_s,y_d)}$ returns 1 if $y_s < y_d$ and -1 if $y_s > y_d$. Otherwise, it returns 0.
    \item $\mathit{removeLastElement(route)}$ removes the last element from $\mathit{route}$ and returns the rest of it.
\end{itemize}
	\begin{algorithm*}
	\DontPrintSemicolon 
	\SetKwFunction{algo}{All}
	\KwIn{$(x_s,y_s)$, $\mathit{arrivalTime}$, $(x_d,y_d)$, $\mathit{flightPlans}$, $\mathit{route}$}
	\KwOut{$\mathit{route}$}
	\Begin{
	    \If{$\mathit{hasTimeConflict(x_s,y_s,arrivalTime,flightPlans)}$}{
	        \nl \KwRet $[~]$\;
	    }
	    $\mathit{route} \gets \mathit{add(route,(x_s,y_s))}$\;
		\If{$x_s=x_d$ and  $y_s=y_d$ }{
    	    \nl \KwRet $\mathit{route}$\;
    	  }
    	  
    	  $\mathit{increaseX} \gets \mathit{doIncreaseX(x_s,x_d)}$, $\mathit{increaseY} \gets doIncreaseY(y_s,y_d)$\;
    	   
    	   \eIf{$\mathit{increaseX}=1$}{
    	        \If{$\mathit{generateRoute(x_s+1, y_s,arrivalTime+FD, x_d, y_d,route)=[~]}$}{
    	            \If{$\mathit{increaseY}=1$}{
    	                \If{$\mathit{generateRoute(x_s, y_s+1,arrivalTime+FD, x_d, y_d,route)=[~]}$}{
    	                    $route \gets \mathit{removeLastElement(route)}$\;
    	                     \nl \KwRet $[~]$\;
    	            }
    	            }
    	            \If{$\mathit{increaseY}=-1$}{
    	                \If{$\mathit{generateRoute(x_s, y_s-1,arrivalTime+FD, x_d, y_d,route)=[~]}$}{
    	                    $route \gets \mathit{removeLastElement(route)}$\;
    	                     \nl \KwRet $[~]$\;
    	            }
    	        }
    	        \If{$\mathit{increaseY}=0$}{
    	                    $route \gets \mathit{removeLastElement(route)}$\;
    	                     \nl \KwRet $[~]$\;
    	        }
    	        }
    	        
    	   }
    	   {
    	    \eIf{$\mathit{increaseY}=1$}
    	        {
    	            \If{$\mathit{generateRoute(x_s, y_s+1,arrivalTime+FD, x_d, y_d,route)=[~]}$}{
    	                    $route \gets \mathit{removeLastElement(route)}$\;
    	                     \nl \KwRet $[~]$\;
    	            }
    	        }
    	        {
    	        \eIf{$\mathit{increaseY}=-1$}{
    	                \If{$\mathit{generateRoute(x_s, y_s-1,arrivalTime+FD, x_d, y_d,route)=[~]}$}{
    	                    $route \gets \mathit{removeLastElement(route)}$\;
    	                     \nl \KwRet $[~]$\;
    	            }
    	        }{
    	        \If{$\mathit{increaseY}=0$}{
    	                    $route \gets \mathit{removeLastElement(route)}$\;
    	                     \nl \KwRet $[~]$\;
    	            
    	        }
    	        }
    	        }
    	   }
    	    \nl \KwRet $\mathit{route}$\;
	}
	\caption{$\mathit{generateRoute}$}
	\label{algo::generateRoutes}
\end{algorithm*}
	
\textit{\textbf{Pseudo-code for ALG2}}. Assume that the aircraft is going to leave the location $\mathit{(x_0,y_0)}$ and the rest of its route is $\mathit{[(x_1,y_1),(x_2,y_2),\cdots,(x_n,y_n)]}$. Also, assume that the sub-track  with the location $(x_1,y_1)$ is unavailable, and the aircraft is not able to travel through it. Algorithm~\ref{algo::rerouting} generates a new route (a new flight plan) for the aircraft. The initial route in this algorithm is $\mathit{[(x_1,y_1),(x_2,y_2),\cdots,(x_n,y_n)]}$. The inputs, outputs, variables, and the functions used in Algorithm~\ref{algo::rerouting} are described as follows.
\begin{itemize}
    \item $\mathit{initialFlightPlan}$ is the initial flight plan of the aircraft.
    \item $\mathit{initialRoute}$ is the initial route of the aircraft.
    \item $(x_0,y_0)$ is the sub-track accommodating the aircraft. The aircraft is going to leave this sub-track.
    \item $(x_1,y_1)$ is the next sub-track in the route of the aircraft. This sub-track is unavailable.
    \item $\mathit{flightPlan}$ is a new flight plan for the aircraft.
    \item $\mathit{findAvaNeighbors(x_0,y_0,x_1,y_1)}$ returns available sub-tracks adjacent to $(x_0,y_0)$. 
    \item $\mathit{findNStormyNeighbors(x_0,y_0,x_1,y_1)}$ returns  sub-tracks adjacent to $(x_0,y_0)$ except for $(x_1,y_1)$, where these sub-tracks are not stormy.
    \item $\mathit{routeSelected}$ returns true if a route has been found. 
    \item $\mathit{needDelay}$ returns true if the aircraft needs to stay one more unit of time in  $(x_0,y_0)$.
    \item $\mathit{take()}$ takes an element from a given list, removes the element from the list, and returns the element.
    \item $\mathit{length}$ returns the length of a given list.
    \item $\mathit{concat(route,initialRoute,i)}$ concatenates \textit{route} with the rest of the route of
the aircraft from $(x_i+1 , y_i+1 )$ to $(x_n, y_n)$ in \textit{initialRoute}  and returns the resulting
route.
    \item $\mathit{generateRoute}$ uses an algorithm  simpler than Algorithm~\ref{algo::generateRoutes} to create a route based on the XY routing algorithm. It avoids the stormy track, but it does not check the time conflict with other aircraft in the future.
    \item $createFlightPlan$ creates a new flight plan based on the initial flight plan of the aircraft and the calculated route. The last input of this function is true if the aircraft needs to stay one more unit of time in  $(x_0,y_0)$.

\end{itemize}

	\begin{algorithm*}
	\DontPrintSemicolon 
	\SetKwFunction{algo}{All}
	\KwIn{$\mathit{initialFlightPlan}$, $\mathit{initialRoute}$, $(x_0,y_0)$, and $(x_1,y_1)$}
	\KwOut{$\mathit{flightPlan}$}
	\Begin{
	    $\mathit{avaNeighbors \gets findAvaNeighbors(x_0,y_0,x_1,y_1)}$\;
	    $\mathit{neighbors \gets findNStormyNeighbors(x_0,y_0,x_1,y_1)}$\;
	    $\mathit{routeSelected \gets False}$\;
	    $\mathit{needDelay \gets False}$\;
	    $i \gets 1$\;
	    $\mathit{route} \gets [~]$\;
        \While{$\mathit{avaNeighbors} \neq \emptyset$}{
	        $(x_s,y_s) \gets \mathit{take(avaNeighbors)}$\;
	        \While{$\mathit{i<length(initialRoute)}$}{
	            $(x_d,y_d) \gets \mathit{initialRoute[i]}$\;
	            $\mathit{route \gets generateRoute(x_s,y_s,x_d,y_d)}$\;
	            \If{$\mathit{route} \neq [~]$ and $\mathit{length(route)=(i+1)}$}{
	                $\mathit{route} \gets \mathit{concat(route,initialRoute,i)}$\;
	                $\mathit{routeSelected \gets True}$\;
	                $\mathit{break}$\;
	            }
	        }
	        \If{$\mathit{routeSelected=True}$}{
	            $\mathit{break}$\;
	        }
		}

        \If{$\mathit{routeSelected=False}$}{
		    \While{$\mathit{neighbors} \neq \emptyset$}{
		    	$(x_s,y_s) \gets \mathit{take(neighbors)}$\;
		    	$(x_d,y_d) \gets \mathit{initialRoute[length(initialRoute)-1]}$\;
	            $\mathit{route \gets generateRoute(x_s,y_s,x_d,y_d)}$\;
			    \If{$\mathit{route} \neq [~]$}{
			        $\mathit{routeSelected \gets True}$\;
			         $\mathit{needDelay \gets True}$\;
	                $\mathit{break}$\;
			    }
		}
		
    }
    
    \eIf{$\mathit{routeSelected=True}$}{
	            $\mathit{flightPlan} \gets \mathit{createFlightPlan(\mathit{initialFlightPlan},route,needDelay)}$\;
	   }
	   {
	            $\mathit{flightPlan} \gets \mathit{createFlightPlan(\mathit{initialFlightPlan},initialRoute,True)}$\;
	   }
		
		\nl \KwRet $\mathit{flightPlan}$\;
	}
	\caption{ALG2}
	\label{algo::rerouting}
\end{algorithm*}	

\end{document}